\newcolumntype{X}[1]{>{\centering\let\newline\\\arraybackslash}p{#1}}
\def\A#1{\save []="#1" \restore}
\def\qww{\qw & \qw}
\def\mX{\measure{\makebox[.9em][c]{$X$}}}
\def\mZ{\measure{\makebox[.9em][c]{$Z$}}}
\newcommand{\meterb}[1]{*=<1.8em,2.2em>{\xy 0;<0em,-.8em>:
0*{\begingroup
\everymath{\scriptstyle}
\tiny #1 \endgroup},<0em,.3em>*{\xy ="j","j"-<.778em,-.322em>;{"j"+<.778em,.322em> \ellipse ur,_{}},"j"-<0em,-.2em>;p+<.5em,.9em> **\dir{-},"j"+<2.2em,2.2em>*{},"j"-<2.2em,2.2em>*{} \endxy} 
\endxy} \POS ="i","i"+UR;"i"+UL **\dir{-};"i"+DL **\dir{-};"i"+DR **\dir{-};"i"+UR **\dir{-},"i" \qw}
\newcommand{\pone}{\mathcal{Q}_1} 
\newcommand{\qp}{\mathcal{Q}}     
\newcommand{\pstab}[2]{{\tiny $(#1\!\times\!10^{#2})$}}
\newcommand{\ee}{\text{\Large\ensuremath{\mathfrak{e}}}}
\newcommand{\eee}{\text{\normalsize\ensuremath{\mathfrak{e}}}}
\renewcommand{\fnum@figure}{Fig. \thefigure}
\newcommand{\ket}[1]{{\ensuremath{\lvert#1\rangle}}}
\newcommand{\bra}[1]{{\ensuremath{\langle#1\lvert}}}
\newcommand{\oline}{\overline}
\newcommand{\cl}{\mathcal}
\newcommand{\ident}{\mathds{1}}
\newlength\myheight
\newlength\mydepth
\settototalheight\myheight{Xygp}
\DeclareMathOperator*{\argmax}{arg\,max}
\DeclareMathOperator*{\argmin}{arg\,min}
\DeclareMathOperator{\wt}{wt}
\DeclareMathOperator{\eqdef}{:=}
\DeclareMathOperator{\xor}{XOR}
\DeclareMathOperator{\cnot}{CNOT}
\newtheorem{procedure}{Procedure}
\newtheorem{lemma}{Lemma}
\newtheorem{theorem}{Theorem}
\titlespacing*{\paragraph}{\parindent}{1.8ex plus .2ex minus .2ex}{2ex plus .5ex minus .2ex}
\titlespacing{\section}{0pt}{2.5ex plus .2ex minus .2ex}{1.5ex plus .2ex minus .2ex}
\titlespacing{\subsection}{0pt}{2.5ex plus .2ex minus .2ex}{1.5ex plus .2ex minus .2ex}
\renewcommand{\thesection}{\arabic{section}}
\titleformat*{\section}{\fontsize{10.5}{13}\selectfont\bfseries\center}
\renewcommand\onecolumngrid{
\do@columngrid{one}{\@ne}%
\def\set@footnotewidth{\onecolumngrid}
\def\footnoterule{\kern-6pt\hrule width 1.5in\kern6pt}%
}
\renewcommand\twocolumngrid{
        \def\footnoterule{
        \dimen@\skip\footins\divide\dimen@\thr@@
        \kern-\dimen@\hrule width.5in\kern\dimen@}
        \do@columngrid{mlt}{\tw@}
}%
\begin{document}

\title{Fault-Tolerant Preparation of Quantum Polar Codes Encoding One Logical Qubit}

\author{Ashutosh Goswami$^1$, \qquad Mehdi Mhalla$^2$, \qquad Valentin Savin$^1$\\[2mm]
    {\small $^1$\,Univ. Grenoble Alpes, CEA-L\'eti, F-38054 Grenoble, France}\\
    {\small $^2$\,Univ. Grenoble Alpes, CNRS, Grenoble INP, LIG, F-38000 Grenoble, France}\\
    {\small ashutosh-kumar.goswami@cea.fr, mehdi.mhalla@univ-grenoble-alpes.fr, valentin.savin@cea.fr}
}
\date{\today}



\maketitle

\onecolumngrid

\vspace*{-5mm}\,\hfill\begin{minipage}{.85\textwidth}
\small 
This paper explores a new approach to fault-tolerant quantum computing (FTQC), relying on quantum polar codes. We consider quantum polar codes of Calderbank-Shor-Steane type, encoding one logical qubit, which we refer to as $\pone$ codes. First, we show that a subfamily of $\pone$ codes is equivalent to the well-known family of Shor codes. Moreover, we show that  $\pone$ codes significantly outperform Shor codes, of the same length and minimum distance. Second, we consider the fault-tolerant preparation of $\pone$ code states. We give a recursive procedure to prepare a $\pone$ code state, based on two-qubit Pauli measurements only. The procedure is not by itself fault-tolerant, however, the measurement operations therein provide redundant classical bits, which can be advantageously used for error detection. Fault-tolerance is then achieved by combining the proposed recursive procedure with an error detection method. Finally, we consider the fault-tolerant error correction of $\pone$ codes. We use Steane error correction, which incorporates the proposed fault-tolerant code state preparation procedure. We provide numerical estimates of the logical error rates for $\pone$ and Shor codes of length $16$ and $64$ qubits, assuming a circuit-level depolarizing noise  model. Remarkably, the $\pone$ code of length $64$ qubits achieves a logical error rate very close to $10^{-6}$ for the physical error rate $p = 10^{-3}$, therefore, demonstrating the potential of the proposed polar codes based approach to FTQC.
\end{minipage}\hfill\,

\vspace*{3mm}

\twocolumngrid 

Large scale quantum computers are expected to use quantum error correcting (QEC) codes to provide resilience against noise~\cite{preskill1998fault}. A  QEC code encodes one or more logical qubits into many noisy physical qubits, so that the logical qubits are more robust against noise than the physical qubits.  However, a  QEC code alone does not provide the ability to do fault-tolerant quantum computation (FTQC). To avoid the uncontrolled propagation of errors, it must be complemented with  several \emph{fault-tolerant} procedures~\cite{gottesman2010introduction}, aimed at $(i)$~preparing logical code states, $(ii)$~operating on logical states, and $(iii)$~performing error correction. 

In this paper, we explore a new approach to fault-tolerant quantum computation (FTQC), relying on quantum polar codes. Introduced first in 2009 for classical systems~\cite{arikan2009channel}, and then generalized to the quantum case~\cite{renes2011efficient, wilde2013polar, renes2014polar, dupuis2021polarization}, polar codes arguably represent one of the most important advances of the past decade in the coding theory. They achieve the coherent information (one-shot capacity) of any quantum channel, and come equipped with an efficient decoding algorithm, known as successive cancellation (SC), whose complexity scales log-linearly with the code length. 

It is worth noticing that a low-complexity decoding algorithm is key to performing fault-tolerant error correction. Indeed, the decoding must be faster than the syndrome extraction rate, since otherwise, the latency overhead becomes exponential in the number of non-Clifford gates, hindering any quantum advantage ~\cite{holmes2020nisq}.

Yet, despite their excellent error correction properties, polar codes have been hardly explored for quantum computing, except the work in \cite{krishna2018magic} on magic state distillation. Here, we focus on two closely related ingredients of FTQC, namely fault-tolerant code state preparation and fault-tolerant error correction.

The main contributions are as follows.

We consider quantum polar codes of Calderbank-Shor-Steane (CSS) type that encode one logical qubit, which we refer to as $\pone$ codes. We show that $\pone$ codes are a natural generalization of the well-known family of Shor codes, providing improved error correction performance.

We then consider the fault-tolerant preparation of $\pone$ code states under the effect of noise. We propose a procedure to prepare $\pone$ code states, by recursively performing  Pauli $Z\otimes Z$ or Pauli $X\otimes X$ measurements. This procedure is not by itself fault-tolerant, however, the measurement operations therein provide redundant classical bits, which can be advantageously used for error detection. Hence, to achieve fault-tolerance, the proposed procedure is complemented by an error detection method. 

Finally, we consider the fault-tolerant error correction of $\pone$ codes, using Steane error correction \cite{steane1997active, steane2002fast}. We provide numerical estimates of the logical error rate (LER), assuming a circuit-level depolarizing noise  model, for $\pone$ and Shor-$\pone$ codes of length $N=16$ and $N=64$. Remarkably, the $\pone$ code of length $64$ qubits achieves an LER very close to $10^{-6}$ for the physical error rate $p = 10^{-3}$, therefore, demonstrating the potential of the proposed polar codes based approach to FTQC.

\smallskip Relevant background information, as well as proofs of the main theorems and additional numerical results are provided in the Supplemental material appended to this paper. 

\section{CSS Quantum Polar Codes} 

The quantum polar transform $Q_N$, where $N = 2^n$, with $n > 0$,  is the unitary operation on $N$ qubits that operates in the computational basis as the classical polar transform $P_N$. Precisely, for any $\bm{u} = (u_1, \dots, u_N) \in \{0,1\}^N$, we define $Q_N\ket{\bm{u}} = \ket{P_N\bm{u}}$, where $P_N = \big( \begin{smallmatrix} 0 & 1 \\ 1 & 1 \end{smallmatrix}\big)^{\otimes n} $. Hence, $Q_N$  can be realized by recursively applying the quantum CNOT gate, transversely, on subblocks of $2^k$ qubits, for $k = 0,...,n-1$ (see Fig. \ref{fig:qpolar_N8}).

\smallskip Let $ \mathcal{S} = \{1,...,N\}$ denote an $N$-qubit quantum system. For a CSS quantum polar code \cite{renes2011efficient}, the system $\mathcal{S}$ is partitioned into $\mathcal{S} = \mathcal{Z} \cup \mathcal{I} \cup \mathcal{X}$, and the input quantum state of the polar transform is taken as follows.

\smallskip For $\mathcal{Z} \subseteq \mathcal{S}$, the quantum state is frozen to a known Pauli $Z$ basis state $\ket{\bm{u}}_\mathcal{Z}$, where $\bm{u} := (u_1, \dots, u_n) \in \{0, 1\}^{|\mathcal{Z}|}$. For $\mathcal{X} \subseteq \mathcal{S}$, with $\mathcal{Z} \cap \mathcal{X} = \emptyset$, it is frozen to a known Pauli $X$ basis state $\ket{\bm{\oline{v}}}_\mathcal{X}$, where $\bm{v} \in \{0, 1\}^{|\mathcal{X}|}$ and we use the notation $\ket{\bar{0}} := \ket{+}$, and $\ket{\bar{1}} := \ket{-}$.  The remaining subset $\mathcal{I} := \mathcal{S}\setminus (\mathcal{X} \cup \mathcal{Z})$ is used to encode quantum information $\ket{\phi}_\cl{I}$. 

\smallskip Therefore, the logical code state, denoted by  $\ket{\widetilde{\phi}}_\mathcal{S}$, is given by $\ket{\widetilde{\phi}}_\mathcal{S} = Q_N (\ket{\bm{u}}_\mathcal{Z} \otimes \ket{\phi}_\mathcal{I}   \otimes \ket{\bm{\oline{v}}}_\mathcal{X})$.

\begin{figure}[H]
\,\hfill\scalebox{0.95}{\input{qpolar_N8}}\hfill\,

\vspace*{0.5em}
\caption{ \small The quantum polar code, with $N=2^3$, frozen sets $\cl{Z}=\{1,2,3\}$, $\cl{X}=\{6,7,8\}$ (chosen only for the simplicity of illustration), and frozen states $\ket{\bm{u}}_\mathcal{Z} =\ket{0,0,0}$, and $\ket{\bm{\oline{v}}}_\mathcal{X}= \ket{+,+,+}$. }
\label{fig:qpolar_N8}
\end{figure}

\section{$\pone$ codes: Quantum Polar Codes Encoding One Logical Qubit } \label{sec:q1code}

As $\pone$ codes encode only one qubit, the information set $\mathcal{I} = \{i\}$, for some $i \in \mathcal{S} = \{1, \dots, N\}$. Given the index $i$, the frozen sets $\mathcal{Z}$ consists of the set of indices preceding $i$,  $i.e$, $\mathcal{Z} \eqdef \{1, \dots, i-1\}$, and the frozen set $\mathcal{X}$ consists of the set of indices succeeding $i$,  $i.e$, $\mathcal{X} \eqdef \{i+1, \dots, N\}$.

The choice of the frozen sets $\cl{Z}$ and $\cl{X}$, given the information qubit position $i$, is simply explained by the sequential nature of the SC decoding, and the fact that only logical errors corresponding to $i$ ($i.e.$, $X$ and $Z$ errors on position $i$ at the level of uncoded qubits) need to be corrected. 
Indeed, freezing qubit $j > i$ in $Z$ basis would be of no help in decoding of logical $X$ error on $i$, while freezing it in $X$ basis amounts to ignoring $X$ error that happen on it at the level of uncoded qubits, as it corresponds to a $X$ type stabilizer. A similar observation holds for $Z$ errors.

In case the information position is a power of two, that is, $i = 2^k, 0\leq k \leq n$, the corresponding $\pone$ code is a Shor code\cite{shor1995scheme,bacon2006operator}. This follows from Theorem~\ref{thm:polar-shor} below. We refer to these codes as \emph{Shor-$\pone$ codes}, or simply Shor codes, when no confusion is possible. 

\begin{theorem} \label{thm:polar-shor}
 For a given $i = 2^k, 0\leq k \leq n$,  the logical states $\ket{\widetilde{0}}_\mathcal{S}$ and $\ket{\widetilde{1}}_\mathcal{S}$ of the $\pone(N, i)$ code are as follows (up to a normalization factor),
\begin{align}
\ket{\widetilde{0}}_\mathcal{S} &= \otimes_{r=1}^{2^k} \left( \otimes_{c=1}^{2^{n-k}} \ket{+}_{r,c} + \otimes_{c=1}^{2^{n-k}} \ket{-}_{r,c} \right), \label{eq:polar-shor-log0} \\
\ket{\widetilde{1}}_\mathcal{S}  &= \otimes_{r=1}^{2^k} \left( \otimes_{c=1}^{2^{n-k}} \ket{+}_{r,c} - \otimes_{c=1}^{2^{n-k}} \ket{-}_{r,c} \right), \label{eq:polar-shor-log1}
\end{align}
where $r$ and $c$ are row  and column indexes, with $\cl{S}$ being reshaped as a $2^k \times 2^{n-k}$ matrix of qubits. 
\end{theorem}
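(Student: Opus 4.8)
The plan is to prove both identities simultaneously by induction on $n$, exploiting the recursive structure of the polar transform. Since $Q_N$ is built entirely from CNOT gates and the frozen input is a stabilizer state, both $\ket{\widetilde 0}_\mathcal{S}$ and $\ket{\widetilde 1}_\mathcal{S}$ are stabilizer states, so it suffices to track how the product input state evolves. First I would record two elementary facts: (i) $Q_M$ fixes the all-plus state, $Q_M\ket{+}^{\otimes M} = \ket{+}^{\otimes M}$, because in the computational basis $Q_M$ merely permutes the labels $\bm{u} \mapsto P_M\bm{u}$ and hence preserves the uniform superposition; and (ii) writing $\ket{+}^{\otimes m} \pm \ket{-}^{\otimes m} \propto \sum_{\bm{x}}\ket{\bm{x}}$ summed over even- (resp.\ odd-) weight strings $\bm{x}\in\{0,1\}^m$, each row-state in (\ref{eq:polar-shor-log0})--(\ref{eq:polar-shor-log1}) is simply an even/odd-weight superposition in the $Z$ basis.

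Next I would set up the recursion $Q_{2M} = \Lambda\cdot(Q_M\otimes Q_M)$, where $Q_M$ acts on each half of the $2M$ qubits and $\Lambda$ is a single transversal CNOT layer coupling the two halves (up to the index relabeling induced by $P_{2M} = P_2\otimes P_M$). For the inductive step I take $0\le k\le n-1$, so that $i = 2^k \le M = 2^{n-1}$. The input then factorizes: the first half carries exactly the $\pone(M,2^k)$ input (frozen $\ket{0}$'s on $\{1,\dots,2^k-1\}$, the information qubit at $2^k$, and $\ket{+}$'s afterwards), while the second half is entirely $\ket{+}$'s. Applying $Q_M\otimes Q_M$ and the induction hypothesis turns the first half into the Shor state on a $2^k\times 2^{n-1-k}$ grid and leaves the second half in $\ket{+}^{\otimes M}$ by fact (i).

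The heart of the argument is then to show that $\Lambda$ merges, row by row, the smaller Shor state with the fresh all-plus qubits into the Shor state on the $2^k\times 2^{n-k}$ grid. Focusing on one row, $\Lambda$ applies CNOTs with the all-plus qubits as controls and the row of the smaller Shor state as targets; in the $Z$ basis this sends $(\bm{a},\bm{b})\mapsto(\bm{a},\bm{a}\oplus\bm{b})$, and a one-line reindexing ($\bm{c}=\bm{a}\oplus\bm{b}$, so $|\bm{b}|\equiv|\bm{a}|+|\bm{c}|$) shows that a $\ket{+}$-control paired with an even- (resp.\ odd-) weight target block yields an even- (resp.\ odd-) weight block on the doubled row. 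This is exactly the row-state of (\ref{eq:polar-shor-log0}) (resp.\ (\ref{eq:polar-shor-log1})), and it simultaneously defines the reshaping: each enlarged row consists of a row of the smaller grid together with its paired control qubits. The boundary case $k=n$ ($i=N$, $\mathcal{X}=\emptyset$) is handled directly: the input is $\ket{0}^{\otimes N}$ or $\ket{0}^{\otimes(N-1)}\otimes\ket{1}$, and since the last column of $P_N = P_2^{\otimes n}$ is the all-ones vector, $Q_N$ produces $\ket{0}^{\otimes N}$ or $\ket{1}^{\otimes N}$, matching the $2^n\times 1$ form of (\ref{eq:polar-shor-log0})--(\ref{eq:polar-shor-log1}); together with the trivial base case $n=0$ this closes the induction.

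I expect the main obstacle to be bookkeeping rather than any single hard computation: one must pin down the recursion's index relabeling carefully enough to guarantee that $\Lambda$ always couples qubits lying in the same row of the recursively defined grid, and that the grid assignment is globally consistent across all rows and all levels of the recursion. The actual merge computation and the two elementary facts are short; the care lies in tracking the reshaping so that the product-over-rows structure of (\ref{eq:polar-shor-log0})--(\ref{eq:polar-shor-log1}) is preserved at every step.
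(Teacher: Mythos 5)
Your proof is correct, but it takes a genuinely different route from the paper's. You induct on $n$ via the half--half recursion $Q_{2M}=\Lambda\,(Q_M\otimes Q_M)$, with $\Lambda$ the final transversal CNOT layer, and then merge each row of the smaller Shor state with its paired block of fresh $\ket{+}$ qubits by a parity-counting change of variables in the $Z$ basis. The paper instead avoids induction altogether: it uses the single decomposition $Q_{N}=(I_{2^k}\otimes Q_{2^{n-k}})(Q_{2^k}\otimes I_{2^{n-k}})$, observes that every column transform $Q_{2^k}$ acts trivially on its input (either $\ket{0}^{\otimes 2^k}$ or $\ket{+}^{\otimes 2^k}$), and then computes each row in two lines from $\ket{0}=\ket{+}+\ket{-}$ together with the identities $Q_M\ket{+}^{\otimes M}=\ket{+}^{\otimes M}$ and $Q_M\bigl(\ket{-}\otimes\ket{+}^{\otimes M-1}\bigr)=\ket{-}^{\otimes M}$ (the first row of $P_M$ being all ones). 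The paper's argument is shorter and makes the $2^k\times 2^{n-k}$ reshaping manifest in one step, whereas your induction relies only on the elementary one-level recursion and a weight-parity merge, at the cost of the grid-indexing bookkeeping you flag; that bookkeeping does go through with the paper's column-major convention (qubit $(r,c)$ sits at index $(c-1)2^k+r$, so the final CNOT layer pairs column $c$ with column $c+2^{n-k-1}$ within the same row), and your level-by-level construction has the side benefit of mirroring the recursive measurement-based preparation used later in the paper. Your treatment of the boundary case $i=N$ via the all-ones last column of $P_N$ is also correct.
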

The \emph{construction} of a $\pone$ code refers to the choice of the information position $i$, which determines how  well the code protects the encoded quantum information. Hence, the position~$i \in \mathcal{S}$ should be chosen in a way to optimize the LER performance, depending on the specific noisy quantum channel.  For the subfamily of Shor-$\pone$ codes, the choice is restricted to positions $i = 2^k, 0\leq k \leq n$.

\smallskip For depolarizing quantum channels, we use the density evolution technique \cite{tal2013construct, arikan2009channel} to numerically estimate the LER of $\pone$ and Shor-$\pone$ codes. The numerical results are shown in 
Fig.~\ref{fig:depolar_indep_xz_dec_main} for both $\pone$ and Shor codes, for even recursion levels $n = 4 \text{ to } 12$, where the legend gives the  best index $i$ for low physical error rates.  It can be observed that the $\pone$ code in general outperforms the Shor code, for a given $n$. From the corresponding values of $i$ given in Fig.~\ref{fig:depolar_indep_xz_dec_main}, one may further note that the $\pone$ and Shor codes have the same quantum minimum distance, for a given value of $n$. The superior performance of $\pone$ codes owes to the SC decoding, which is able to decode beyond the minimum distance of the code, by effectively exploiting the  channel polarization property of polar codes.

For the odd recursion levels, our numerical results (omitted here) indicate  performance gains, depending on $n$ and the physical error rate value.

\begin{figure}[!t]
\includegraphics[width=\linewidth]{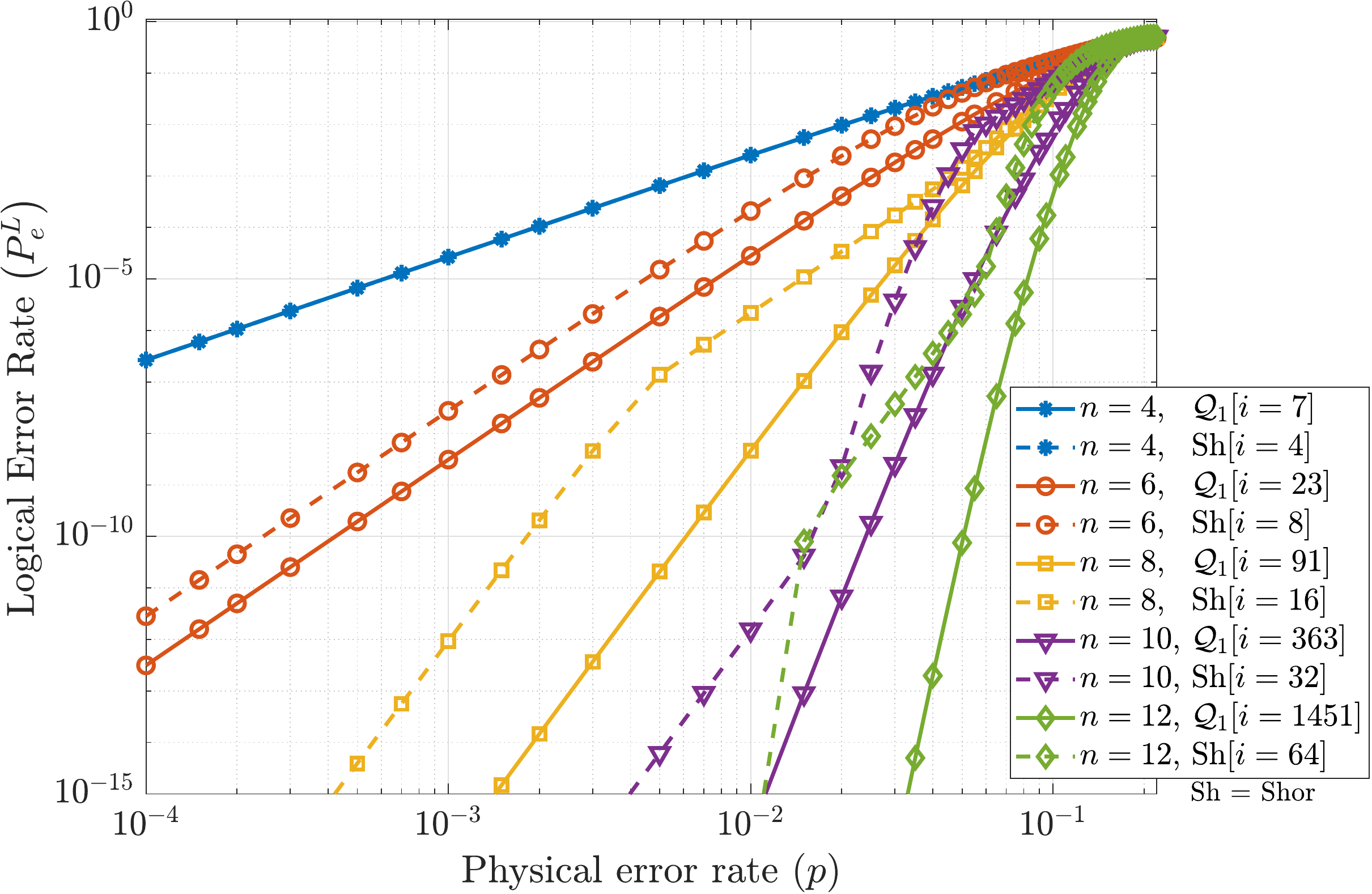}
\caption{LER of $\pone$ and Shor codes, for the depolarizing channel.}
\label{fig:depolar_indep_xz_dec_main}
\end{figure}

\section{Measurement Based Preparation of $\pone$ Code States}

The conventional encoding of quantum polar codes in Fig.~\ref{fig:qpolar_N8} is not fault-tolerant, as errors propagate  through the $\cnot$ gates. Further, measuring the stabilizer generators using the  standard ``phase kickback trick''~\cite{gottesman2010introduction}, similar to the case of quantum LDPC codes~\cite{gottesman2014fault}, is also not fault-tolerant, due to the high weights of generators.

\smallskip Hence, we propose a new procedure to prepare $\pone$ code states, based on two-qubit Pauli measurements only. We describe our procedure in two steps. First, we assume that all the operations are error free, and show that the proposed procedure does indeed prepare a $\pone$ code state. Then, we consider our procedure under the effect of errors (\emph{i.e.}, noisy gates and measurements), and show that it can be made fault-tolerant by incorporating an error detection gadget, exploiting the redundancy in the measurement outcomes.

\smallskip We consider the preparation of logical $\ket{0}$ and $\ket{+}$ states, for which all the input qubits are frozen in  either $Z$ or $X$ basis. Consequently,  we consider the preparation of general $\pone$ code states, with frozen sets $\mathcal{Z} = \{1,\dots,i\}$ and $\mathcal{X} = \{i + 1,\dots,N\}$, for some arbitrary $ 1 \leq i \leq N$, where $N=2^n$,  $n \geq 1$. Further, since our preparation procedure is recursive, to clearly indicate the length of the prepared $\pone$ state, we will use the notation $i(n) \eqdef i$, $\mathcal{Z}(n) \eqdef \mathcal{Z}$, and $\mathcal{X}(n) \eqdef \mathcal{X}$. Therefore, we want to prepare the following $N$-qubit $\pone$ state on the system $\mathcal{S} = \{1, \dots, N\}$,
\begin{equation}
\ket{q_N}_\mathcal{S} := Q_N \left( \ket{\bm{u}, \oline{\bm{v}}}_\mathcal{S}\right) = Q_N \left(\ket{\bm{u}}_{\cl{Z}(n)} \otimes \ket{ \oline{\bm{v}}}_{\cl{X}(n)}\right), \label{eq:q-prep-state}
\end{equation}
where $\bm{u} \in \{0,1\}^{i(n)}$ and $\bm{v} \in \{0,1\}^{N - i(n)}$. When no confusion is possible, we may simply write $\ket{q_N}$ instead of $\ket{q_N}_\mathcal{S}$. Here, we consider $\pone$ states defined by the same value of $i(n)$ as equivalent, regardless of the corresponding frozen values $\bm{u}, \bm{v}$. Indeed, equivalent $\pone$ states are defined by the same stabilizer generators, up to sign factors.

To prepare $\ket{q_N}_\mathcal{S}$ from (\ref{eq:q-prep-state}), we consider the following measurement based procedure.

\begin{procedure}[Measurement Based Preparation] \label{prot:prep}
Given a $n$-bit sequence $b_1\cdots b_n \in \{0, 1\}$, our measurement based procedure on $N = 2^n$ qubit system $\mathcal{S} = \{1, \dots, N\}$ is as follow.

\begin{list}{}{\setlength{\labelwidth}{2em}\setlength{\leftmargin}{1.7em}\setlength{\listparindent}{0em}}

\item[$(1)$] First, $\mathcal{S}$ is initialized in a Pauli $Z$ basis state $\ket{\bm{u}}_{\mathcal{S}}, \bm{u} \in \{0, 1\}^N$.

\item[$(2)$] Then, two-qubit Pauli measurements are recursively applied for $n$ levels. The recursion is the same as the recursion of the quantum polar transform (see Fig. \ref{fig:qpolar_N8}), except the $\cnot$ gate is replaced by either Pauli $X \otimes X$ or $Z \otimes Z$ measurement. Precisely, if $b_k = 0$ (or, $b_k = 1$), we apply Pauli $X \otimes X$ (or, $Z \otimes Z$) measurements at the $k^{th}, k = 1, \dots, n$ recursion level.
\end{list}
\end{procedure}

\begin{theorem} \label{thm:Q1_code_prep}
Consider the $\pone$ state $\ket{q_N}_\mathcal{S}$ from (\ref{eq:q-prep-state}), with $1 \leq i(n) \leq N$. Let $b_1\cdots b_n$ be the binary representation of $i(n)-1$, with $b_n$ being the most significant bit, $i.e$, $i(n)-1 = \sum_{k=1}^{n} b_k 2^{k-1}$. Then, $\ket{q_N}_\mathcal{S}$ can be prepared, using the measurement based procedure in Procedure \ref{prot:prep}, corresponding to the $n$ bit sequence $b_1\cdots b_n$.
\end{theorem}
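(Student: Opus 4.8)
The plan is to argue by induction on the recursion depth $n$, working entirely in the stabilizer formalism and exploiting the fact that the measurement procedure shares the recursive structure of $Q_N$. A convenient simplification comes from the stated equivalence convention: since $\pone$ states with the same $i(n)$ are identified regardless of the frozen values, I only need to track stabilizer groups \emph{up to signs}, so the random measurement outcomes never have to be actively corrected. For the base case $n=1$ I would check directly on two qubits initialized in the $Z$ basis that an $X\otimes X$ measurement (case $b_1=0$, i.e. $i(1)=1$, $\cl Z(1)=\{1\}$) yields $Q_2(\ket{u}\otimes\ket{\oline{v}})$, and that a $Z\otimes Z$ measurement (case $b_1=1$, i.e. $i(1)=2$) yields the fully $Z$-frozen state, matching $\ket{q_2}$ in each case.

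For the inductive step I would peel off the last ($k=n$) recursion level, which acts on the most significant bit. Writing $Q_N = U_n\,(Q_{N/2}\otimes Q_{N/2})$, where $U_n$ is the transversal $\cnot$ coupling qubit $j$ of one half to qubit $j+N/2$ of the other, I would first show that the target decomposes as $\ket{q_N} = U_n\big(\ket{q_{N/2}}^{(1)}\otimes\ket{q_{N/2}}^{(2)}\big)$, where the two half-states are length-$N/2$ $\pone$ states whose positions are fixed by $b_n$: positions $(i',0)$ when $b_n=0$ and $(N/2,\,i')$ when $b_n=1$, with $i'-1=\sum_{k=1}^{n-1} b_k 2^{k-1}$ the lower bits of $i(n)-1$. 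On the procedure side, levels $1,\dots,n-1$ act on the two halves through disjoint, hence commuting, measurements, and on each half they coincide with the depth-$(n-1)$ procedure for the bit string $b_1\cdots b_{n-1}$; by the induction hypothesis each half is therefore independently prepared in $\ket{q_{N/2}}$ with position $i'$. Thus after these levels the state is two \emph{identical} copies $\ket{q_{N/2}}(i')^{(1)}\otimes\ket{q_{N/2}}(i')^{(2)}$, and level $n$ applies the transversal $X\otimes X$ (if $b_n=0$) or $Z\otimes Z$ (if $b_n=1$) measurement.

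The crux is then a merging lemma: the transversal measurement applied to the two identical copies produces exactly the \emph{asymmetric} state $U_n(\ket{q_{N/2}}^{(1)}\otimes\ket{q_{N/2}}^{(2)})$. I would prove this at the level of stabilizer groups. With $Q=Q_{N/2}$, the generators of $\ket{q_{N/2}}(i')$ are the $Z$-type operators $Z^{\bm a_\ell}:=QZ_\ell Q^\dagger$ for $\ell\le i'$ and the $X$-type operators $X^{\bm c_\ell}:=QX_\ell Q^\dagger$ for $\ell>i'$, where $\bm a_\ell$ and $\bm c_\ell$ are the columns of $P_{N/2}^{-T}$ and $P_{N/2}$ respectively. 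For $b_n=1$ one checks that the $Z$-type generators of each copy and the products $X^{\bm c_\ell}_{(1)}X^{\bm c_\ell}_{(2)}$ commute with every measured $Z_jZ_{j+N/2}$ and hence survive; conjugating the target generators by $U_n$ (each $Z$ on the target half acquiring its partner on the other half, each $X$ on the control half likewise) produces the same family of operators and the same products, so the two stabilizer groups coincide. The case $b_n=0$ is identical under $X\leftrightarrow Z$.

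The one place I expect the argument to be genuinely delicate is matching the ``diagonal'' part of the two groups: the measured operators $\{Z_jZ_{j+N/2}\}_j$ must be shown to generate the same subgroup as the conjugated family $\{Z^{\bm a_\ell}_{(1)}Z^{\bm a_\ell}_{(2)}\}_\ell$ coming from the target. This is exactly where invertibility of the polar matrix enters: because $P_{N/2}$ is nonsingular over $\mathbb{F}_2$, the supports $\{\bm a_\ell\}$ (resp.\ $\{\bm c_\ell\}$) form a basis of $\mathbb{F}_2^{N/2}$, so both families generate the full group of copy-symmetric $Z$-type (resp.\ $X$-type) operators. A dimension count ($N/2+i'+(N/2-i')=N$ independent generators) then confirms that the post-measurement group is a full-rank stabilizer group equal, up to signs, to that of $\ket{q_N}$, which closes the induction.
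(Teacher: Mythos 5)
Your proof is correct, and it follows the same overall skeleton as the paper's: an induction over the recursion levels whose engine is a merging lemma stating that a transversal $Z\otimes Z$ (resp.\ $X\otimes X$) measurement on two equivalent length-$K/2$ states yields a length-$K$ state with $i(k)=i(k-1)+K/2$ (resp.\ $i(k)=i(k-1)$); this is exactly the paper's Lemma~\ref{lem:Q1_code_prep}, and the identification $b_k=1\Leftrightarrow i(k)>K/2$ closes the induction in both treatments. Where you genuinely diverge is in how the merging step is established. The paper expands the two input states in the relevant Pauli basis, tracks the ancilla-based measurement amplitude by amplitude, and collapses the double sum using the recursion $P_K(\bm{a},\bm{b})=(P_{K/2}(\bm{a}\oplus\bm{b}),P_{K/2}(\bm{b}))$; your argument instead stays in the stabilizer formalism, comparing the post-measurement group (measured operators plus the centralizer of the measurement set in the original group) with the conjugate of the target's generators under the cross-half $\cnot$ layer, and your identified ``delicate point'' --- that $\{Z_jZ_{j+N/2}\}_j$ and the conjugated family $\{Z^{\bm{a}_\ell}_{(1)}Z^{\bm{a}_\ell}_{(2)}\}_\ell$ span the same diagonal subgroup because $P_{N/2}$ is invertible over $\mathbb{F}_2$ --- is indeed the right thing to check, and your rank count $N/2+i'+(N/2-i')=N$ is correct. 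The trade-off is that your up-to-signs argument proves the theorem only modulo the paper's equivalence convention, which is all the statement asks for, whereas the paper's explicit computation additionally delivers the exact frozen values and the formulas $\bm{x}=P_{K/2}(\bm{m})\vert_{\cl{X}(k-1)}$ and $\bm{z}=P_{K/2}^\top(\bm{m})\vert_{\cl{Z}(k-1)}$ recovering them from the measurement outcomes; that extra bookkeeping is not cosmetic, since it is precisely what the error-detection gadget of Procedure~\ref{prot:err_det} and the fault-tolerance analysis later rely on. Your approach buys a cleaner, more conceptual proof of the state identity; the paper's buys the outcome-to-frozen-value dictionary needed downstream.
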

The measurement based preparation for $N = 8, i(n) = 3$ is illustrated in Fig. \ref{fig:qpolarprep_N8_i3}.

\begin{figure}[!t]
\,\hfill\hspace*{2em}\scalebox{0.95}{\input{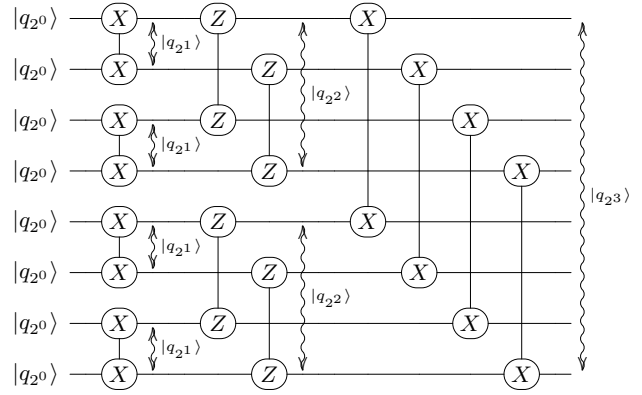}}\hfill\,

\caption{Measurement-based preparation of $\ket{q_N}_\mathcal{S}$ in (\ref{eq:q-prep-state}), with $N = 8, i(n) = 3$,  where slightly flattened circles connected by a~vertical wire denote either a $X \otimes X$ or a $Z \otimes Z$ ~measurement on the corresponding qubits, and $\ket{q_{2^k}}$ are equivalent $\pone$ states of length $2^k$ ($\ket{q_{2^0}}$ is a Pauli $Z$ basis state).}
\label{fig:qpolarprep_N8_i3}
\end{figure}

\smallskip We first show in Lemma \ref{lem:Q1_code_prep} that given two equivalent $\pone$ states of length $K/2, K = 2^k$, we can prepare a $\pone$ state of length $K$ by performing $Z \otimes Z$ or $X \otimes X$ measurements transversely on them. Further, when we apply Pauli $Z \otimes Z$ measurements, we have $i(k) = i(k-1) +  K/2 > K/2$ and when we apply Pauli $X \otimes X$ measurements, we have $i(k) = i(k-1) \leq K/2$. The proof of Theorem \ref{thm:Q1_code_prep} then simply follows from Lemma \ref{lem:Q1_code_prep}, by noting that $b_k = 1 \Leftrightarrow i(k) > K/2$ (hence, $b_k = 0 \Leftrightarrow i(k) \leq K/2$).

\begin{lemma} \label{lem:Q1_code_prep}
Consider two equivalent $\pone$ states on $K/2$-qubit systems $\mathcal{S}_1 : = \{1, \dots, K/2 \}$ and $\mathcal{S}_2 : = \{K/2+1, \dots, K\}$ as follows, $\ket{q_{\frac{K}{2}}^1}_{\mathcal{S}_1} \eqdef Q_{\frac{K}{2}} \ket{\bm{u_1}, \oline{\bm{v}}_{\bm{1}}}_{\mathcal{S}_1}$ and $\ket{q_{\frac{K}{2}}^2}_{\mathcal{S}_2} \eqdef Q_{\frac{K}{2}} \ket{\bm{u_2}, \oline{\bm{v}}_{\bm{2}}}_{\mathcal{S}_2}$, where $\bm{u_1}, \bm{u_2} \in \{0, 1\}^{i(k-1)}$ and $\bm{v_1}, \bm{v_2} \in \{0, 1\}^{\frac{K}{2}-i(k-1)}$, with $1 \leq i(k-1) \leq K/2$. Let $\mathcal{S} := \mathcal{S}_1 \cup \mathcal{S}_2$ be the joint system, then we have the following two cases.

\begin{list}{}{\setlength{\labelwidth}{2em}\setlength{\leftmargin}{0.3em}\setlength{\listparindent}{0em}}
\item Case $1$: If we apply transversal Pauli $Z \otimes Z$ measurements on the corresponding qubits of $\cl{S}_1$ and $\cl{S}_2$, we get the $K/2$ bit measurement outcome as follows,
\begin{equation}
 \bm{m} = P_\frac{K}{2}(\bm{u'}, \bm{x}) \in \{0, 1\}^{\frac{K}{2}}, \label{eq:m-out}
\end{equation}
where $\bm{u'} = \bm{u_1} \oplus \bm{u_2} \in \{0, 1\}^{i(k-1)}$ and  $\bm{x} \in \{0, 1\}^{\frac{K}{2}-i(k-1)}$ is a random vector, and $P_{\frac{K}{2}}$ is the classical polar transform. After measurements, the state of $\mathcal{S}$ is a $\pone$ state, $\ket{q_K}_\mathcal{S} =  Q_K \ket{(\bm{u'}, \bm{x}, \bm{u_2}), \overline{\bm{v_1} \oplus \bm{v_2}}}_\mathcal{S}$, with $i(k) = i(k-1) + K/2 > K/2$, and where $\bm{x}$ is determined from the measurement outcome $\bm{m}$ in (\ref{eq:m-out}) by, $\bm{x} = P_{\frac{K}{2}}(\bm{m})\lvert_{\mathcal{X}(k-1)}$, $i.e.$, the subvector of  $P_{\frac{K}{2}}(\bm{m}) \in \{0, 1\}^{K/2}$ corresponding to indices in the set $\mathcal{X}(k-1)$.

\item Case $2$: If we apply transversal Pauli $X \otimes X$ measurements on the corresponding qubits of $\cl{S}_1$ and $\cl{S}_2$, we get the $K/2$ bit measurement outcome as follows,
\begin{equation}
 \bm{m} = P_\frac{K}{2}^\top(\bm{z}, \bm{v'}) \in \{0, 1\}^{\frac{K}{2}}, \label{eq:m-out-x-main}
\end{equation}
where $\bm{z} \in \{0, 1\}^{i(k-1)}$ is a random vector, and $\bm{v'} = \bm{v_1} \oplus \bm{v_2} \in \{0, 1\}^{\frac{K}{2}-i(k-1)}$. After measurements, the state on $\mathcal{S}$ is a $\pone$ state $\ket{q_K}_\mathcal{S} = Q_K \ket{\bm{u_1} \oplus \bm{u_2}, \overline{(\bm{v_1},  \bm{z},  \bm{v'})}}_\mathcal{S}$, with $i(k) = i(k-1) \leq K/2$ and from (\ref{eq:m-out-x-main}), $\bm{z} = P_{\frac{K}{2}}^\top (\bm{m})\lvert_{\mathcal{Z}(k-1)}$.

\end{list}
\end{lemma}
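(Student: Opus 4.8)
The plan is to prove Case~1 (the $Z\otimes Z$ measurements) by a direct computation in the computational basis, fixing throughout the convention for $P_{K/2}$ realised by the recursion of Fig.~\ref{fig:qpolar_N8} (an involution, whose Hadamard conjugate is $P_{K/2}^\top$), and then to obtain Case~2 essentially for free by a transversal--Hadamard duality. First I would expand each input. Writing $\ket{\oline{\bm v_j}}\propto\sum_{\bm w_j}(-1)^{\bm v_j\cdot\bm w_j}\ket{\bm w_j}$ over the $X$-frozen positions and using $Q_{K/2}\ket{\bm a}=\ket{P_{K/2}\bm a}$, each input state becomes $\ket{q^{j}_{K/2}}_{\mathcal{S}_j}\propto\sum_{\bm w_j}(-1)^{\bm v_j\cdot\bm w_j}\ket{P_{K/2}(\bm u_j,\bm w_j)}_{\mathcal{S}_j}$, where $(\bm u_j,\bm w_j)$ places $\bm u_j$ on $\mathcal{Z}(k-1)$ and $\bm w_j$ on $\mathcal{X}(k-1)$. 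Measuring $Z_\ell Z_{\ell+K/2}$ for each $\ell$ reads off the bitwise XOR of the two computational labels, so by linearity over GF$(2)$ the outcome is $\bm m=P_{K/2}(\bm u_1,\bm w_1)\oplus P_{K/2}(\bm u_2,\bm w_2)=P_{K/2}(\bm u',\bm w_1\oplus\bm w_2)$ with $\bm u'=\bm u_1\oplus\bm u_2$. Applying the inverse transform recovers $(\bm u',\bm x)$ from $\bm m$, and since $P_{K/2}$ is an involution in the operative convention this inverse is $P_{K/2}$ itself, giving $\bm x=P_{K/2}(\bm m)|_{\mathcal{X}(k-1)}$ while the $\mathcal{Z}(k-1)$ part is pinned to $\bm u'$. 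As $\bm w_1,\bm w_2$ range freely, $\bm x$ is uniform whereas the $\mathcal{Z}(k-1)$ part is deterministic, which both justifies the ``random vector'' claim and exposes the redundancy later used for error detection: a measured $\bm m$ with $P_{K/2}(\bm m)|_{\mathcal{Z}(k-1)}\neq\bm u'$ signals an error.

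Next I would identify the post-measurement state. Projecting onto a fixed outcome $\bm m$ keeps exactly the terms with $\bm w_1\oplus\bm w_2=\bm x$; parametrising the surviving sum by $\bm w_2$ and pulling out a global phase yields $\ket{q_K}_{\mathcal{S}}\propto\sum_{\bm w_2}(-1)^{(\bm v_1\oplus\bm v_2)\cdot\bm w_2}\ket{P_{K/2}(\bm u_1,\bm x\oplus\bm w_2)}_{\mathcal{S}_1}\ket{P_{K/2}(\bm u_2,\bm w_2)}_{\mathcal{S}_2}$. I would then match this to $Q_K$ applied to the asserted input by using the recursive circuit identity read off from Fig.~\ref{fig:qpolar_N8}: the final recursion level applies transversal $\cnot$ gates with controls in $\mathcal{S}_2$ and targets in $\mathcal{S}_1$ after $Q_{K/2}\otimes Q_{K/2}$, that is, $Q_K\ket{\bm a,\bm b}=\ket{P_{K/2}(\bm a\oplus\bm b)}_{\mathcal{S}_1}\ket{P_{K/2}\bm b}_{\mathcal{S}_2}$. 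Splitting the asserted input $(\bm u',\bm x,\bm u_2,\bm w)$ into halves $\bm a=(\bm u',\bm x)$ and $\bm b=(\bm u_2,\bm w)$ and using $\bm u'\oplus\bm u_2=\bm u_1$ reproduces the displayed sum verbatim (with $\bm w\leftrightarrow\bm w_2$), proving $\ket{q_K}_{\mathcal{S}}=Q_K\ket{(\bm u',\bm x,\bm u_2),\oline{\bm v_1\oplus\bm v_2}}_{\mathcal{S}}$ with $i(k)=i(k-1)+K/2$, as claimed.

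Case~2 I would not redo from scratch but deduce by conjugating the entire argument with $H^{\otimes K}$. Transversal Hadamard turns each $Z\otimes Z$ measurement into an $X\otimes X$ measurement, exchanges the $Z$- and $X$-frozen inputs (and thus the roles of $\bm u$ and $\bm v$), and turns the $\cnot$ network into the same network with controls and targets reversed, which implements the transposed transform. This is precisely the CSS duality that replaces $P_{K/2}$ by $P_{K/2}^\top$, moves the random bits onto $\mathcal{Z}(k-1)$, and hence gives $\bm z=P_{K/2}^\top(\bm m)|_{\mathcal{Z}(k-1)}$ together with $i(k)=i(k-1)$; the uniformity and the redundancy statement transfer \emph{mutatis mutandis}, with the $\mathcal{X}(k-1)$ part now pinned to $\bm v'=\bm v_1\oplus\bm v_2$.

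The main obstacle I anticipate is bookkeeping rather than any conceptual difficulty: fixing one consistent convention for the polar transform (in particular the direction of the final transversal $\cnot$ layer and the induced bit ordering) and tracking the GF$(2)$ linear algebra so that the matrix recovering $\bm x$ and $\bm z$ is consistently $P_{K/2}$, respectively $P_{K/2}^\top$, rather than an inverse --- which hinges on the transform being self-inverse in the operative convention --- and that the random bits are read from the $\mathcal{X}(k-1)$, respectively $\mathcal{Z}(k-1)$, restriction while the complementary restriction stays deterministic. Getting the Hadamard-duality transpose exactly right is the one place where a stray sign or a missing transpose could slip in.
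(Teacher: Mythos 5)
Your proposal is correct and follows essentially the same route as the paper's proof: expand both inputs in the computational basis, observe that the transversal $Z\otimes Z$ outcomes give the bitwise XOR $P_{K/2}(\bm{u_1},\bm{w_1})\oplus P_{K/2}(\bm{u_2},\bm{w_2})=P_{K/2}(\bm{u'},\bm{w_1}\oplus\bm{w_2})$, project onto the terms with $\bm{w_1}\oplus\bm{w_2}=\bm{x}$, and reassemble the surviving sum via the recursion $P_K(\bm{a},\bm{b})=(P_{K/2}(\bm{a}\oplus\bm{b}),P_{K/2}(\bm{b}))$ and the involution $P_{K/2}^2=I$. The only cosmetic differences are that the paper routes the measurement through the explicit ancilla circuit (which it reuses for the later noise analysis) where you read the eigenvalues directly, and that for Case 2 you invoke transversal-Hadamard duality where the paper simply redoes the expansion in the $X$ basis; both are equivalent in content.
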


\section{Fault-Tolerant Measurement Based Procedure} We now consider our measurement based procedure under the effect of Pauli noise. We assume the standard implementation of Pauli $Z \otimes Z$ and $X \otimes X$  measurements, where a bare ancilla qubit is initialized in either Pauli $Z$ or Pauli $X$ basis state, then two CNOT gates are applied between data and ancilla qubits, and finally the ancilla qubit is measured in Pauli $Z$ or $X$ basis. Therefore, any two-qubit Pauli measurement decomposes into four basic \emph{components}, namely one single-qubit initialization, two $\cnot$-gates, and one single-qubit measurement. 

\smallskip As the preparation of $\ket{q_N}_\mathcal{S}$ consists of $N$ single-qubit initializations, followed by $N/2 \log N$ two-qubit Pauli measurements, the total number of components in the preparation is equal to $N (1 + 2 \log N)$.

\smallskip We assume that each component fails independently with some probability $p$, according to a circuit level Pauli noise model as follows.


\smallskip A failure in a CNOT gate corresponds to applying the perfect CNOT gate, followed by a two-qubit Pauli error on the output qubits of the CNOT gate. A failure in initialization in Pauli $Z$ (or $X$) basis corresponds to the perfect initialization, followed by an $X$ (or $Z$) error on the initialized qubit. A failure in Pauli $Z$ (or $X$)  measurement corresponds to first applying a Pauli $X$ (or $Z$) error on the qubit to be measured, and then doing the perfect Pauli $Z$ (or $X$) measurement.

\smallskip We note that the preparation in Procedure~\ref{prot:prep} is not fault-tolerant by itself. Due to failures in the components, the measurement outcomes of transversal Pauli $Z \otimes Z$ or $X \otimes X$ measurements are noisy. Precisely, we have $\bm{m} = P_\frac{K}{2}(\bm{u'}, \bm{x}) \oplus \bm{e}_X$ instead of (\ref{eq:m-out}), and $\bm{m} = P_\frac{K}{2}^\top(\bm{z}, \bm{v'}) \oplus \bm{e}_Z$ instead of (\ref{eq:m-out-x-main}), where $\bm{e}_X, \bm{e}_Z \in \{0, 1\}^{K/2}$ are unknown error terms. Recall from Lemma \ref{lem:Q1_code_prep} that vectors $\bm{x}$ and $\bm{z}$ are determined from $\bm{m}$, and they are necessary to know the prepared state $\ket{q_K}_\mathcal{S}$ after Pauli $Z \otimes Z$ and $X \otimes X$ measurements, respectively.  However, due to unknown error terms, the methods in Lemma \ref{lem:Q1_code_prep} may not correctly determine $\bm{x}$ and $\bm{z}$.  Accepting a wrong estimate $\hat{\bm{x}} \neq \bm{x}$ and  $\hat{\bm{z}} \neq \bm{z}$ amounts to extra $X$ and $Z$ errors on the respective prepared states, given by the vectors $P_K(0, \hat{\bm{x}} \oplus \bm{x}, 0, 0)$, and $P_K^\top(0, 0, \hat{\bm{z}} \oplus \bm{z}, 0)$, respectively.

\smallskip To make the measurement based preparation fault-tolerant, we consider the following error detection procedure.

\begin{procedure}[Measurement based Preparation with Error Detection] \label{prot:err_det}
Consider the preparation of a $\pone$ state of length $N$ from Procedure~\ref{prot:prep}.  We further incorporate an error detection gadget within each level of reccursion, $k=1,\dots, n$, consisting of the following two steps.

\begin{list}{}{\setlength{\labelwidth}{2em}\setlength{\leftmargin}{1.7em}\setlength{\listparindent}{0em}}
\item[$(1)$]  For all $2^{n-k}$ instances of prepared $\ket{q_K}$ states at the $k^{th}$ level of recursion, we first determine the syndrome of the error in the measurement outcome $\bm{m}$ as follows. When Pauli $Z \otimes Z$ measurements are performed ($i.e$, Case $1$ of Lemma \ref{lem:Q1_code_prep}), we determine the syndrome of the error term $\bm{e}_X$ in the measurement outcome $\bm{m}$ as, $P_{\frac{K}{2}}(\bm{e}_X) \lvert_{\mathcal{Z}(k-1)} = P_{\frac{K}{2}}(\bm{m})\lvert_{\mathcal{Z}(k-1)} \,\oplus\, \bm{u'}$. Similarly, when Pauli $X \otimes X$ measurements are performed ($i.e$, Case $2$ of Lemma \ref{lem:Q1_code_prep}), we determine the syndrome of the error term $\bm{e}_Z$ as, $P_{\frac{K}{2}}^\top(\bm{e}_Z) \lvert_{\mathcal{X}(k-1)} = P_{\frac{K}{2}}^\top(\bm{m})\lvert_{\mathcal{X}(k-1)} \,\oplus\, \bm{v}'$.

\item[$(2)$] If the syndrome is the zero vector for all the $2^{n-k}$ instances of $\ket{q_K}$, we determine the value of $\bm{x}$ or $\bm{z}$ for all prepared states as in Lemma~\ref{lem:Q1_code_prep}, and proceed to the next level of recursion. Otherwise, we declare a preparation failure.
 
\end{list}
\end{procedure}
In case Procedure \ref{prot:err_det} fails, we may restart the procedure from the begining, by initializing $N$ qubits in a Pauli $Z$ basis state.

\smallskip The fault-tolerance of the successfully prepared state ($i.e.$, when errors are not detected at any recursion level $k = 1 \text{ to } n$), follows from Theorem \ref{thm:fault_tolerance}.

\begin{theorem} \label{thm:fault_tolerance}
Consider the measurement based preparation with error detection from Procedure \ref{prot:err_det}. Suppose a successful preparation of $\ket{q_N}$, where $T_n$ component failures occur during the preparation.  Let $\bm{e}_X^f, \bm{e}_Z^f \in \{0, 1\}^N$ be the final Pauli $X$ and $Z$ errors in the noisy prepared state $\ket{q^\prime_N}$, due to the component failures. Then, there exist equivalent errors  $\bm{e}_X^{\prime f} \equiv \bm{e}_X^f $ and $\bm{e}_Z^{ \prime f} \equiv \bm{e}_Z^f$, so that $\wt(\bm{e}_X^{\prime f}) \leq T_n$ and $\wt(\bm{e}_Z^{\prime f}) \leq T_n$, where $\wt(\bm{u})$ denotes the Hamming weight of $\bm{u}$.
\end{theorem}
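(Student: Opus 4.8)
The plan is to prove the statement by induction on the recursion depth $n$, exploiting the CSS structure to treat the residual $X$- and $Z$-errors separately. The inductive hypothesis I would carry is exactly the theorem at level $k$: for any successfully prepared $\ket{q_{2^k}}$ in which $T_k$ component failures occurred, the residual $X$-error and the residual $Z$-error on the data qubits each admit a representative of Hamming weight at most $T_k$ modulo the stabilizer group of the prepared $\pone$ state. The base case $n=0$ is immediate, since a length-$1$ state is prepared by a single $Z$-basis initialization, whose only failure mode is a weight-$1$ $X$-error. For the inductive step I would view the preparation of $\ket{q_N}$ as the combination, through the top ($k=n$) transversal measurement layer, of two successfully prepared length-$N/2$ states $\ket{q^1_{N/2}}$ and $\ket{q^2_{N/2}}$, and split the fault count as $T_n = T^1 + T^2 + T^m$, where $T^1,T^2$ count the faults inside the two subtrees and $T^m$ the faults in the top layer (initializations, $\cnot$ gates, ancilla measurements).

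The heart of the argument is to track how each of these groups of faults contributes to the final residual error, and I would isolate two structural facts about the two-qubit measurement gadget. First, because the gadget is CSS and the direction of its two $\cnot$ gates is fixed by whether a $Z\otimes Z$ or an $X\otimes X$ measurement is performed, an $X$- (resp. $Z$-) error entering the layer on a data qubit propagates to the output \emph{without spreading onto any other data qubit}: it can only copy itself onto the dedicated, subsequently measured ancilla, where it merely flips a measurement outcome. Hence the propagated input errors contribute at most $T^1 + T^2$ to each residual weight, by the inductive hypothesis applied to the two subtrees. Second, a single component failure inside the top layer produces a residual data error that is equivalent, modulo the measured operator $Z_aZ_b$ (resp. $X_aX_b$), to a weight-$\le 1$ Pauli. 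The potentially dangerous case is an ancilla-originating fault that copies through both $\cnot$ gates onto the two data qubits $a,b$, giving the weight-$2$ operator $Z_aZ_b$ (resp. $X_aX_b$); but this operator is precisely the stabilizer created by that measurement, so it reduces to weight $0$, while mixed weight-$2$ outcomes such as $X_aZ_b$ reduce to weight $1$ after multiplying by the same stabilizer and using $Y=XZ$. Thus the $T^m$ layer faults contribute at most $T^m$ to each residual weight.

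The remaining, and most delicate, contribution is the \emph{frame error} coming from the classical post-processing of Lemma~\ref{lem:Q1_code_prep}: a noisy outcome $\bm m = P_{K/2}(\bm u',\bm x)\oplus \bm e_X$ (resp. with $\bm e_Z$) may yield a wrong estimate $\hat{\bm x}\neq\bm x$ (resp. $\hat{\bm z}\neq \bm z$), which amounts to inserting the extra data error $P_K(0,\hat{\bm x}\oplus\bm x,0,0)$, an object that can a priori have large weight. Here I would invoke the error-detection step of Procedure~\ref{prot:err_det}: a successful preparation forces the syndrome $P_{K/2}(\bm e_X)\rvert_{\mathcal{Z}(k-1)}$ (resp. $P_{K/2}^\top(\bm e_Z)\rvert_{\mathcal{X}(k-1)}$) to vanish at every level. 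Using the syndrome definition together with the recovery relations of Lemma~\ref{lem:Q1_code_prep} and the linearity of $P_{K/2}$, I would express the estimation error $\hat{\bm x}\oplus\bm x$ as a linear image of $\bm e_X$ and show that, when combined with the genuine data errors produced by the same outcome-flipping faults, the net operator is equivalent modulo the stabilizer group to an error of weight bounded by the number of those faults. In other words, the zero-syndrome condition certifies that $\bm e_X$ is consistent with a low-weight data error, so the frame correction is correct up to a stabilizer and does not inflate the residual weight.

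Finally I would verify that the local equivalences used above, taken modulo the stabilizer created at level $k$, remain valid modulo the stabilizer group of the \emph{final} state $\ket{q_N}$; this follows from the recursive, commuting structure of the polar transform, which guarantees that each level-$k$ measured operator stays in the stabilizer group throughout the subsequent levels. Summing the propagated-input and layer contributions, and using that the frame correction adds nothing beyond these up to equivalence, gives residual $X$- and $Z$-errors of weight at most $T^1 + T^2 + T^m = T_n$, closing the induction. I expect the main obstacle to be precisely the third step: showing that, under the zero-syndrome guarantee, the high-weight frame correction $P_K(0,\hat{\bm x}\oplus\bm x,0,0)$ telescopes against the physical faults to a stabilizer-equivalent weight-$\le T^m$ error, since this is where the interplay between the classical polar decoding, the detection redundancy, and the stabilizer equivalence is genuinely used, and where a naive weight count would fail.
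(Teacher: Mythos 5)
Your plan follows the same route as the paper's proof: induction on the recursion level with the fault count split as $T_k = T_{k-1}^1 + T_{k-1}^2 + t_k$, stabilizer equivalences to reduce two-qubit $\cnot$ failures and ancilla-originating errors to weight~$\le 1$ per fault, and the zero-syndrome condition to tame the frame error. You also correctly locate the one genuinely delicate step. For completeness, here is how the paper closes the step you flag as the main obstacle, since a naive execution of your sketch would stall there. Under the zero-syndrome condition, $\bm{e}_X = P_{K/2}\bigl(\bm{0}, P_{K/2}(\bm{e}_X)\rvert_{\mathcal{X}(k-1)}\bigr)$, so the frame correction $P_K(\bm{0}, \hat{\bm{x}}\oplus\bm{x}, \bm{0}, \bm{0})$ is \emph{exactly} $(\bm{e}_X, \bm{0})$, i.e., the total ancilla-line error lifted onto the first half; moreover $(\bm{e}_X,\bm{e}_X)$ is shown to be an $X$-type stabilizer of $\ket{q_K}$, so the frame error may equivalently be placed on either half. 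The trap is that $\bm{e}_X$ contains \emph{both} subtree input errors $\bm{e}_X^1 \oplus \bm{e}_X^2$ (copied onto the ancilla by the two $\cnot$s), so placing it on the first half cancels $\bm{e}_X^1$ there but duplicates $\bm{e}_X^2$ across both halves, giving a bound with $2\wt(\bm{e}_X^2)$; your claim that the frame correction ``adds nothing beyond'' the layer faults is therefore not literally true for either fixed placement. The paper recovers the bound by computing both placements and choosing, without loss of generality, the one attributing $\bm{e}_X$ to the half whose input error (plus two-qubit $\cnot$-failure count) is smaller, so that $2\min(a,b)\le a+b$ yields $T_{k-1}^1 + T_{k-1}^2 + t_k$. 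An analogous stabilizer shift $(\bm{e}_Z^{C_1(3)}, \bm{e}_Z^{C_1(3)})$ handles the $Z$-error that the first $\cnot$ back-propagates onto $\mathcal{S}_2$ through the second $\cnot$. With that mechanism supplied, your induction closes as in the paper.
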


Theorem \ref{thm:fault_tolerance} implies that the weight of $X$ and $Z$ errors on a successfully prepared state remains small given a sufficiently low component failure probability $p$. In particular, it upper bounds the average weight of the final error by $N(1+ 2 \log N)p$. Our numerical simulation suggests that the average error weight is much lower than $N(1+ 2 \log N)p$, which is expected as we discard the preparations where errors are detected.

\section{Fault-tolerant error correction} For fault-tolerant error correction of $\pone$ codes, we consider Steane error correction. The $\pone$ states, needed in the Steane error correction, are prepared using Procedure \ref{prot:err_det}, assuming a circuit-level depolarizing noise model.  Further, as Procedure \ref{prot:err_det} consists of error detection, we consider polar code states of small lengths $N = 16$ and $N=64$.

\smallskip Let $p_{\text{prep}}= \lim_{R \to \infty} \frac{t}{R}$ be the preparation rate, where $t$ is the number of successful preparations out of $R$ preparation attempts. For a component failure probability $p = 10^{-3}$, 
our numerical simulation gives $p_{\text{prep}} \approx 0.88$ and $p_{\text{prep}} \approx 0.47 $, respectively, for $\pone(N = 16, i = 7)$ and $\pone(N = 64, i = 23)$ codes.

\smallskip The numerical estimate of the LER  for $\pone$ and Shor codes of length $N=16$ and $N=64$ qubits are given in Fig.~\ref{fig:p1_vs_shor_main}. The LER has been estimated by Monte-Carlo (MC) simulation of the Steane error correction procedure, until a number of $f$ logical errors are reported, where $f$ is taken to be between $50$ and $200$. We also provide a theoretical upper-bound of the LER based on density evolution (DE), providing a trustworthy extrapolation of the LER for smaller values of $p$, which is a unique feature (among quantum codes) of great interest in practical applications. 

\smallskip Finally, it can be  observed that the pseudothreshold (crossing point between the LER curve and the diagonal line \cite{svore2006flow, tomita2014low}) of the $\pone(N=16, i=7)$ code is $p_\text{th} \approx  0.001$, while for the  $\pone(N=64, i=23)$ code,  we get $p_\text{th} \approx  0.01$.  If the
physical error rate $p$ falls below the pseudothreshold, then the code is guaranteed to lower the logical error rate below $p$.

\begin{figure}[!t]
\includegraphics[width = \linewidth]{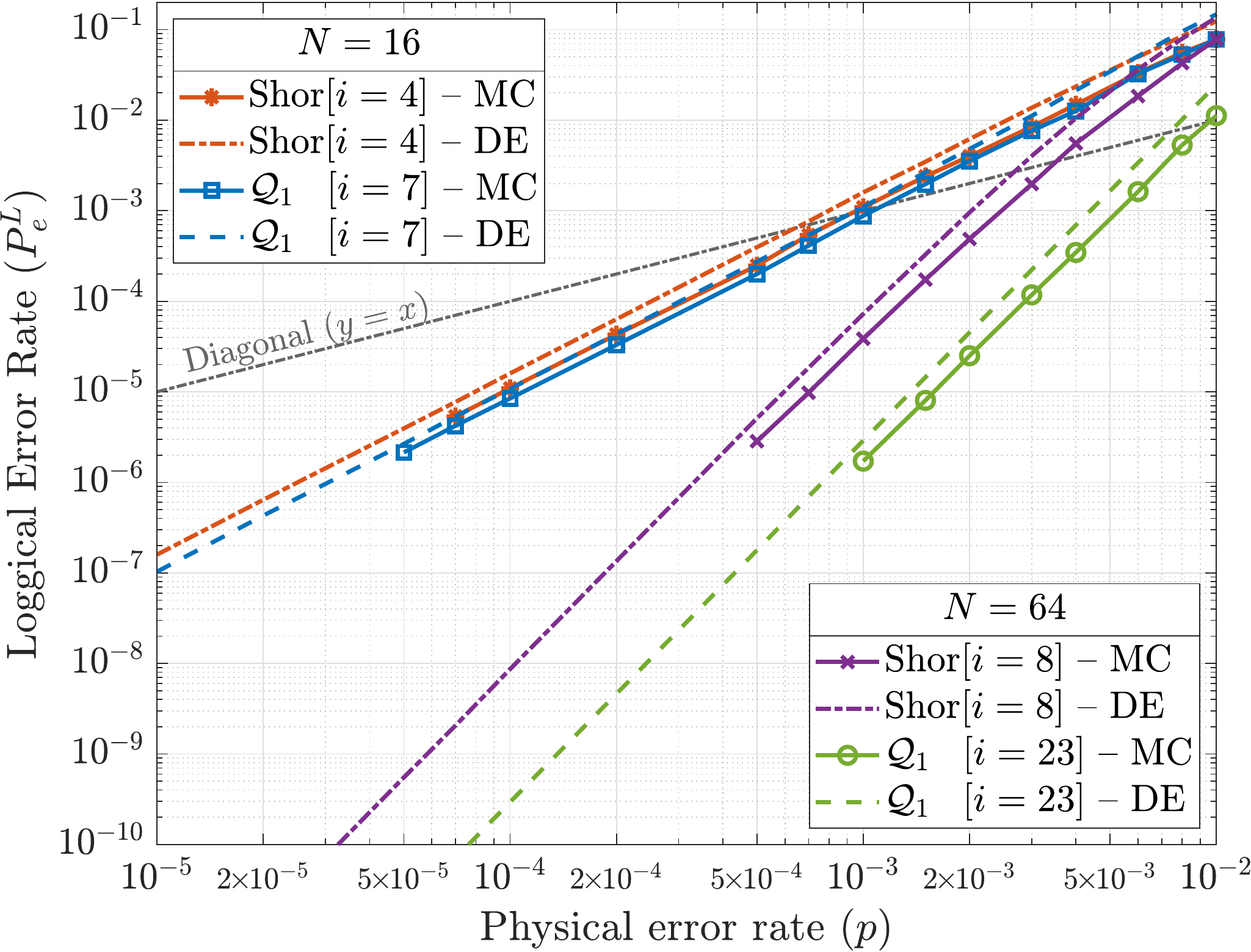}
\caption{Monte-Carlo (MC) and Density-evolution (DE)  based evaluation of the LER, for $\pone$ and Shor codes of length $N=16$ and $N=64$.}
\label{fig:p1_vs_shor_main}
\end{figure}


\section*{Discussion}
Polar encoded quantum computation may be seen as an error correction centric approach to FTQC.  It exploits a family of codes that have met with remarkable success, fueled by their excellent error correction performance, under practical, low complexity decoding.  In this paper, we focused on two closely related ingredients of FTQC, namely fault-tolerant code state preparation and fault-tolerant error correction. We considered polar codes encoding one logical qubit, referred to as $\pone$ codes, constituting a case of practical interest to FTQC (they support transversal logical $\cnot$ gate), and which we showed to be a meaningful generalization of the well-known Shor codes.  We may also notice that the preparation procedure presented here naturally extends to the case of quantum polar codes encoding several logical qubits, provided  they are encoded in consecutive information positions. 

\smallskip We provided here numerical estimates of the logical error rates for $\pone$ and Shor codes of length $16$ and $64$ qubits, assuming a circuit-level depolarizing noise  model. Our numerical results showed that  $\pone$ codes notably outperforms Shor codes, for the same code length and  minimum distance. To prepare longer $\pone$ code states, a possible approach is to replace the error detection mechanism incorporated in the preparation procedure, by an error correction one. Moreover, the gap between $\pone$ and Shor codes is expected to increase with the codelength, due to the $\pone$ codes construction that effectively exploits the channel polarization property. 

\smallskip Finally, we note that our preparation procedure requires distant interactions between qubits, which may be possible on some quantum technologies only. Yet, for quantum systems with local interaction constraints, it is possible to reduce the number of distant operations required to prepare small $\pone$ codes, by arranging the qubits in an appropriate manner (taking advantage of the polar code structure).  Distant operations may then be implemented through the use of swap gates, or by physically moving the qubits around. For longer codes, solutions have to be sought that may heavily depend on the specific quantum technology.

\section*{Acknowledgment}
The authors would like to thank Davide Orsucci for useful discussions.  This work was supported by the QuantERA grant EQUIP, by the French Agence Nationale de la Recherche, ANR-22-QUA2-0005-01.



\onecolumngrid
 

\cleardoublepage

\renewcommand{\thesection}{\Roman{section}}
\renewcommand{\thesubsection}{\Roman{section}.\Alph{subsection}}

\setcounter{section}{0}
\setcounter{figure}{0}
\setcounter{table}{0}
\setcounter{equation}{0}

\onecolumngrid
\newpage

\titleformat*{\section}{\fontsize{11}{13}\selectfont\bfseries\center}
\titleformat*{\subsection}{\fontsize{11}{13}\selectfont\bfseries\center}
\titleformat*{\paragraph}{\fontsize{10.5}{12}\selectfont\itshape\bfseries}


%

\begin{center}
 \large\textbf{\em Supplemental Material for:} 
 
 \vspace*{2mm}{\bf Fault-Tolerant Preparation of Quantum Polar Codes Encoding One Logical Qubit}
\end{center}

In Sections \ref{sec:c-pol-code} and \ref{sec:q-pole-code}, we provide introductions to classical and quantum polar codes, respectively. For quantum polar codes, we detail Steane's error correction procedure in Section \ref{subsec:steane-css}. In Section~\ref{sec:p-theorem-1}, we first provide a proof of Theorem $1$ from the main text, and then provide further details on the construction of $\pone$ codes. In Section \ref{sec:prep_lemma}, we first consider the measurement-based preparation without noise and provide the proof of Lemma $1$ from the main text.  We then consider the measurement-based preparation with noise and provide the proof of fault tolerance, according to Thoerem $3$ from the main text.  Finally, in Section \ref{sec:num_sim-prep}, we discuss numerical methods to estimate logical error rates of $\pone$ codes, using Steane error correction, incorporating our measurement based preparation. In this context, we also provide some extra numerical results.




\section{Classical Polar Codes} 
\label{sec:c-pol-code}

\subsection{Encoding} The encoding of classical polar codes is done by applying the reversible $\xor$ gate recursively on an $N$ bit input $\bm{u} = (u_1, \dots, u_N) \in \{0, 1\}^N$, where $N=2^n$, with $n>0$ 
(see Fig.~\ref{fig:cpolar_recursion_and_N8}). For a set of positions $ \mathcal{F} \subseteq \{1,\dots,N\}$, the corresponding component $\bm{u}\lvert_\mathcal{F} \in \{0,1\}^{\lvert \mathcal{F} \lvert}$ of the input vector $\bm{u}$ is frozen. We may take $\bm{u}\lvert_\mathcal{F}$ to be any vector in $\{0,1\}^{\lvert \mathcal{F} \lvert}$, but it should be known to both the encoder and the decoder. The set $\mathcal{F}$ is called the \emph{frozen set}. The remaining positions $\mathcal{I} \eqdef \{1,\dots,N\}\setminus \mathcal{F}$ are used to encode information bits. The set $\mathcal{I}$ is called the \emph{information set}.

\smallskip  In the following, we denote by $\mathcal{P}(N,\mathcal{F},\bm{u}\lvert_\mathcal{F})$, the classical polar code of length $N$, frozen positions $\mathcal{F}$,  and frozen vector $ \bm{u}\lvert_\mathcal{F} \in \{0,1\}^{\lvert \mathcal{F} \lvert}$. 

\smallskip The action of the   $\xor_{2 \to 1}$ gate on input $\bm{u} = (u_1, u_2) \in \{0, 1\}^2$ gives $\bm{x} = (u_1 \oplus u_2, u_2)$. In matrix form, we may write $\bm{x} = P_2 \bm{u}$, where 
\begin{equation}
P_2 = \begin{bmatrix}
1 & 1 \\
0 & 1 \\
\end{bmatrix}
\end{equation}

\smallskip  The classical polar transform, that is, the recursive application of $\xor_{2 \to 1}$ on $N=2^n$ qubits, is thus given by the matrix 
\begin{equation} \label{eq:rec_pol}
P_N=P_2^{\otimes n}.
\end{equation}
For any value of the information bits $\bm{u}_\mathcal{I} \in \{0,1\}^{\lvert \mathcal{I} \lvert}$, $\bm{x} = P_N (\bm{u}\lvert_\mathcal{F},  \bm{u}\lvert_\mathcal{I} ) \in \{0,1\}^N$ is a codeword of the polar code $\mathcal{P}(N,F,\bm{u}\lvert_\mathcal{F})$. If $\bm{u}\lvert_\mathcal{F}$ is the all zero vector, the polar code $\mathcal{P}(N,\mathcal{F},\bm{u}\lvert_\mathcal{F})$ is generated by the columns of $P_N$ corresponding to the information set $\mathcal{I}$.

\begin{figure}[!b]
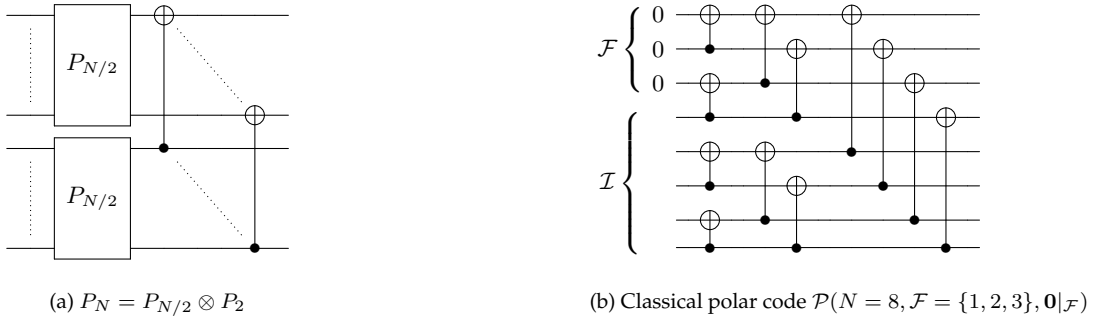

\begin{subfigure}[b]{.45\textwidth}
\,\hfill\input{cpolar_recursion}\hfill\,
\captionsetup{justification=centering}
\caption{$P_N = P_{N/2}\otimes P_2$}
\label{fig:cpolar_recursion}
\end{subfigure}\hfill%
\begin{subfigure}[b]{.55\textwidth}
\,\hfill\input{cpolar_N8}\hfill\hspace*{3mm}\,

\vspace*{1.8ex}
\captionsetup{justification=centering}
\caption{Classical polar code $\mathcal{P}(N=8,\mathcal{F}= \{1,2,3\},\bm{0}\lvert_\mathcal{F})$}
\label{fig:cpolar_N8}
\end{subfigure}
\caption{$(a)$ Polar transform recursion: $P_N$ in terms of $P_{N/2}$. $(b)$ Example of a classical polar code encoding $|\mathcal{I}| = 5$ bits into $N = 2^3$ bits, with frozen set $\mathcal{F}= \{1,2,3\}$, and frozen vector $u_\mathcal{F}=(0,0,0)$. Here, the set $\mathcal{F}$ (thus, $\mathcal{I}$)  is chosen only for the purpose and the simplicity of the illustration. In general, it needs not consist of consecutive positions. }
\label{fig:cpolar_recursion_and_N8}
\end{figure}

\subsection{Construction} 
The \emph{construction} of a classical polar code refers to the choice of the information set $\mathcal{I}$ (or, equivalently, the frozen set $\mathcal{F}$). This is done in a \emph{channel specific way} as follows. 

\smallskip Consider a discrete, memoryless, classical channel $W(y \mid x)$, with binary input $x\in \{0,1\}$, and output $y \in \cl{Y}$.  For $\bm{u}\in\{0,1\}^N$, let $\bm{x} = P_N(\bm{u}) \in\{0,1\}^N$ (note that here $\mathcal{F}$ and $\mathcal{I}$ need not be defined) and $\bm{y} \in \cl{Y}^N$ be the output corresponding to $N$ uses of the $W$ channel, with inputs $x_1,\dots,x_N$. For $i=1,\dots,N$, let $W^{(i)}\left(\bm{y}, \bm{u}_{1:i-1} \mid u_i\right)$ denote the so-called \emph{virtual channel}, with  input $u_i \in \{0,1\}$ and output $(\bm{y}, \bm{u}_{1:i-1}) \in \cl{Y} \times \{0,1\}^{i-1}$, where $\bm{u}_{1:i-1} := (u_1,\dots,u_{i-1})$.

\smallskip Informally, the \emph{channel polarization theorem}~\cite{arikan2009channel} states that, for sufficiently large $N$, almost all the virtual channels become arbitrarily close to either the noiseless (perfect) channel or to the completely noisy (useless) channel. The \emph{closeness} to the either perfect or useless channel can be expressed in terms of different parameters, such as the mutual information $I\left(W^{(i)}\right)$, the Bhattacharyya parameter $Z\left(W^{(i)}\right)$, or the error probability $P_e\left(W^{(i)}\right)$, which can be computed analytically for some channels (\emph{e.g.}, binary erasure channels), or estimated numerically through \emph{density evolution}, for more general channels~\cite{tal2013construct}. Once one of these parameters is computed for all the virtual channels, they are sorted from the most reliable (closest to the perfect channel) to the least reliable (closest to the useless channel) one.


\smallskip  For a polar code encoding $K$ information bits, the information set $\cl{I}$ consists of the indexes corresponding to the $K$ most reliable virtual channels (equivalently, the  $N-K$ least reliable virtual channels are frozen).  The usefulness of this construction will become apparent in relation to the successive cancellation decoding, discussed in the next section. 

\subsection{Decoding} 
\makeatletter
\def\@currentlabel{\thesubsection}
\makeatother
\label{subsec:class_dec}

Classical polar codes come equipped with an efficient successive cancellation (SC) decoding algorithm. SC decoding takes advantage of the polar code construction, by estimating inputs $u_1,\dots,u_N$ sequentially. For $i=1,\dots,N$, SC decoding outputs the \emph{maximum a posteriori estimate} $\hat{u}_i$ of $u_i$, conditional on the observed output $\bm{y}$ and previous estimates $\hat{\bm{u}}_{1:i-1} = (\hat{u}_1,\dots,\hat{u}_{i-1})$. Precisely, we have


\begin{equation}
\hat{u}_i := \left\{ \begin{array}{ll}
u_i, & \mbox{if } i\in\cl{F} \\
\displaystyle\argmax_{u\in\{0,1\}} W^{(i)}(\bm{y}, \hat{\bm{u}}_{1:i-1} \mid u), & \mbox{if } i\in\cl{I}
\end{array}\right.
\end{equation}
If the information set contains indexes of \emph{good virtual channels} (close to the perfect channel), the maximum a posteriori estimate $\hat{u}_i$ is equal to the input $u_i$ with high probability. This does not happen for \emph{bad virtual channels} (close to the useless channel), which must then be frozen, so that the corresponding inputs are known to both the encoder and decoder. It is worth  noticing that the maximum a posteriori decoding of the virtual channels can be performed in an efficient way, by using a message passing algorithm that takes advantage of the recursive structure of the polar code~\cite{arikan2009channel}. Overall, the complexity of the SC decoding scales as $O(N\log(N))$. 

\smallskip Finally, we note that Polar codes under SC decoding are known to \emph{achieve the channel capacity}~\cite{arikan2009channel}. This means that for sufficiently large $N$, it is possible to choose an information set of size $|\cl{I}|$ arbitrary close of $N I(W)$, where $I(W)$ is the channel's mutual information, while ensuring an arbitrary small error probability under SC decoding.

\section{Quantum Polar Codes} 
\label{sec:q-pole-code}

\subsection{Encoding}
The encoding of quantum information for CSS quantum polar codes is briefly given in the main text under the paragraph ``CSS Quantum Polar Codes". In this section, we provide more details. 

\smallskip Recall from the main text that the quantum polar transform $Q_N$ corresponds to the recursive action of the quantum $\cnot$ gate on an $N$-qubit quantum state. The quantum $\cnot$ gate induces the reversible $\xor$ gate in the Pauli $Z$ and Pauli $X$ bases. Precisely, $\cnot_{2 \to 1}$ acts as $\xor_{2 \to 1}$ in the Pauli $Z$ basis,  while it acts as $\xor_{1 \to 2}$ in the Pauli $X$ basis. Hence,  it follows that $Q_N$  acts as the classical polar transform $P_N$ in the Pauli $Z$ basis, while it acts as the reverse classical polar transform (\emph{i.e.}, with inverted target and control bits) in the Pauli $X$ basis.

\smallskip  It can be seen that the reversed classical polar transform is described  by $P_N^\top$, where $P_N^\top$ is the transpose of $P_N$ \cite{renes2011efficient}. Therefore, we have the following, for $\bm{u} \in \{0, 1\}^N$,
\begin{align}
Q_N \ket{\bm{u}} = \ket{P_N\bm{u}}, \label{eq:q-pol-enc-x} \\
Q_N \ket{\bm{\oline{u}}} = \ket{\oline{P_N^\top \bm{u}}}. \label{eq:q-pol-enc-z} 
\end{align}


%

 In the following, we denote by $\qp(N,\mathcal{Z},\mathcal{X},\ket{\bm{u}}_\mathcal{Z} ,\ket{\bm{\oline{v}}}_\mathcal{X})$, the quantum polar code on $N$ qubits, with frozen sets $\mathcal{Z}$ and $\mathcal{X}$, corresponding to the Pauli $Z$ and Pauli $X$ bases, respectively, and  with  corresponding frozen quantum states $\ket{\bm{u}}_\mathcal{Z}$ and $\ket{\bm{\oline{v}}}_\mathcal{X}$. It induces two classical polar codes, one in $Z$ basis, with frozen set $\cl{Z}$, and one in $X$ basis, with frozen set $\cl{X}$, where the latter is defined by a reversed polar transform, as illustrated in Fig.~\ref{fig:qpolar_N8_Zbasis_Xbasis}. 
 

\smallskip Let $\pi$ denote the \emph{reverse order permutation} of $\cl{S} = \{1,\dots,N\}$, defined by $\pi(i) = N+1-i$. Then, the classical polar code induced in the $Z$ basis is $\mathcal{P}(N,\cl{Z}, \bm{u})$, while the classical polar code induced in the $X$ basis is $\mathcal{P}(N, \pi(\cl{X}), \pi(\bm{v}))$, where the vector $\pi(\bm{v})$ is defined by permuting the entries of $\bm{v}$ according to $\pi$, that is, $\pi(\bm{v})_i := v_{\pi(i)}$, $\forall i\in \cl{X}$.

\begin{figure}[!b]
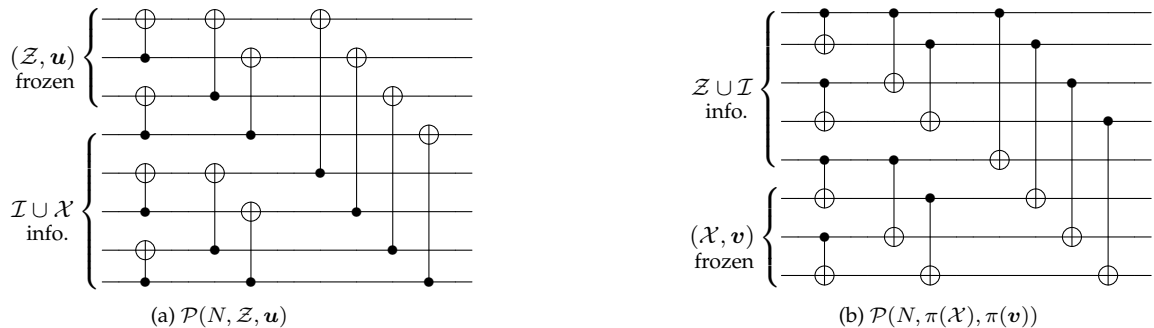

\centering
\begin{subfigure}[b]{.48\textwidth}
\captionsetup{justification=centering}
\,\hfill\hfill\hspace*{8mm}\input{qpolar_N8_Zbasis}\hfill\,
\caption{$\mathcal{P}(N,\cl{Z}, \bm{u})$}
\label{fig:qpolar_N8_Zbasis}
\end{subfigure}\hfill%
\begin{subfigure}[b]{.48\textwidth}
\captionsetup{justification=centering}
\,\hfill\hfill\hspace*{-8mm}\input{qpolar_N8_Xbasis}\hfill\,
\caption{$\mathcal{P}(N, \pi(\cl{X}), \pi(\bm{v}))$}

\label{fig:qpolar_N8_Xbasis}
\end{subfigure}
\caption{Classical polar codes in (a) $Z$ basis, and (b) $X$ basis, induced by the CSS quantum polar code $\qp(N,\mathcal{Z},\mathcal{X},\ket{\bm{u}}_\mathcal{Z} ,\ket{\bm{\oline{v}}}_\mathcal{X})$. The permutation $\pi$ in (b) is the reverse order permutation. (Note that in this example the two codes are actually the same.)}
\label{fig:qpolar_N8_Zbasis_Xbasis}
\end{figure}

\smallskip The $X$ and $Z$ type stabilizer generators of $\qp(N, \mathcal{Z},\mathcal{X},\ket{\bm{u}}_\mathcal{Z},\ket{\oline{\bm{v}}}_\mathcal{X})$ are given by the  lemma below.

\begin{lemma} \label{lem:stab-gen}
Let $\ee_i \in \{0,1\}^N$ be the binary vector, with $0$ everywhere except $1$ at the $i$-th position. Then, the stabilizer group of $\qp(N, \mathcal{Z},\mathcal{X},\ket{\bm{u}}_\mathcal{Z},\ket{\oline{\bm{v}}}_\mathcal{X})$ is generated by the following $X$ and $Z$ type operators
\begin{align}
(-1)^{\bm{v}_i}  X^{P_N \eee_i} , \forall i \in \mathcal{X}. \label{eq:gen-x-1} \\
(-1)^{\bm{u}_i}  Z^{P_N^\top \eee_i} , \forall i \in \mathcal{Z}. \label{eq:gen-z-1}
\end{align}

\end{lemma}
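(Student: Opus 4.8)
The plan is to obtain the code stabilizers by conjugating, through the Clifford $Q_N$, the trivially known stabilizers of the pre-transform input state $\ket{\bm{u}}_\mathcal{Z} \otimes \ket{\phi}_\mathcal{I} \otimes \ket{\oline{\bm{v}}}_\mathcal{X}$. Since this input is a product state on the frozen qubits, its stabilizer group is generated by single-qubit Paulis: for each $i \in \mathcal{Z}$ the qubit is $\ket{u_i}$, stabilized by $(-1)^{u_i} Z^{\ee_i}$ (as $Z\ket{u_i} = (-1)^{u_i}\ket{u_i}$), and for each $i \in \mathcal{X}$ the qubit is $\ket{\oline{v_i}}$, stabilized by $(-1)^{v_i} X^{\ee_i}$ (as $X\ket{\oline{v_i}} = (-1)^{v_i}\ket{\oline{v_i}}$). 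Because conjugation by a Clifford is a group isomorphism on the Pauli group that preserves eigenvalues, the stabilizer group of $\ket{\widetilde{\phi}}_\mathcal{S} = Q_N(\cdots)$ is generated by $Q_N\, (-1)^{u_i} Z^{\ee_i}\, Q_N^\dagger$ for $i\in\mathcal{Z}$, and $Q_N\, (-1)^{v_i} X^{\ee_i}\, Q_N^\dagger$ for $i\in\mathcal{X}$, with the sign factors passing through untouched.

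It then remains to evaluate these two conjugations. For the $X$ part I would compute directly from the computational-basis action (\ref{eq:q-pol-enc-x}): for any $\bm{b}$, $Q_N X^{\bm{b}} Q_N^\dagger \ket{P_N \bm{x}} = Q_N \ket{\bm{x}\oplus\bm{b}} = \ket{P_N\bm{x} \oplus P_N\bm{b}}$, so $Q_N X^{\bm{b}} Q_N^\dagger = X^{P_N \bm{b}}$; setting $\bm{b} = \ee_i$ yields $X^{P_N \ee_i}$, matching (\ref{eq:gen-x-1}). For the $Z$ part, the parallel computation using $Z^{\bm{a}}\ket{\bm{x}} = (-1)^{\bm{a}^\top\bm{x}}\ket{\bm{x}}$ and the substitution $\bm{y} = P_N\bm{x}$ gives $Q_N Z^{\bm{a}} Q_N^\dagger = Z^{(P_N^\top)^{-1} \bm{a}}$. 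I would then invoke that the Arikan transform is an involution, $P_N^2 = I$, which follows from $P_2^2 = I$ over $\mathbb{F}_2$ together with $P_N = P_2^{\otimes n}$ (\ref{eq:rec_pol}); hence $(P_N^\top)^{-1} = P_N^\top$ and the generator becomes $Z^{P_N^\top \ee_i}$, matching (\ref{eq:gen-z-1}). Equivalently, one may read the $Z$-conjugation straight off the Pauli-$X$-basis action (\ref{eq:q-pol-enc-z}), where $Q_N$ realizes $P_N^\top$.

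To finish, one notes that these $|\mathcal{Z}| + |\mathcal{X}| = N-1$ operators are independent and mutually commuting, since conjugation by $Q_N$ carries the independent, commuting input generators to independent, commuting images; as a consistency check, $X^{P_N\ee_i}$ and $Z^{P_N^\top\ee_j}$ commute because $(P_N\ee_i)^\top(P_N^\top\ee_j) = \ee_i^\top (P_N^\top)^2 \ee_j = \ee_i^\top\ee_j = 0$ for disjoint $i\in\mathcal{X}$, $j\in\mathcal{Z}$. I expect the only real subtlety to be the $Z$-conjugation bookkeeping: tracking the \emph{transpose} that appears for $Z$-type Paulis (versus the plain $P_N$ for $X$-type), and correctly leveraging the involutivity $P_N = P_N^{-1}$ so that the transpose-inverse collapses to $P_N^\top$. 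Everything else is routine Clifford conjugation.
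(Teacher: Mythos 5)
Your proposal is correct and follows essentially the same route as the paper: identify the single-qubit stabilizers $(-1)^{u_i}Z^{\ee_i}$, $(-1)^{v_i}X^{\ee_i}$ of the pre-transform product state and push them through the Clifford $Q_N$. The only (immaterial) difference is that you derive the conjugation identities $Q_N X^{\bm{b}} Q_N^\dagger = X^{P_N\bm{b}}$ and $Q_N Z^{\bm{a}} Q_N^\dagger = Z^{P_N^\top\bm{a}}$ directly from the basis action and the involutivity $P_N^2=I$, whereas the paper obtains them from the two-qubit $\cnot$ conjugation rules and the recursive structure $P_N = P_2^{\otimes n}$.
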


\begin{proof}
Note that the input of the polar transform, that is, the quantum state $\ket{\bm{u}}_\mathcal{Z} \otimes \ket{\phi}_\mathcal{I}   \otimes \ket{\bm{\oline{v}}}_\mathcal{X}$, is stabilized by the following Pauli operators
\begin{align}
(-1)^{v_i} X^{\eee_i}, \forall i \in \mathcal{X},  \\
(-1)^{u_i} Z^{\eee_i}, \forall i \in \mathcal{Z}. 
\end{align}
Therefore, the encoded quantum state $\ket{\widetilde{\phi}}_\mathcal{S} = Q_N(\ket{\bm{u}}_\mathcal{Z} \otimes \ket{\phi}_\mathcal{I}   \otimes \ket{\bm{\oline{v}}}_\mathcal{X})$ is stabilized by the following Pauli operators,
\begin{align}
(-1)^{\bm{v}_i} Q_N X^{\eee_i} Q_N, \forall i \in \mathcal{X},\label{eq:gen-x} \\
(-1)^{\bm{u}_i} Q_N Z^{\eee_i} Q_N, \forall i \in \mathcal{Z}. \label{eq:gen-z}
\end{align}
The operators in~(\ref{eq:gen-x})-(\ref{eq:gen-z}) are a generating set of the stabilizer group of  $\qp(N,\mathcal{Z},\mathcal{X},\ket{\bm{u}}_\mathcal{Z} ,\ket{\bm{\oline{v}}}_\mathcal{X})$. These generators can be written in terms of classical polar transforms $P_N$ and $P_N^\top$ as follows. 

\smallskip \noindent We first consider the case of two qubits, and observe that the sandwiching actions of the $\cnot$ gate on Pauli operators gives, for $\bm{u} \in \{0,1\}^2$,
\begin{align}
\cnot_{2 \to 1} X^{\bm{u}} \cnot_{2 \to 1} = X^{P_2 \bm{u}},  \label{eq:cnot_sand-x} \\
\cnot_{2 \to 1} Z^{\bm{u}} \cnot_{2 \to 1} = Z^{P_2^\top \bm{u}}. \label{eq:cnot_sand-z}
\end{align}
Using~(\ref{eq:cnot_sand-x})-(\ref{eq:cnot_sand-z}) and (\ref{eq:rec_pol}), the sandwiching action of $Q_N$ is described by the classical polar transforms $P_N$ and $P_N^\top$, respectively, on $X$ and $Z$ type operators. Hence, the stabilizer generators from~(\ref{eq:gen-x})-(\ref{eq:gen-z}) can be written as given in~(\ref{eq:gen-x-1})-(\ref{eq:gen-z-1})
\end{proof}

\smallskip Note, from (\ref{eq:gen-x-1}), that the indicator vector of $X$-type generators are given by the columns of $P_N$, corresponding to the set $\mathcal{X}$. Further, from (\ref{eq:gen-z-1}), the indicator vectors of the $Z$-type generators are given by the columns of $P_N^\top$ ($i.e$, the rows of $P_N$), corresponding to the set $\mathcal{Z}$.

\smallskip Similarly to Lemma \ref{lem:stab-gen}, logical Pauli operators can be determined by passing $X$ and $Z$ operators through the polar transform, as shown in the lemma below.

\begin{lemma}
 Let  $\widetilde{X}_i$  and $\widetilde{Z}_i$ be the logical $X$ and $Z$ operators, corresponding to the encoded qubit at the position $i \in \mathcal{I}$. Then, 
\begin{align}
\widetilde{X}_i =   X^{P_N \eee_i}, \label{eq:log-x} \\
\widetilde{Z}_i = Z^{P_N^\top \eee_i}. \label{eq:log-z} 
\end{align}
\end{lemma}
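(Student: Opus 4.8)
The plan is to read off $\widetilde X_i$ and $\widetilde Z_i$ by conjugating the \emph{unencoded} logical operators through the polar transform $Q_N$, in exact parallel with the way the stabilizer generators were obtained in the proof of Lemma \ref{lem:stab-gen}. First I would identify the logical operators in the input frame. Before the transform is applied, the code state is the product $\ket{\bm u}_\mathcal{Z}\otimes\ket{\phi}_\mathcal{I}\otimes\ket{\bm{\oline v}}_\mathcal{X}$, and the encoded qubit lives on the single tensor factor indexed by $i\in\mathcal{I}$. The natural logical $X$ and $Z$ operators for that qubit are the single-qubit Paulis $X^{\ee_i}$ and $Z^{\ee_i}$. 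To confirm they are valid representatives I would check the three defining properties: (a) they anticommute with each other, being $X$ and $Z$ on the same qubit; (b) they commute with every input stabilizer generator $X^{\ee_j}$ ($j\in\mathcal{X}$) and $Z^{\ee_j}$ ($j\in\mathcal{Z}$), which holds because $i\notin\mathcal{Z}\cup\mathcal{X}$ and distinct-qubit $X$, $Z$ commute; and (c) neither lies in the stabilizer group, since each acts nontrivially on the information factor. Note also that, unlike the generators of Lemma \ref{lem:stab-gen}, these operators carry no sign factor, precisely because the information factor $\ket{\phi}$ carries no frozen value.

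Next I would invoke covariance under the encoding unitary. Since $\ket{\widetilde\phi}_\mathcal{S}=Q_N\!\left(\ket{\bm u}_\mathcal{Z}\otimes\ket{\phi}_\mathcal{I}\otimes\ket{\bm{\oline v}}_\mathcal{X}\right)$, any operator $O$ that acts as a logical operator on the input frame yields the logical operator $Q_N O\, Q_N^\dagger$ on the code, because conjugation by a unitary preserves the commutation relations (a)--(c) established above. Hence $\widetilde X_i=Q_N X^{\ee_i}Q_N^\dagger$ and $\widetilde Z_i=Q_N Z^{\ee_i}Q_N^\dagger$. Because $P_2^2=I$ over $\mathbb{F}_2$, the transform $P_N=P_2^{\otimes n}$ is an involution, so $Q_N$ permutes computational basis states involutively and $Q_N^\dagger=Q_N$. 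The conjugation is therefore literally the sandwiching action already computed in Lemma \ref{lem:stab-gen}, namely $Q_N X^{\bm w}Q_N=X^{P_N\bm w}$ and $Q_N Z^{\bm w}Q_N=Z^{P_N^\top\bm w}$ for every $\bm w\in\{0,1\}^N$, which itself follows from the two-qubit identities (\ref{eq:cnot_sand-x})--(\ref{eq:cnot_sand-z}) and the recursion (\ref{eq:rec_pol}).

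Finally I would substitute $\bm w=\ee_i$ into these two sandwiching identities, obtaining $\widetilde X_i=X^{P_N\eee_i}$ and $\widetilde Z_i=Z^{P_N^\top\eee_i}$, which is the claim. I expect the only genuine subtlety to lie in the first paragraph: justifying that $X^{\ee_i}$ and $Z^{\ee_i}$ are the \emph{canonical} logical operators and not merely representatives that differ from the intended ones by a stabilizer, and that conjugation by $Q_N$ maps an input-frame logical operator to a code-space logical operator. Once those commutation-relation checks are in place, everything downstream is the mechanical reuse of the $P_N$ and $P_N^\top$ sandwiching rules, so no further obstacle remains.
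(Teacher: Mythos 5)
Your proof is correct and follows essentially the same route as the paper, which offers no separate argument for this lemma beyond the remark that the logical operators are obtained ``similarly to Lemma~\ref{lem:stab-gen}, by passing $X$ and $Z$ operators through the polar transform,'' i.e.\ the same sandwiching identities $Q_N X^{\bm{w}} Q_N = X^{P_N \bm{w}}$ and $Q_N Z^{\bm{w}} Q_N = Z^{P_N^\top \bm{w}}$ applied to $\bm{w}=\eee_i$. Your extra care in verifying that $X^{\eee_i}$ and $Z^{\eee_i}$ are the canonical logical representatives in the input frame, and that conjugation by the (involutive) encoding unitary preserves this role, fills in a step the paper leaves implicit but introduces no new idea.
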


\subsection{Construction} 
\makeatletter
\def\@currentlabel{\thesubsection}
\makeatother
\label{sec:construction-css}

The \emph{construction} of a CSS quantum polar code refers to determining the frozen sets $\mathcal{Z}$ and $\mathcal{X}$ (thus, the information set $\mathcal{I}$), which exploits the classical polarization in $Z$ and $X$ bases, respectively~\cite{renes2011efficient}. We shall assume a Pauli channel $\cl{W}$, with qubit input, given by 
\begin{equation}
\cl{W}(\rho) = p_I\rho + p_X X\rho X + p_Y Y\rho Y + p_Z Z\rho Z, \label{eq:pauli_chanel}
\end{equation} 
where $\rho$ denotes the density matrix of the qubit (mixed) state, and  $p_I, p_X, p_Y, p_Z \in [0,1]$ are probability values, summing to $1$.

\smallskip In $Z$ basis, only $X$ errors matter. Thus, the $Z$-basis induced channel, denoted $W_Z$, captures the effect of $X$-type errors on the quantum state, and is given by 
\begin{equation}
W_Z(\ket{u}\bra{u}) = (p_I+p_Z)\ket{u}\bra{u} + (p_X + p_Y) X\ket{u}\bra{u} X, \text{ for } u\in\{0,1\}. \label{eq:pauli_chanel_Zbasis}
\end{equation} 
Hence, $W_Z$ is a classical binary symmetric channel (BSC), with error probability $p_X + p_Y$. Similarly, the $X$-basis induced channel, denoted $W_X$, captures the effect of $Z$-type errors on the quantum state, and is given by 
\begin{equation}
W_X(\ket{\overline{u}}\bra{\overline{u}}) = (p_I+p_X)\ket{\overline{u}}\bra{\overline{u}} + (p_Z + p_Y) Z\ket{\overline{u}}\bra{\overline{u}} Z, \text{ for } u \in \{0,1\}. \label{eq:pauli_chanel_Xbasis}
\end{equation} 
Hence, $W_X$ is a classical BSC, with error probability $p_X + p_Y$. 

\smallskip To construct the CSS quantum polar code,  we exploit the classical polarization of $W_Z$ and $W_{X}$ channels (See Section~\ref{subsec:class_dec}). Note that this construction \emph{ignores the correlations} between $X$ and $Z$ (we will explain how these correlations can be captured a little later, below).
The frozen set $\mathcal{Z}$ is determined by the classical polarization of the  $W_Z$ channel, under the classical polar transform $P_N$,  while the frozen set $\mathcal{X}$ is determined by the classical polarization of the  $W_{X}$, under the reversed classical polar transform $P_N^\top$. The remaining information set $\cl{I}$ corresponds to virtual channels that are \emph{good}  in both $Z$ and $X$ bases. The frozen set $\mathcal{Z}$ corresponds to virtual channels that are \emph{bad} in $Z$ basis. These channels may be either bad or good in $X$ basis, it does not matter, since we need not decode $Z$ errors on corresponding inputs (such errors correspond to $Z$-type generators, thus they act trivially on the code space). Similarly, the frozen set $\mathcal{X}$ corresponds to virtual channels that are bad in $X$ basis, and which may be either bad or good in $Z$ basis. Precisely, positions in $\cl{X}$ that do not impact the decoding of subsequent positions in $\cl{I}$ (if any) may correspond to bad virtual channels in $Z$ basis. For instance, this may be the case for some  positions in $\cl{X}$ that come after the last position in $\cl{I}$, \emph{i.e.}, some  $j\in \cl{X}$, such that $i < j$, $\forall i\in \mathcal{I}$. See also Fig.~\ref{fig:qpolar_N8_Zbasis}.

\smallskip To \emph{capture the correlations} between $X$ and $Z$ errors, one of the two channels, say $W_X$, has to be \emph{extended}~\cite{renes2011efficient}. The extended channel, denoted by $W_{X'}$, is thus defined by the conditional probability of a $Z$ error, given the $X$ error. Precisely,
 \begin{align}
W_{X'}(\ket{\overline{u}}\bra{\overline{u}}) = 
   &\ (p_I + p_Z) \ket{0}\bra{0}_X \otimes 
   \left( \frac{p_I}{p_I + p_Z}\ket{\overline{u}}\bra{\overline{u}} + \frac{p_Z}{p_I + p_Z} Z\ket{\overline{u}}\bra{\overline{u}} Z \right) + \nonumber\\
   &\ (p_X + p_Y) \ket{1}\bra{1}_X \otimes 
   \left( \frac{p_X}{p_X + p_Y}\ket{\overline{u}}\bra{\overline{u}} + \frac{p_Y}{p_X + p_Y} Z\ket{\overline{u}}\bra{\overline{u}} Z \right),
\label{eq:pauli_chanel_Xbasis_ext}
\end{align} 
where $\{\ket{0}_X, \ket{1}_X\}$ is an orthogonal basis of an auxiliary system, indicating whether an $X$ error happened or not. Put differently, $W_{X'}$ is a classical mixture of two BSCs, the first with error probability $p_Z/(p_I+p_Z)$ (when no $X$ error happened), and the second with error probability $p_Y/(p_X+p_Y)$ (when an $X$ error happened). 

\smallskip The construction of the CSS quantum polar code, taking into account the correlations between $X$ and $Z$ errors, is done in the same way as above, while replacing the $W_X$ channel by its extended version $W_{X'}$.

\subsection{Steane Error Correction} 
\makeatletter
\def\@currentlabel{\thesubsection}
\makeatother
\label{subsec:steane-css}

We describe here the Steane  error correction procedure~\cite{steane1997active, steane2002fast} (see also~\cite[Section~4.4]{gottesman2010introduction}), applied to CSS quantum polar codes.  Throughout this section, we consider an encoded state $\ket{\widetilde{\phi}}_\mathcal{S} = Q_N\left( \ket{\bm{u}}_{\cl{Z}} \otimes \ket{\phi}_{\cl{I}} \otimes \ket{\overline{\bm{v}}}_{\cl{X}} \right)$ of the quantum polar code $\qp(N,\mathcal{Z},\mathcal{X},\ket{\bm{u}}_\mathcal{Z} ,\ket{\bm{\oline{v}}}_\mathcal{X})$, that we want to protect against Pauli errors.

\smallskip Steane's error correction procedure consists of the following steps.
\begin{itemize}
\item[$(1)$] An ancilla system $\mathcal{S}'$ is prepared in either the logical all-$\ket{+}$ state or the logical all-$\ket{0}$ state of $\qp(N,\mathcal{Z},\mathcal{X},\ket{\bm{u}}_\mathcal{Z} ,\ket{\bm{\oline{v}}}_\mathcal{X})$. The former state is prepared for $X$-error correction, while the latter of $Z$-error correction. 

\item[$(2)$] A transverse $\cnot$ gate is applied between $\mathcal{S}$ (original)  and  $\mathcal{S}'$ (ancilla) systems, in such a way that either the $X$ or the $Z$ errors  on $\mathcal{S}$ are copied to $\mathcal{S}'$, while the two systems remain separated. 

\item[$(3)$] The ancilla system is measured, outputting a random codeword of a classical polar code (in either $Z$ or $X$ basis), corrupted by the error that has been copied to $\mathcal{S}'$.

\item[$(4)$] A classical SC decoding is applied to determine the error (possibly, the corresponding corrective operation is applied on the $\mathcal{S}$ system).
\end{itemize}

\smallskip It is easily seen that steps $(2)$ and $(3)$ are fault-tolerant, as they consist only of transverse gates and single qubit measurements. 


\smallskip By a slight abuse of language, we shall refer to steps $(1)$-$(3)$ above as $\emph{syndrome extraction}$ \footnote{Strictly speaking, the classical information obtained at step $(3)$ is not an error syndrome, but an error corrupted version of a random codeword (of course, if needed, one may classically compute the corresponding error syndrome).}.  In case of \emph{ideal syndrome extraction}, steps  $(1)$-$(3)$  are assumed to be error free, and the error corrected at step $(4)$ is the one preexisting on system $\mathcal{S}$, before the syndrome is extracted.
In case of \emph{noisy syndrome extraction}, steps  $(1)$-$(3)$ may generate additional errors on systems $\mathcal{S}$ and $\mathcal{S}'$. We will analyze the impact of these errors, together with providing the details of Steane's error correction applied to either $X$ or $Z$ errors, on the two subsections below.

\smallskip We will consider an ancilla system  $\mathcal{S}' =  \{1', \dots, N'\}$ and subsets $\mathcal{Z}', \mathcal{I}', \mathcal{X}' \subseteq \mathcal{S}'$. They are the counterparts of the subsets  $\mathcal{Z}, \mathcal{I}, \mathcal{X} \subseteq \mathcal{S}$, in the sense that $i' \in \mathcal{Z}' \Leftrightarrow i \in \mathcal{Z}$, and similarly for  $\mathcal{I}'$ and $\mathcal{X}'$.

\paragraph*{$X$-Error Correction.} 

To extract the syndrome for $X$ errors, the ancilla system $\mathcal{S}' =   \mathcal{Z}'\cup \mathcal{I}' \cup \mathcal{X}'$ must be  prepared in a logical $X$ basis state. Usually, the ancilla state is taken to be the logical all-$\ket{+}$ state, obtained by encoding the all-$\ket{+}$ state on system $\mathcal{I}'$. Here, we consider a slightly more general logical $X$ basis state, as follows,
\begin{equation}
\ket{\widetilde {\oline{\bm{w'}}}}_{\mathcal{S}'} = Q_N ( \ket{\bm{u'}}_{\mathcal{Z}'} \otimes \ket{\oline{\bm{w'}}}_{\mathcal{I}'} \otimes \ket{\oline{\bm{v'}}}_{\mathcal{X}'}),
\end{equation}
 where $ \bm{u'}, \bm{w'},\bm{v'}$ are known. Note that frozen values $\bm{u'}$ and $\bm{v'}$  may be different from frozen values $\bm{u}$ and $\bm{v}$. The reason we consider the above logical state is that our preparation procedure for logical polar code states is measurement based, for which  $ \bm{u'}, \bm{w'},\bm{v'}$ are determined based on random outcomes of measurements therein.

\begin{figure}[!t]
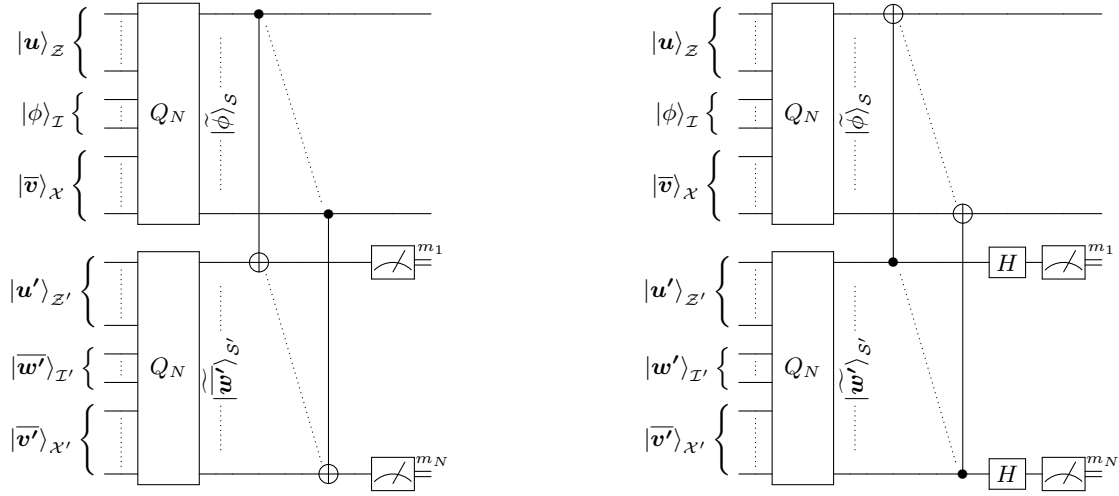

\begin{subfigure}{.48\linewidth}
\,\hfill \,\hfill\hspace*{6mm}\input{steane_Xerror}\hfill\,
\caption{$X$-error correction: $X$ errors are copied from the original  (top) to the ancilla (bottom) system, which is measured in $Z$ basis.}
\label{fig:steane_Xerror}
\end{subfigure}%
\hfill%
\begin{subfigure}{.48\linewidth}
\,\hfill\input{steane_Zerror}\hfill\,
\caption{$Z$-error correction: $Z$ errors are copied from the original  (top) to the ancilla (bottom) system, which is  measured in $X$ basis.}
\label{fig:steane_Zerror}
\end{subfigure}
\captionsetup{justification=centering}
\caption{Steane's error correction procedure.}
\label{fig:steane_error_correction}
\end{figure}

\smallskip Steane's $X$-error syndrome extraction procedure is depicted in Fig.~\ref{fig:steane_Xerror}. After preparing the state $\ket{\widetilde {\oline{\bm{w'}}}}_{\mathcal{S}'} $ on system ${\mathcal{S}'}$, the transversal $\cnot_{\mathcal{S} \to \mathcal{S}'}$ gate is applied on  corresponding qubits of systems $\mathcal{S}$ and  $\mathcal{S}'$. Then, each qubit in the ancilla system $\mathcal{S}'$ is measured in the Pauli $Z$ basis. We  denote by $\bm{m} = (m_1,\dots,m_N)$ the classical outputs of these measurements. 

\smallskip The following lemma gives the state of the $\mathcal{S}$ system, as well as the measurement result $\bm{m}$, after the syndrome extraction procedure. We consider errors $\bm{e}_X$ and $\bm{e'}_X$ that have happened on systems $\mathcal{S}$ and $\mathcal{S}'$, respectively. We  assume  that errors $\bm{e}_X$ and $\bm{e'}_X$  have happened before the transversal $\cnot_{\mathcal{S} \to \mathcal{S}'}$ gate is applied, while the transversal $\cnot_{\mathcal{S} \to \mathcal{S}'}$ gate, as well as the measurement operations, are error free.  We refer to these errors as \emph{preparation errors}. This assumption is made for simplicity only. Indeed, it is not too difficult to see that errors generated on the $\mathcal{S}'$ system, by either the transversal $\cnot_{\mathcal{S} \to \mathcal{S}'}$ gate or the measurement operations, can actually be incorporated to the error $\bm{e'}_X$. Errors generated by the transversal $\cnot_{\mathcal{S} \to \mathcal{S}'}$ gate on the $\mathcal{S}$ system go undetected, but they may be corrected during the next round of error correction.

\begin{lemma} \label{lem:steane}
Let $\bm{e}_X$ and $\bm{e'}_X$ be preparation errors that have happened on systems $\mathcal{S}$ and $\mathcal{S}'$, respectively. Then, after the Steane's $X$-error syndrome extraction, the state of the system $\mathcal{S}$ is given by
\begin{equation} \label{eq:steane_state}
X^{\bm{e}_X} \ket{\widetilde{Z^{\bm{w'}}\phi}}_\mathcal{S} =  X^{\bm{e}_X} Q_N (\ket{\bm{u}}_{\mathcal{Z}} \otimes Z^{\bm{w'}}\ket{\phi}_{\mathcal{I}} \otimes  \ket{\bm{\oline{v \oplus v'}}}_{\mathcal{X}}).
\end{equation}
Further, the measurement outcome is a noisy codeword of the classical polar code $\mathcal{P}(N, \cl{Z}, \bm{u} \oplus \bm{u'})$, with frozen set $\cl{Z}$ and frozen vector $\bm{u} \oplus \bm{u'}$, as follows,
\begin{equation} \label{eq:steane-syn}
\bm{m} = P_N( \bm{u} \oplus \bm{u'}, \bm{a'}, \bm{x'}) \oplus \bm{e}_X  \oplus \bm{e'}_X \in \{0,1\}^N, 
\end{equation}
where $\bm{a'} \in \{0, 1\}^{|\mathcal{I}|}$ and $\bm{x'} \in \{0, 1\}^{|\mathcal{X}|}$ are random vectors. 
\end{lemma}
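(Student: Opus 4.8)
\emph{Proof strategy for Lemma~\ref{lem:steane}.} The plan is to track the joint state of $\mathcal{S}$ and $\mathcal{S}'$ through the syndrome-extraction circuit by sliding every operation past the two encoders $Q_N$, thereby reducing the claim to an elementary computation on the \emph{unencoded} input registers. I would first write both blocks in pre-encoded form, $\ket{\widetilde{\phi}}_\mathcal{S} = Q_N\ket{\Phi}_\mathcal{S}$ with $\ket{\Phi} = \ket{\bm{u}}_\mathcal{Z}\otimes\ket{\phi}_\mathcal{I}\otimes\ket{\oline{\bm{v}}}_\mathcal{X}$, and $\ket{\widetilde{\oline{\bm{w'}}}}_{\mathcal{S}'} = Q_N\ket{\Phi'}_{\mathcal{S}'}$ with $\ket{\Phi'} = \ket{\bm{u'}}_{\mathcal{Z}'}\otimes\ket{\oline{\bm{w'}}}_{\mathcal{I}'}\otimes\ket{\oline{\bm{v'}}}_{\mathcal{X}'}$. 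Since $P_N^2 = I$ over $\mathbb{F}_2$, the transform $Q_N$ is its own inverse, and the sandwiching identity $Q_N X^{\bm{a}} Q_N = X^{P_N\bm{a}}$ established in the proof of Lemma~\ref{lem:stab-gen} rearranges to $X^{\bm{a}} Q_N = Q_N X^{P_N\bm{a}}$. This lets me slide the preparation errors $X^{\bm{e}_X}$ and $X^{\bm{e'}_X}$ from after the encoders onto the input registers, where they become $X^{P_N\bm{e}_X}$ and $X^{P_N\bm{e'}_X}$.

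The key structural step is the commutation $\cnot^{\mathrm{tr}}_{\mathcal{S}\to\mathcal{S}'}\,(Q_N\otimes Q_N) = (Q_N\otimes Q_N)\,\cnot^{\mathrm{tr}}_{\mathcal{S}\to\mathcal{S}'}$, where $\cnot^{\mathrm{tr}}$ denotes the transversal $\cnot$ between corresponding qubits of the two blocks. This holds because $Q_N$ is itself a circuit of $\cnot$ gates, and a transversal $\cnot$ between two copies of \emph{any} $\cnot$ circuit commutes with that circuit applied identically to both copies; I would check this for a single $\cnot$ generator and compose over the $\cnot$ layers that make up $Q_N$. Having moved $\cnot^{\mathrm{tr}}$ to the left of $Q_N\otimes Q_N$, I propagate the input-level $X$ errors through it: the transversal $\cnot$ copies $X$ errors from the control block $\mathcal{S}$ onto the target block $\mathcal{S}'$ and leaves $X$ errors on $\mathcal{S}'$ untouched, so the accumulated error becomes $X^{P_N\bm{e}_X}$ on $\mathcal{S}$ and $X^{P_N(\bm{e}_X\oplus\bm{e'}_X)}$ on $\mathcal{S}'$.

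It then remains to evaluate $\cnot^{\mathrm{tr}}_{\mathcal{S}\to\mathcal{S}'}(\ket{\Phi}\otimes\ket{\Phi'})$ position by position, following the partition $\mathcal{S} = \mathcal{Z}\cup\mathcal{I}\cup\mathcal{X}$. On $\mathcal{Z}$, where both registers are in the $Z$ basis, the $\cnot$ adds $\bm{u}$ into the ancilla, producing $\ket{\bm{u}\oplus\bm{u'}}_{\mathcal{Z}'}$ while leaving $\ket{\bm{u}}_\mathcal{Z}$ fixed; on $\mathcal{X}$, where both are in the $X$ basis, it adds the ancilla value into the data in the $X$ basis, producing $\ket{\oline{\bm{v}\oplus\bm{v'}}}_\mathcal{X}$ while leaving $\ket{\oline{\bm{v'}}}_{\mathcal{X}'}$ fixed; and on $\mathcal{I}$, where the ancilla sits in the $X$-basis state $\ket{\oline{\bm{w'}}}$, the $\cnot$ produces a phase kickback $Z^{\bm{w'}}$ onto $\ket{\phi}$ while leaving the ancilla fixed. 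Thus the data input becomes $\ket{\bm{u}}_\mathcal{Z}\otimes Z^{\bm{w'}}\ket{\phi}_\mathcal{I}\otimes\ket{\oline{\bm{v}\oplus\bm{v'}}}_\mathcal{X}$ and the ancilla input becomes $\ket{\Phi'_{\mathrm{new}}}\eqdef\ket{\bm{u}\oplus\bm{u'}}_{\mathcal{Z}'}\otimes\ket{\oline{\bm{w'}}}_{\mathcal{I}'}\otimes\ket{\oline{\bm{v'}}}_{\mathcal{X}'}$. Re-encoding the data register with $Q_N$ reproduces exactly $\ket{\widetilde{Z^{\bm{w'}}\phi}}_\mathcal{S}$, and pushing the residual errors back out through $Q_N\otimes Q_N$ (again via $Q_N X^{P_N\bm{a}} = X^{\bm{a}}Q_N$) shows that the pre-measurement state factorizes as $X^{\bm{e}_X}\ket{\widetilde{Z^{\bm{w'}}\phi}}_\mathcal{S}\otimes X^{\bm{e}_X\oplus\bm{e'}_X}Q_N\ket{\Phi'_{\mathrm{new}}}_{\mathcal{S}'}$; since a measurement on $\mathcal{S}'$ cannot disturb the factorized $\mathcal{S}$ component, this already gives \eqref{eq:steane_state}.

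For the outcome $\bm{m}$ I expand the ancilla factor $Q_N\ket{\Phi'_{\mathrm{new}}}_{\mathcal{S}'}$: the $X$-basis components on $\mathcal{I}'$ and $\mathcal{X}'$ become uniform superpositions over computational labels $\bm{a'}\in\{0,1\}^{|\mathcal{I}|}$ and $\bm{x'}\in\{0,1\}^{|\mathcal{X}|}$, the encoder acts as $P_N$ on these $Z$-basis labels, and the leftover $X^{\bm{e}_X\oplus\bm{e'}_X}$ shifts the measured bitstring, giving $\bm{m} = P_N(\bm{u}\oplus\bm{u'},\bm{a'},\bm{x'})\oplus\bm{e}_X\oplus\bm{e'}_X$ with $\bm{a'},\bm{x'}$ uniformly random, which is \eqref{eq:steane-syn}. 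The main obstacle I anticipate is establishing the commutation of $\cnot^{\mathrm{tr}}$ with $Q_N\otimes Q_N$ cleanly and then bookkeeping the mixed-basis action of the $\cnot$ on the three blocks correctly — in particular, recognizing that the frozen $X$-basis information values $\bm{w'}$ re-enter the data block as the logical operator $Z^{\bm{w'}}$ rather than as an irrelevant global phase.
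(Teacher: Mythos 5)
Your proposal is correct, and it reaches the same intermediate factorized state as the paper's Eq.~(\ref{eq:theta-f}) by a genuinely different route. The paper's proof is a direct computation: it expands $\ket{\phi}_{\cl{I}}$ and the $X$-basis frozen states of both blocks in the computational basis, applies the transversal $\cnot_{\mathcal{S}\to\mathcal{S}'}$ to the resulting double sum, and factorizes the joint state by the change of variables $\bm{a'}\leftarrow\bm{a'}\oplus\bm{a}$, $\bm{x'}\leftarrow\bm{x'}\oplus\bm{x}$, from which both (\ref{eq:steane_state}) and (\ref{eq:steane-syn}) are read off at once. You instead exploit the fact that $Q_N$ is itself a $\cnot$ circuit and $P_N^2=I$ over $\mathbb{F}_2$ (a fact the paper also uses, in the proof of Lemma~1), so the transversal $\cnot$ commutes with $Q_N\otimes Q_N$ and all errors can be slid to the unencoded level; the whole computation then reduces to the elementary block-by-block action of $\cnot$ on $\mathcal{Z}$, $\mathcal{I}$, $\mathcal{X}$ in their respective bases. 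Your approach is more structural: it makes transparent \emph{why} the logical $Z^{\bm{w'}}$ appears (phase kickback on $\mathcal{I}$) and why the frozen vector on $\mathcal{X}$ updates to $\bm{v}\oplus\bm{v'}$, and it avoids the double-sum bookkeeping. The cost is that you must separately establish the commutation identity $\cnot^{\mathrm{tr}}_{\mathcal{S}\to\mathcal{S}'}(Q_N\otimes Q_N)=(Q_N\otimes Q_N)\cnot^{\mathrm{tr}}_{\mathcal{S}\to\mathcal{S}'}$ (easy, as you note, by checking a single $\cnot$ generator and composing), and you still need one final computational-basis expansion of the ancilla factor to obtain the distribution of $\bm{m}$ — a step the paper gets for free from its global expansion. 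Both proofs are complete and rigorous; yours would serve equally well, and arguably generalizes more cleanly to other linear-circuit encoders.
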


\begin{proof}
Let $\displaystyle \ket{\phi}_{\cl{I}} = \sum_{\bm{a} \in \{0, 1\}^{|\mathcal{I}|}} \phi_{\bm{a}} \ket{\bm{a}}_{\cl{I}}$. Note also that~$\ket{\overline{\bm{v}}}_{\cl{X}} = \sum_{\bm{x} \in \{0, 1\}^{|\mathcal{X}|}} (-1)^{\bm{v}\cdot \bm{x}} \ket{\bm{x}}_{\cl{X}}$ (up to a normalization factor, which will be omitted in the sequel). Then, the noisy logical state $X^{\bm{e}_X} \ket{\widetilde{\phi}}_\mathcal{S}$ can be written as
\begin{align}
X^{\bm{e}_X} \ket{\widetilde{\phi}}_\mathcal{S} &= X^{\bm{e}_X}  Q_N\left( \ket{\bm{u}}_{\cl{Z}} \otimes \ket{\phi}_{\cl{I}} \otimes \ket{\overline{\bm{v}}}_{\cl{X}} \right) \\
 &= \sum_{\bm{a}, \bm{x}} \phi_{\bm{a}} (-1)^{\bm{v}\cdot \bm{x}} \ket{P_N(\bm{u},\bm{a},\bm{x}) \oplus \bm{e}_X }.
\end{align}
Similarly, the noisy ancilla state can be written as
\begin{align}
X^{\bm{e'}_X}\ket{\widetilde {\oline{\bm{w'}}}}_{\mathcal{S}'} &= X^{\bm{e'}_X}Q_N (  \ket{\bm{u'}}_{\mathcal{Z}'} \otimes \ket{\oline{\bm{w'}}}_{\mathcal{I}'}  \otimes \ket{\oline{\bm{v'}}}_{\mathcal{X}'}) \\
& = \sum_{\bm{a'}, \bm{x'}} (-1)^{\bm{w'}\cdot\bm{a'}  + \bm{v'}\cdot \bm{x'}} \ket{P_N(\bm{u'}, \bm{a'}, \bm{x'}) \oplus \bm{e}'_X }.
\end{align}
After the transverse  $\cnot_{\mathcal{S} \to \mathcal{S}'}$ gate is applied, we get the following state on the bipartite $\mathcal{S} \mathcal{S}'$ system,
\begin{align}
\ket{\theta}_{\mathcal{S} \mathcal{S}'} &:= \cnot_{\mathcal{S} \to \mathcal{S}'} 
\left( X^{\bm{e}_X} \ket{\widetilde{\phi}}_\mathcal{S} \otimes  X^{\bm{e'}_X}\ket{\widetilde {\oline{\bm{w'}}}}_{\mathcal{S}'} \right)\\
&\hspace*{1mm}= \sum_{\bm{a}, \bm{x}} \phi_{\bm{a}} (-1)^{\bm{v}\cdot \bm{x}} \ket{P_N(\bm{u}, \bm{a},  \bm{x}) \oplus \bm{e}_X} \otimes \sum_{\bm{a'}, \bm{x'}} (-1)^{\bm{w'}\cdot\bm{a'}  + \bm{v'}\cdot \bm{x'}}   
\ket{P_N(\bm{u}\oplus\bm{u'}, \bm{a}\oplus \bm{a'}, \bm{x}\oplus \bm{x'}) \oplus \bm{e}_X \oplus \bm{e'}_X }  \\
&\hspace*{1mm}= \sum_{\bm{a}, \bm{x}} \phi_{\bm{a}} (-1)^{\bm{w'}\cdot\bm{a} + (\bm{v}\oplus \bm{v'})\cdot \bm{x}} \ket{P_N(\bm{u}, \bm{a},  \bm{x}) \oplus \bm{e}_x}  
\otimes \sum_{\bm{a'}, \bm{x'}} (-1)^{\bm{w'}\cdot\bm{a'}  + \bm{v'}\cdot \bm{x'}}  \ket{P_N(\bm{u}\oplus\bm{u'},  \bm{a'},  \bm{x'}) \oplus \bm{e}_X \oplus \bm{e'}_X }.
\end{align}
where for the last equality, we use  variable changes  $\bm{a'} \leftarrow  \bm{a'} \oplus \bm{a}$, and $ \bm{x'} \leftarrow \bm{x'} \oplus \bm{x}$. It can be seen that the error on the system $\mathcal{S}$ has propagated to the system $\mathcal{S}'$, and $\ket{\theta}_{\mathcal{S} \mathcal{S}'}$ is a product state that can be rewritten as
\begin{align}
\ket{\theta}_{\mathcal{S} \mathcal{S}'} &= X^{\bm{e}_X} Q_N (\ket{\bm{u}}_{\mathcal{Z}} \otimes Z^{\bm{w'}}\ket{\phi}_{\mathcal{I}} \otimes  \ket{\bm{\oline{v \oplus v'}}}_{\mathcal{X}})  \otimes X^{\bm{e}_X \oplus \bm{e'}_X}Q_N (\ket{\bm{u'} \oplus \bm{u}}_{\mathcal{Z}'} \otimes \ket{\oline{\bm{w'}}}_{\mathcal{I}'} \otimes  \ket{\oline{\bm{v'}}}_{\mathcal{X}'}). \label{eq:theta-f}
\end{align}
Hence, the partial state of the system $\mathcal{S}$  is the same as in (\ref{eq:steane_state}). Further, measuring the qubits of the ancilla system in the Pauli $Z$ basis, we get $\bm{m} = P_N( \bm{u} \oplus \bm{u'}, \bm{a'}, \bm{x'}) \oplus \bm{e}_X  \oplus \bm{e'}_X \in \{0,1\}^N$, 
 for some random vectors $\bm{a'} \in \{0, 1\}^{|\mathcal{I}|}$ and $\bm{x'} \in \{0, 1\}^{|\mathcal{X}|}$. 
\end{proof}
\smallskip Two observations are in place here.

\smallskip First, from~(\ref{eq:steane_state}), it follows that the frozen vector corresponding to $\mathcal{X} \subset \mathcal{S}$ (original system) has changed to $\bm{v} \oplus \bm{v'}$, after the Steane's procedure. Further, the logical $Z$ operator corresponding to $Z^{\bm{w'}}$ gets applied on $\mathcal{S}$. However, since we know $\bm{w'}$, we can reverse this logical operation using (\ref{eq:log-z}). Hence, the logical information encoded in $\mathcal{S}$ has not been altered, due to the syndrome extraction.

\smallskip The second observation concerns the SC decoding, which takes as input the measurement outcome $\bm{m} = P_N( \bm{u} \oplus \bm{u'}, \bm{a'}, \bm{x'}) \oplus \bm{e}_X  \oplus \bm{e'}_X $, and the frozen vector $\bm{u} \oplus \bm{u'}$. It  produces an estimate of the information vector $(\bm{a'}, \bm{x'})$, from which we can produce an estimate of the total error $\bm{e}_X  \oplus \bm{e'}_X$.  The vector $\bm{a'}$ may be correctly estimated, owning to the fact that it corresponds to good virtual channels. Some of the $\bm{x'}$ positions may be incorrectly estimated, but this does not matter, as the induced logical error corresponds to an $X$-type stabilizer operator, acting trivially on the code space.  


\smallskip  Hence, assuming the vector $\bm{a'}$ is decoded correctly, we also get the correct value of the total error  $\bm{e}_X  \oplus \bm{e'}_X$ (up to an $X$-type stabilizer operator). We then correct the $\cl{S}$ system by applying $X^{\bm{e}_X  \oplus \bm{e'}_X}$, which will leave the error $X^{\bm{e'}_X}$ (original error on $\cl{S}'$) on $\cl{S}$ after correction. This leftover error, may hopefully be corrected in the next round of correction, where we may similarly add another error from the ancilla system. However, note that the advantage of error correction is that it does not allow errors to accumulate and the only left error on the encoded state is due to the last round of error correction (which may be  corrected, when the encoded logical state is eventually measured). Hence, we may stabilize logical qubits against noise, by doing error correction repeatedly.

It is worth noticing that for topological (or some families of quantum LDPC codes),  fault tolerant error correction needs the decoding operation to be applied on a time window, composed of several consecutive syndrome extractions, e.g., \cite{fowler2012surface}. For Steane's fault tolerant error correction, decoding is simply applied on each extracted ``syndrome'' (recall the ``syndrome'' in this case is actually a noisy codeword).

\paragraph*{$Z$-Error Correction.}  The decoding of $Z$ errors can be done similarly to the case of $X$ errors as follows. To extract the syndrome for $Z$ errors (see Fig.~\ref{fig:steane_Zerror}), one needs an ancilla system $\mathcal{S}'$ prepared in a logical $Z$ basis  state,
\begin{equation}
\ket{\widetilde {\bm{w'}}}_{\mathcal{S}'} = Q_N ( \ket{\bm{u'}}_{\mathcal{Z}'} \otimes \ket{\bm{w'}}_{\mathcal{I}'} \otimes \ket{\oline{\bm{v'}}}_{\mathcal{X}'}),
\end{equation}

\smallskip After preparing the  state $\ket{\widetilde {\bm{w'}}}_{\mathcal{S}'} $ on ancilla system ${\mathcal{S}'}$, the transversal $\cnot_{\mathcal{S}' \to \mathcal{S}}$ gate is applied on corresponding qubits of systems $\mathcal{S}$ and  $\mathcal{S}'$. Then, each qubit in the ancilla system $\mathcal{S}'$ is measured in the Pauli $X$ basis. The measurement output $\bm{m}$ is a noisy codeword of the classical polar code $\mathcal{P}(N, \pi(\cl{X}), \pi(\bm{v}))$, induced in $X$ basis (Fig.~\ref{fig:qpolar_N8_Xbasis}).

\begin{lemma} \label{lem:steane_Z}
Let $\bm{e}_Z$ and $\bm{e'}_Z$ be preparation errors that has happened on systems $\mathcal{S}$ and $\mathcal{S}'$, respectively. Then, after the Steane's $Z$-error syndrome extraction, the state of the system $\mathcal{S}$ is given by
\begin{equation} \label{eq:steane_state_Z}
Z^{\bm{e}_Z} \ket{\widetilde{X^{\bm{w'}}\phi}}_\mathcal{S} =  Z^{\bm{e}_Z} Q_N (\ket{\bm{u} \oplus  \bm{u'}}_{\mathcal{Z}} \otimes X^{\bm{w'}}\ket{\phi}_{\mathcal{I}} \otimes  \ket{\bm{\oline{v}}}_{\mathcal{X}}).
\end{equation}
Further, the measurement outcome is a noisy codeword of the classical polar code $\mathcal{P}(N, \pi(\cl{X}), \pi(\bm{v}))$, with frozen set $\cl{X}$ and frozen vector $ \pi(\bm{v} \oplus \bm{v'})$, as follows,
\begin{equation} \label{eq:steane-syn-Z}
\bm{m} = P_N^\top( \bm{z'}, \bm{a'}, \bm{v} \oplus \bm{v'}) \oplus \bm{e}_Z \oplus \bm{e'}_Z,
\end{equation}
where $\bm{a'} \in \{0, 1\}^{|\mathcal{I}|}$ and $\bm{x'} \in \{0, 1\}^{|\mathcal{X}|}$ are random vectors. 
\end{lemma}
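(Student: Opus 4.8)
The plan is to mirror the computation in the proof of Lemma~\ref{lem:steane}, but carried out in the Pauli~$X$ (Hadamard) basis rather than the computational basis, since $Z$ errors act diagonally there and the ancilla is measured in the $X$ basis. Concretely, I would first expand each of the three tensor factors of $\ket{\widetilde{\phi}}_\mathcal{S}$ in the $X$ basis: the frozen register becomes $\ket{\bm{u}}_{\mathcal{Z}}=\sum_{\bm{z}}(-1)^{\bm{u}\cdot\bm{z}}\ket{\oline{\bm{z}}}_{\mathcal{Z}}$, the information register is written as $\ket{\phi}_{\mathcal{I}}=\sum_{\bm{b}}\psi_{\bm{b}}\ket{\oline{\bm{b}}}_{\mathcal{I}}$, and $\ket{\oline{\bm{v}}}_{\mathcal{X}}$ is already an $X$-basis state. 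Using the dual encoding rule~(\ref{eq:q-pol-enc-z}), $Q_N\ket{\oline{\bm{w}}}=\ket{\oline{P_N^\top\bm{w}}}$, this yields $\ket{\widetilde{\phi}}_\mathcal{S}=\sum_{\bm{z},\bm{b}}(-1)^{\bm{u}\cdot\bm{z}}\psi_{\bm{b}}\ket{\oline{P_N^\top(\bm{z},\bm{b},\bm{v})}}$, the exact dual of the expansion used for the $X$-error case. The preparation error $Z^{\bm{e}_Z}$ then simply translates the $X$-basis label by $\bm{e}_Z$ (since $Z\ket{\oline{b}}=(-1)^b\ket{\oline{b}}$ flips $\ket{\pm}$), and the ancilla state $\ket{\widetilde{\bm{w'}}}_{\mathcal{S}'}$ is expanded identically, its computational info label $\ket{\bm{w'}}_{\mathcal{I}'}$ producing phases $(-1)^{\bm{w'}\cdot\bm{b'}}$ that will become a logical $X^{\bm{w'}}$.

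Next I would apply the transversal $\cnot_{\mathcal{S}'\to\mathcal{S}}$ and track its action on $X$-basis labels. The key fact is that a CNOT propagates $Z$ errors from target to control, opposite to $X$ errors, so in the $X$-basis representation $\cnot_{\mathcal{S}'\to\mathcal{S}}$ sends $\ket{\oline{\bm{s}}}_{\mathcal{S}}\ket{\oline{\bm{s}'}}_{\mathcal{S}'}\mapsto\ket{\oline{\bm{s}}}_{\mathcal{S}}\ket{\oline{\bm{s}\oplus\bm{s}'}}_{\mathcal{S}'}$; this is precisely why the circuit reverses the CNOT direction relative to Fig.~\ref{fig:steane_Xerror}. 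After the gate, the variable changes $\bm{z'}\leftarrow\bm{z'}\oplus\bm{z}$ and $\bm{b'}\leftarrow\bm{b'}\oplus\bm{b}$ (the duals of the changes in the proof of Lemma~\ref{lem:steane}) factorize the bipartite state into a product. Reading off the $\mathcal{S}$ factor and collapsing the sums back into frozen and logical kets gives~(\ref{eq:steane_state_Z}): the frozen vector on $\mathcal{Z}$ shifts to $\bm{u}\oplus\bm{u'}$, a logical $X^{\bm{w'}}$ acts on $\ket{\phi}$, and the $\mathcal{X}$-frozen vector is unchanged, which is the dual of the $\mathcal{X}$-shift and logical $Z^{\bm{w'}}$ obtained in~(\ref{eq:theta-f}). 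Finally, measuring $\mathcal{S}'$ in the $X$ basis reads out the label $P_N^\top(\bm{z'},\bm{a'},\bm{v}\oplus\bm{v'})\oplus\bm{e}_Z\oplus\bm{e'}_Z$, a noisy codeword of the $X$-basis-induced code $\mathcal{P}(N,\pi(\cl{X}),\pi(\bm{v}))$, where $\bm{z'}\in\{0,1\}^{|\mathcal{Z}|}$ and $\bm{a'}\in\{0,1\}^{|\mathcal{I}|}$ are the random vectors; this establishes~(\ref{eq:steane-syn-Z}).

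Alternatively, and more economically, I would obtain the whole statement as a corollary of Lemma~\ref{lem:steane} by transversal Hadamard conjugation: conjugating the entire $X$-error circuit by $H^{\otimes 2N}$ exchanges $X\leftrightarrow Z$, exchanges the roles of $\mathcal{Z}$ and $\mathcal{X}$, replaces $P_N$ by $P_N^\top$, and---crucially---turns $\cnot_{\mathcal{S}\to\mathcal{S}'}$ into $\cnot_{\mathcal{S}'\to\mathcal{S}}$, mapping the gadget of Fig.~\ref{fig:steane_Xerror} exactly onto that of Fig.~\ref{fig:steane_Zerror}. The main point requiring care in either route is the bookkeeping of CNOT error propagation: one must keep straight that $Z$ errors travel from target to control while $X$ errors travel from control to target, and confirm that the Hadamard conjugation indeed reverses the CNOT orientation so that the two circuits coincide. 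Once this is pinned down, the algebra is the verbatim dual of the $X$-error proof, with the random vectors $\bm{z'},\bm{a'}$ playing the roles that $\bm{a'},\bm{x'}$ played there.
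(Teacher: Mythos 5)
Your proposal is correct and follows exactly the route the paper takes: the paper's proof of this lemma is the single remark that it is ``similar to that of Lemma~\ref{lem:steane}, by expanding quantum states of systems $\mathcal{S}$ and $\mathcal{S}'$ in the Pauli $X$ basis,'' and your first argument is precisely that dual computation carried out in full (with the correct reversed CNOT action on $X$-basis labels, the dual variable changes, and the correct identification of $\bm{z'},\bm{a'}$ as the random vectors). One small slip in a parenthetical: $Z\ket{\oline{b}}$ is not $(-1)^b\ket{\oline{b}}$ (that is the action of $X$ on $\ket{\oline{b}}$) but rather $\ket{\oline{b\oplus 1}}$ --- which is in fact the label-translation property you correctly use, so the conclusion stands.
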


\smallskip The proof of Lemma \ref{lem:steane_Z} is similar to that of Lemma \ref{lem:steane}, by expanding quantum states of systems $\mathcal{S}$ and $\mathcal{S}'$ in the Pauli $X$ basis. Finally, based on the frozen vector $\bm{v} \oplus \bm{v'}$ and the noisy codeword $\bm{m}$ in (\ref{eq:steane-syn-Z}), the SC decoder generates an estimate of $\bm{a'}$, and in turn we can obtain an estimate of the error $\bm{e}_Z \oplus \bm{e'}_Z$ (up to a $Z$-type stabilizer operator).

\section{$\pone$ Codes: Quantum Polar Codes Encoding One Qubit} 
\label{sec:p-theorem-1}

\subsection{Shor $\pone$ Codes (Proof of Theorem $1$ From the Main Text)}

To prove Theorem $1$, we will use the decomposition  $Q_{N} = (I_{2^k} \otimes Q_{2^{n-k}})(Q_{2^k} \otimes I_{2^{n-k}})$, for $ 0\leq k \leq n$. This decomposition is illustrated in Fig.~\ref{fig:qpolar_decomposition}, where we  have  $2^{n-k}$ parallel $Q_{2^k}$ blocks, followed by $2^{k}$ parallel  $Q_{2^{n-k}}$ blocks. If one considers the quantum system $\cl{S}$ as a  vector of $N$ qubits, the decomposition illustrated in Fig.~\ref{fig:qpolar_decomposition} is equivalent to reshaping $\cl{S}$ as a $2^k \times 2^{n-k}$ matrix of qubits, with columns filled in by consecutive qubits from the original vector, then applying $Q_{2^k}$ on each column, and $Q_{2^{n-k}}$ on each row.

For the logical state $\ket{\widetilde{0}}_\mathcal{S}$, the first $i=2^k$ inputs of the quantum polar transform $Q_N$ are equal to $\ket{0}$, while the remaining $2^n - 2^k$ inputs are equal to $\ket{+}$ (see Fig.~\ref{fig:qpolar_decomposition}). Hence, each of the  $2^{n-k}$ parallel $Q_{2^k}$ unitaries in  Fig.~\ref{fig:qpolar_decomposition} acts trivially on its input state. Therefore, the input of the $r$-th $Q_{2^{n-k}}$ unitary, where $1 \leq r \leq 2^k$, is the quantum state $\ket{0}_{r,1} \otimes_{c=2}^{2^{n-k}} \ket{+}_{r,c}$. It follows that (omitting normalization factors),
\begin{align}
&\ket{\widetilde{0}}_\mathcal{S} 
   = \otimes_{r=1}^{2^k} Q_{2^n-k} \left( \ket{0}_{r,1} \otimes_{c=2}^{2^{n-k}} \ket{+}_{r,c} \right) \\
   &= \otimes_{r=1}^{2^k} Q_{2^n-k} \left( (\ket{+}_{r,1} + \ket{-}_{r,1}) \otimes_{c=2}^{2^{n-k}} \ket{+}_{r,c} \right) \\
   &= \otimes_{r=1}^{2^k} \Big( Q_{2^n-k} \left( \otimes_{c=1}^{2^{n-k}} \ket{+}_{r,c} \right) + Q_{2^n-k} \left( \ket{-}_{r,1} \otimes_{c=2}^{2^{n-k}} \ket{+}_{r,c} \right) \Big) \\
   &= \otimes_{r=1}^{2^k} \left( \otimes_{c=1}^{2^{n-k}} \ket{+}_{r,c} + \otimes_{c=1}^{2^{n-k}} \ket{-}_{r,c} \right)
\end{align}
For the logical state $\ket{\widetilde{1}}_\mathcal{S}$, the inputs of $Q_N$ is $\left( \ket{1}_{r,1} \otimes_{c=2}^{2^{n-k}} \ket{+}_{r,c} \right)$. Hence, using $\ket{1}_{r,1} = \ket{+}_{r,1} - \ket{-}_{r,1}$, it follows similarly that $\ket{\widetilde{1}}_\mathcal{S} =  \otimes_{r=1}^{2^k} \left( \otimes_{c=1}^{2^{n-k}} \ket{+}_{r,c} - \otimes_{c=1}^{2^{n-k}} \ket{-}_{r,c} \right)$.

\begin{figure}[!b]
\,\hfill\input{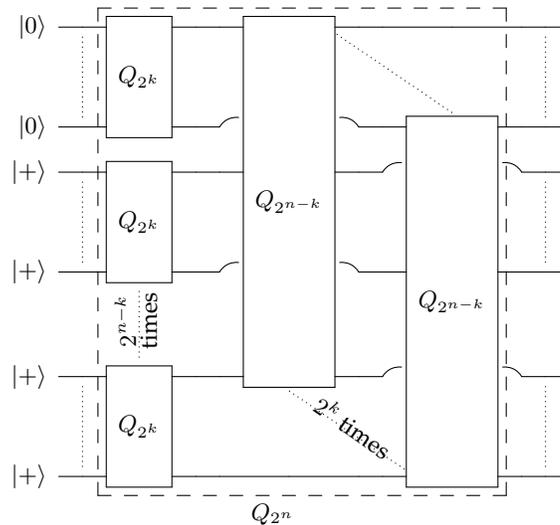}\hfill\,
\caption{ \small Quantum polar transform decomposition, using  $ Q_{2^n} = (I_{2^k} \otimes Q_{2^{n-k}})(Q_{2^k} \otimes I_{2^{n-k}})$. Bent wires go under the blocks they cross. In case input qubits are prepared as shown on the left, the encoded  state is the logical $\ket{0}$ state of a Shor code.}
\label{fig:qpolar_decomposition}
\end{figure}


\subsection{Construction of General $\pone$ Codes} 
\makeatletter
\def\@currentlabel{\thesubsection}
\makeatother
\label{sec:construction-pone}



 
In this section, we provide more details about the construction of $\pone$ codes, and further numerically estimate its performance on Pauli and erasure channels.

\smallskip For $\pone$ codes, SC decoder only needs to decode the virtual channels corresponding to the information position $i$ for both $X$ and $Z$ error channels $W_Z$ and $W_X$, respectively. Let $P_e\left(W_Z^{(i)}\right)$ and $P_e\left(W_X^{(\pi(i))}\right)$ denote the error probability of the respective virtual channels, corresponding to the information position $i$. Recall that the error probability of a classical channel, is the probability of the maximum a posteriori estimate of the channel input, conditional on the observed channel output, not being equal to the actual input. Hence, the logical error rate of the $\pone$ code, with respect to the information position $i$, is given by
\begin{equation} \label{eq:q-logical-r}
	 P_e^L(i) = 1-\left( 1-P_e\left(W_Z^{(i)}\right) \right) \left(1-P_e\left(W_X^{(\pi(i))}\right) \right).
\end{equation}
The information position $i$ should be chosen so as to minimize the  corresponding logical error rate in~(\ref{eq:q-logical-r}). Precisely, we have
\begin{align}
 i &= \argmin_{j =  1,\dots,N} P_e^L(j), \text{ for } \pone \text{ codes}. \label{eq:i-for-pone}\\
 i &= \argmin_{j = 2^0,\dots, 2^n} P_e^L(j), \text{ for Shor-}\pone \text{ codes}. \label{eq:i-for-shor}
\end{align}

\subsection{Numerical Results} 
Here, we provide numerical results for the construction of $\pone$ and Shor-$\pone$ codes, for the quantum depolarizing and  quantum erasure channels.

\smallskip The quantum depolarizing channel with physical error probability $p$, is a Pauli channel as in~(\ref{eq:pauli_chanel}), with $P_I = 1-p$, and $p_X = p_Y = p_Z = p/3$. We use density evolution~\cite{tal2013construct} to estimate the error probability of virtual channels, \emph{i.e.}, $P_e\left(W_Z^{(j)}\right)$ and $P_e\left(W_X^{(\pi(j))}\right)$, $j=1,\dots,N$. The information positions for the $\pone$ and Shor-$\pone$ codes are then determined according to~(\ref{eq:i-for-pone}) and~(\ref{eq:i-for-shor}). Moreover, we consider the two constructions (or decoding strategies) given in Section \ref{sec:construction-css}, that is, either using or ignoring the correlations between $X$ and $Z$ errors. Here, ignoring correlations between $X$ and $Z$ errors may seem unfounded. We will later provide the rationale for this in Section \ref{subsec:prep-sim}, under the paragraph ``Prepared Codes''.  


\smallskip  The quantum erasure channel erases the input qubit, with some probability $\varepsilon$, or transmits it perfectly, with probability $1-\varepsilon$. When a qubit is erased, it is replaced by a totally mixed state. Further, the channel also outputs a classical flag, which indicates whether the qubit  has been erased ($\ket{1}_E$) or not ($\ket{0}_E$). Hence, it can be represented as a quantum operation as follows,
\begin{equation}
\mathcal{W}_E(\rho) = (1- \varepsilon) \ket{0} \bra{0}_E \otimes \rho + \varepsilon \ket{1} \bra{1}_E \otimes \frac{\ident}{2}.
\end{equation}
It is easily seen that $\mathcal{W}_E$ acts as a classical erasure channel with erasure probability $\varepsilon$ in both Pauli $Z$ and Pauli $X$ basis. Hence, the induced channels $W_X$ and $W_Z$ are classical erasure channels, with erasure probability $\varepsilon$. The erasure probability of virtual channels ($P_e\left(W_Z^{(j)}\right)$ and $P_e\left(W_X^{(\pi(j))}\right)$, $j=1,\dots,N$) can be computed analytically~\cite{arikan2009channel}. The information positions for the $\pone$ and Shor-$\pone$ codes are then determined according to~(\ref{eq:i-for-pone}) and~(\ref{eq:i-for-shor}). For this channel, the two construction strategies (using or ignoring correlations) are easily seen to be equivalent. 

\smallskip Considering the quantum depolarizing channel, Fig.~\ref{fig:depolar_lerr_rates} shows the logical error rate $P_e^L(i)$ vs. the physical error probability $p$, for the $\pone$ and Shor-$\pone$ codes, with information position determined according to  to~(\ref{eq:i-for-pone}) and~(\ref{eq:i-for-shor}). Note that for fixed codelength $N=2^n$, the information position may vary depending on the physical error probability value. We consider the two decoding strategies mentioned above, namely, either using or ignoring the correlations between $X$ and $Z$ errors. It can be observed that using correlations yield (slightly) better decoding performance. Moreover, in both cases, we observe that the logical error rate of the  $\pone$ code is in general lower than that of the  Shor code, and the gap is increasing with increasing codelength (\emph{i.e.}, number of polarization steps $n$). We emphasize that the gap in the decoding performance is due to the channel polarization phenomenon and not to the minimum distance ($\pone$ and Shor-$\pone$ codes have actually the same minimum distance, see below).

\begin{figure}[!t]
\centering
\begin{subfigure}{\textwidth}
\includegraphics[width=.49\textwidth]{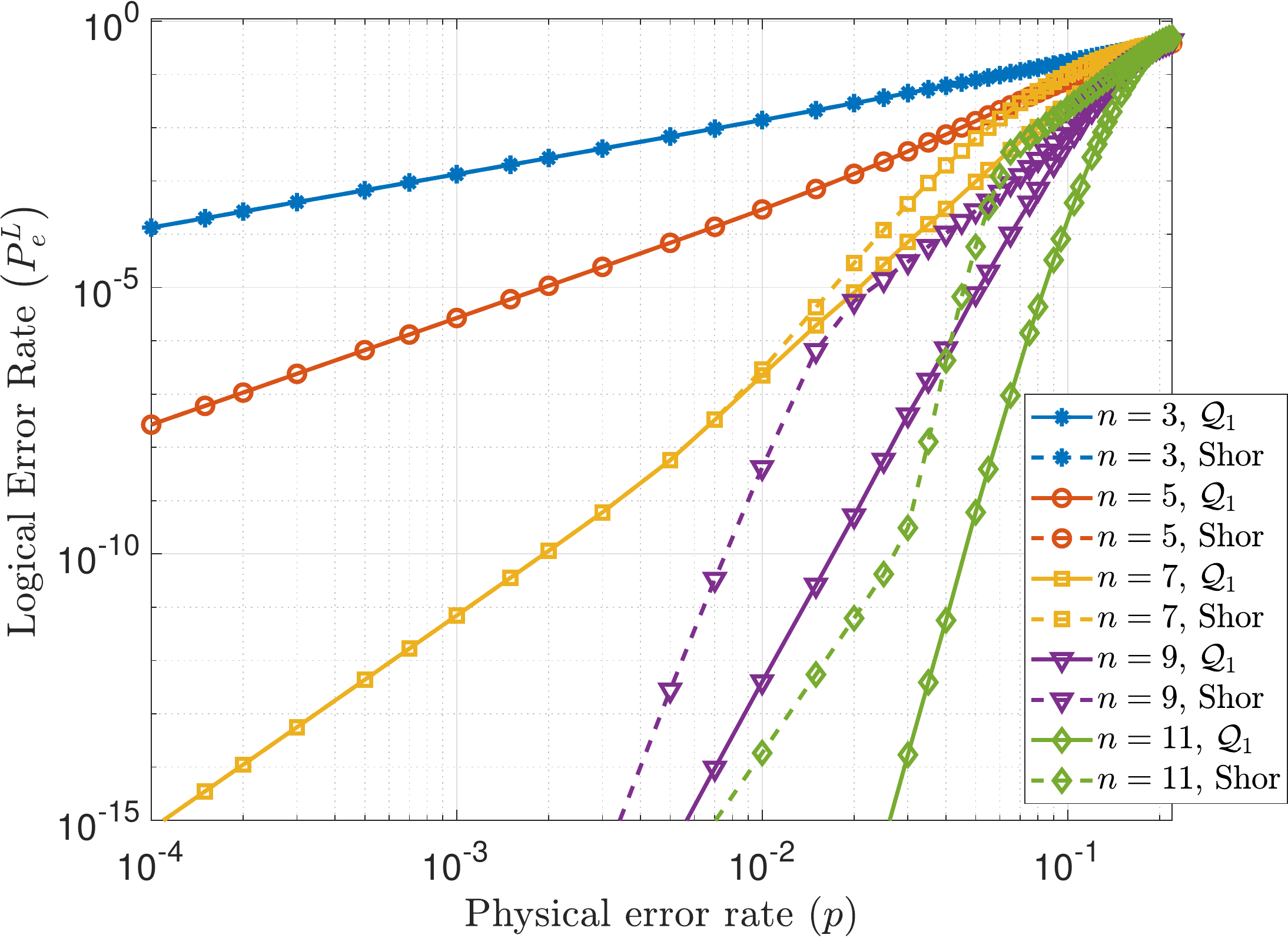}\hfill%
\includegraphics[width=.49\textwidth]{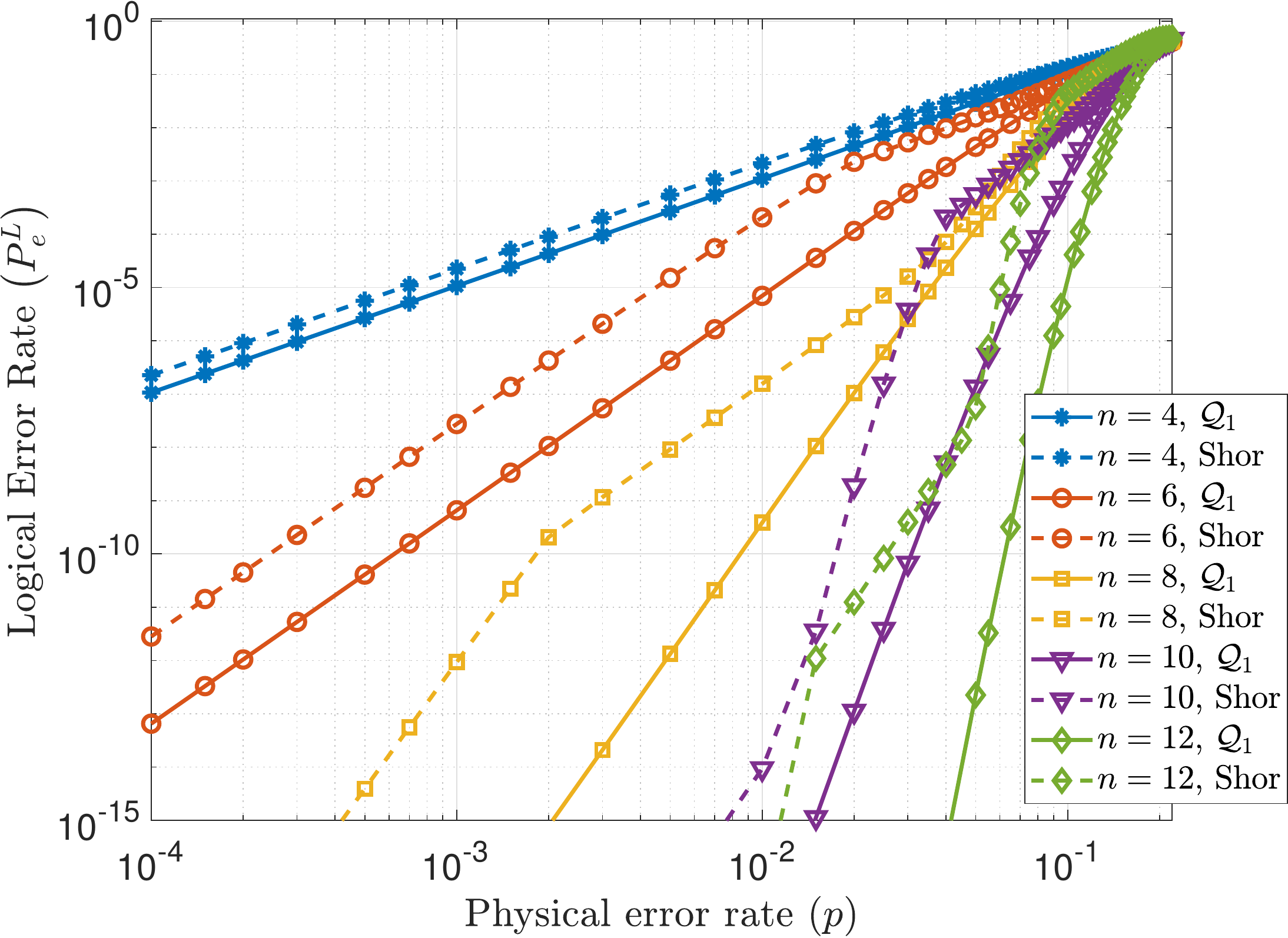}
\caption{Using correlations between $X$ and $Z$ errors. Left: $n$ odd (for $n=3,5,7$,  $\pone$ and Shor curves virtually coincide). Right: $n$ even.}
\label{fig:depolar_corre_xz_dec}
\end{subfigure}

\vspace*{3mm}
\begin{subfigure}{\textwidth}
\includegraphics[width=.49\textwidth]{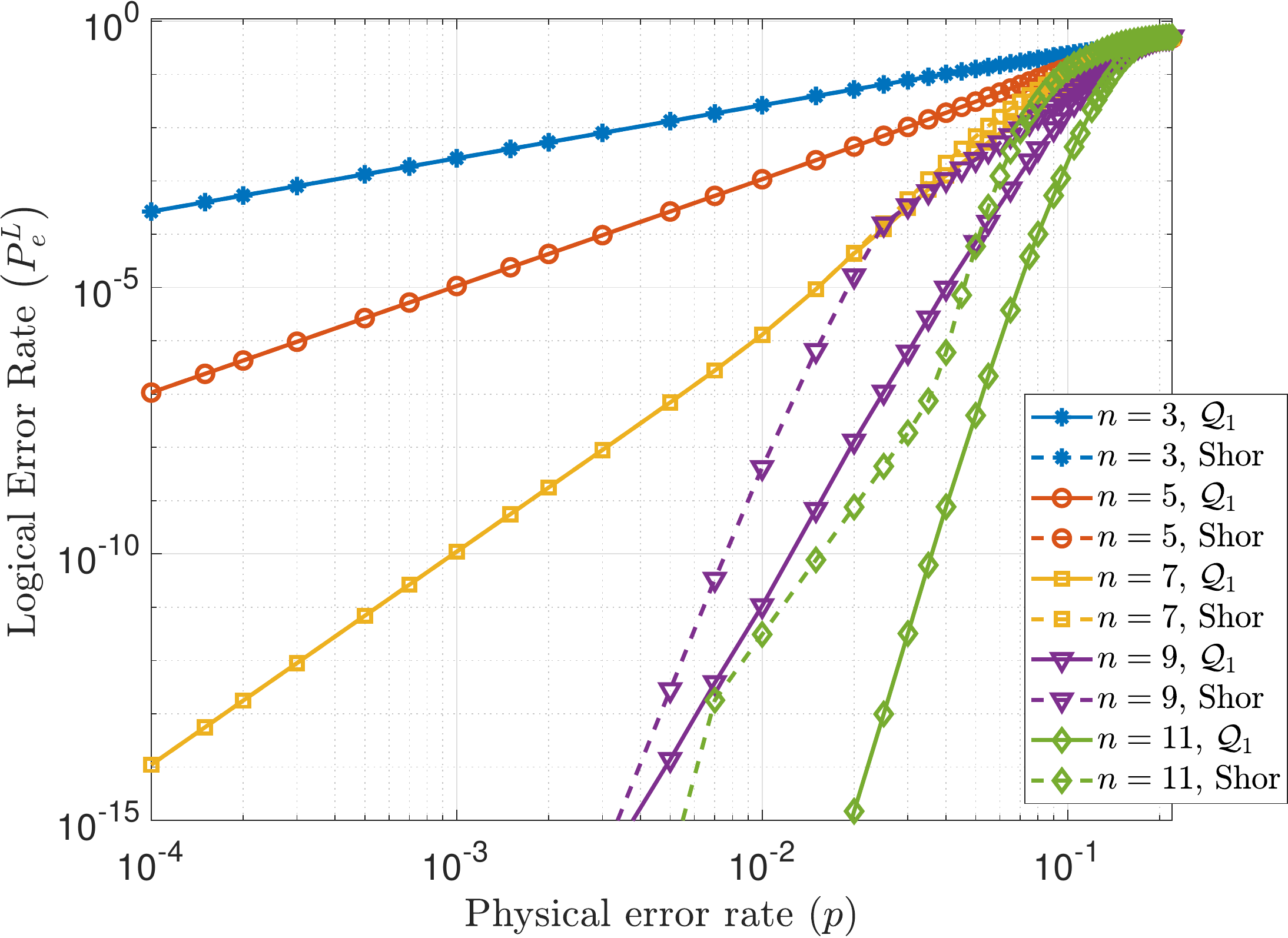}\hfill%
\includegraphics[width=.49\textwidth]{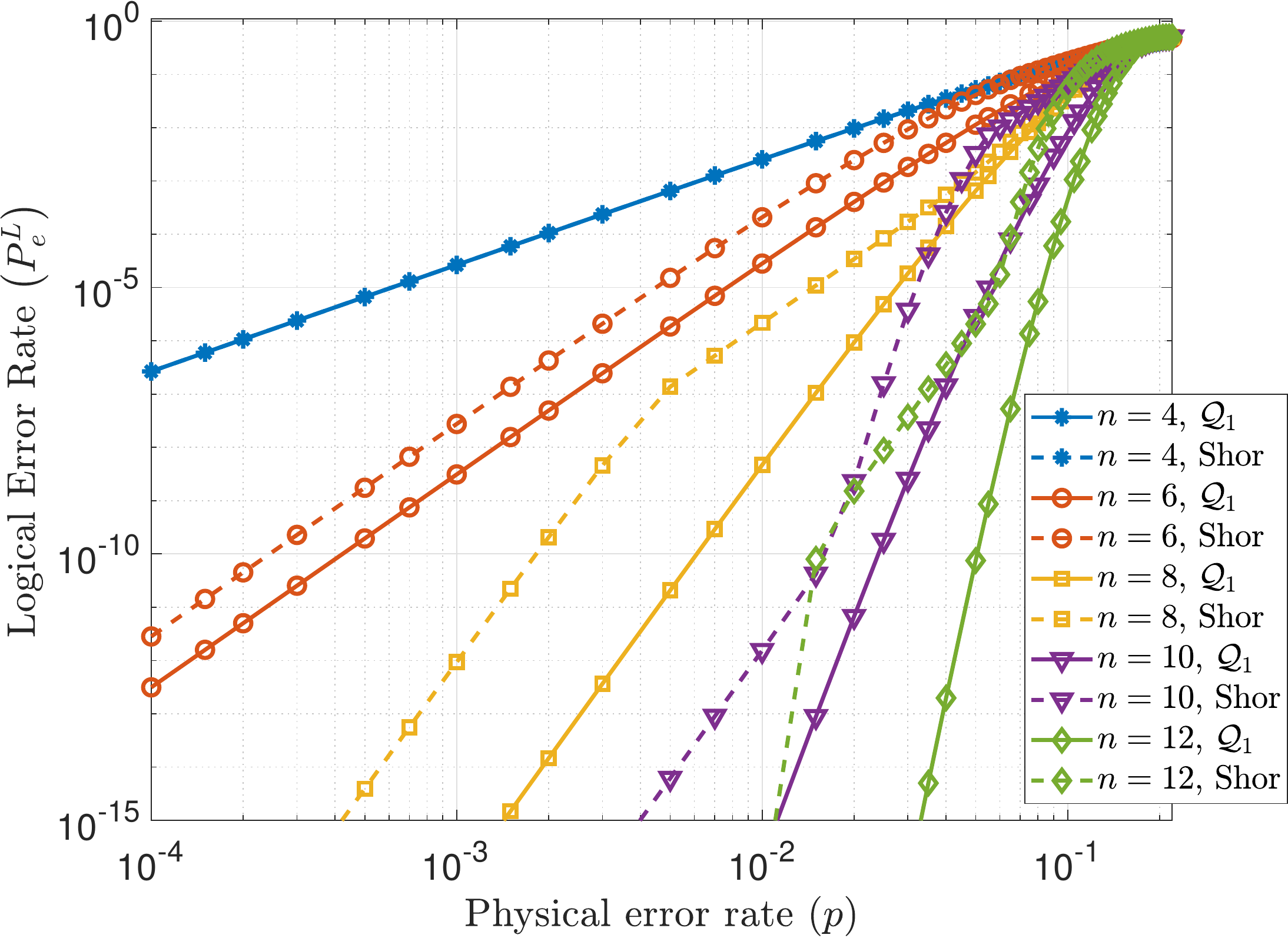}
\caption{Ignoring correlations between $X$ and $Z$ errors. Left: $n$ odd (for $n=3,5,7$,  $\pone$ and Shor curves virtually coincide). Right: $n$ even (for $n=4$,  $\pone$ and Shor curves virtually coincide).}
\label{fig:depolar_indep_xz_dec}
\end{subfigure}
\captionsetup{justification=centering}
\caption{Logical error rate of $\pone$ and Shor codes, for the depolarizing channel.}
\label{fig:depolar_lerr_rates}
\end{figure}

\smallskip As just mentioned, the information position $i$ may vary, depending on the physical error probability $p$.  However,  numerical results suggests that as $p$ goes to zero, the  information position $i$ reaches a stable (constant) value. These values are reported in Table~\ref{tab:b-indx}, for both decoding strategies (either using or ignoring correlations), and different values of $n$.  We  also report in Table~\ref{tab:b-indx} similar results for the quantum erasure channel (we omit the logical error rate curves for the quantum erasure channel, since they are of the same nature as the logical error rate curves in Fig.~\ref{fig:depolar_lerr_rates}).  It can be observed that for  $\pone$ codes, the information position values reported in Table~\ref{tab:b-indx} depend on  the noise model (depolarizing or erasure channel), as well as the decoding strategy (using or ignoring correlations). 

\smallskip Finally, we note that for a given $n$ value, all $\pone$ and Shor codes with information positions given in Table~\ref{tab:b-indx} have the same minimum distance, which is reported on the last column of the table. The reported minimum distance is the minimum weight of logical $X$ and $Z$ operators, and can be computed by using~(\ref{eq:log-x}) and~(\ref{eq:log-z}). As mentioned in the main text, the superior performance of $\pone$ codes owes to the SC decoding, which is able to decode beyond the minimum distance of the code, by effectively exploiting the  channel polarization property of polar codes  (or, to a certain extent, the degree of freedom provided by the full flexibility on the choice of the information index).

\begin{table}[!t]
\captionsetup{justification=centering}
\caption[Best information positions  for low error probabilities]{Best information positions  for low error probabilities$^{(*)}$}
\label{tab:b-indx}
\begin{minipage}{.95\linewidth}
{\small\,\hfill%
\begin{tabular}{|c||X{15mm}|X{15mm}|X{15mm}|X{15mm}||X{15mm}|X{15mm}||c|}
\hline 
       & \multicolumn{2}{c|}{Depolarizing Channel}   &  \multicolumn{2}{c||}{Depolarizing Channel} &  \multicolumn{2}{c||}{\multirow{2}{*}{Erasure Channel}}  &  \\
Levels of & \multicolumn{2}{c|}{using correlations}     &  \multicolumn{2}{c||}{ignoring correlations} &   \multicolumn{2}{c||}{\,} & \\
\cline{2-7}
Recursion ($n$) & \multicolumn{6}{c||}{Information position $i$ for best $\pone$ and Shor-$\pone$ codes} & Min \\
 \cline{2-7}
                & $\pone$    & Shor    &  $\pone$    & Shor    & $\pone$    & Shor  & dist.  \\
\hline
3   & 4    & 4      & 4    & 4     & 2   & 2     & 2  \\[-2mm]
    & \pstab{6.8}{-2} & \pstab{2.0}{-1} & \pstab{1.0}{-3} & \pstab{1.0}{-3} & \pstab{5.0}{-1} & \pstab{5.0}{-1} & \\
\hline
4   & 13   & 4      & 7    & 4     & 7   & 4     & 4  \\[-2mm]
    & \pstab{2.0}{-1} & \pstab{2.0}{-1} & \pstab{9.0}{-4} & \pstab{2.0}{-1} & \pstab{5.0}{-1} & \pstab{5.0}{-1} & \\
\hline
5   & 8    & 8      & 8    & 8     & 4   & 4     & 4  \\[-2mm]
    & \pstab{2.0}{-1} & \pstab{2.0}{-1} & \pstab{2.0}{-1} & \pstab{2.0}{-1} & \pstab{2.0}{-1} & \pstab{5.0}{-1} & \\
\hline
6   & 50  & 8      & 23   & 8     & 23  & 8     & 8  \\[-2mm]
    & \pstab{1.6}{-1} & \pstab{1.8}{-2} & \pstab{1.4}{-1} & \pstab{8.8}{-2} & \pstab{5.0}{-1} & \pstab{4.0}{-2} & \\
\hline
7   & 16   & 16     & 16   & 16    & 8   & 8     & 8  \\[-2mm]
    & \pstab{8.0}{-3} & \pstab{2.0}{-1} & \pstab{1.8}{-2} & \pstab{2.0}{-1} & \pstab{2.6}{-2} & \pstab{1.9}{-1} & \\
\hline
8   & 199  & 16     & 91   & 16    & 87  & 16    & 16 \\[-2mm]
    & \pstab{1.4}{-2} & \pstab{2.0}{-3} & \pstab{6.2}{-2} & \pstab{4.0}{-3} & \pstab{2.6}{-1} & \pstab{6.0}{-3} & \\
\hline
9   & 32   & 32     & 32   & 32    & 16  & 16    & 16 \\[-2mm]
    & \pstab{1.0}{-3} & \pstab{1.6}{-2} & \pstab{2.0}{-3} & \pstab{2.4}{-2} & \pstab{5.0}{-3} & \pstab{4.4}{-2} & \\
\hline
10  & 806  & 32     & 363  & 32    & 343 & 32    & 32 \\[-2mm]
    & \pstab{6.0}{-5} & \pstab{3.0}{-4} & \pstab{6.2}{-2} & \pstab{6.0}{-4} & \pstab{9.0}{-2} & \pstab{1.0}{-3} & \\
\hline
11  & 96   & 64     & 96   & 64    & 32  & 32    & 32 \\[-2mm]
    & \pstab{3.0}{-4} & \pstab{4.0}{-3} & \pstab{6.4}{-4} & \pstab{6.3}{-3} & \pstab{1.0}{-3} & \pstab{1.4}{-2} & \\
\hline
12  & 3222 & 64     & 1451 & 64    & 1367 & 64   & 64 \\[-2mm]
    & \pstab{6.0}{-5} & \pstab{6.0}{-5} & \pstab{6.6}{-2} & \pstab{1.0}{-4} & \pstab{4.4}{-2} & \pstab{2.4}{-4} & \\ 
\hline
\end{tabular}}\hfill\,%

\begin{justify}
{\small  $^{(*)}$For the depolarizing channel, we consider physical error rates $p \in [10^{-5}, 2\!\times\!10^{-1}]$ (note that the coherent information of the channel vanishes for $p\approx 0.1893$). Reported information position values are constant for physical error rates $p \in [10^{-5}, p_0]$. The value of $p_0$ is reported between round brackets, under the value of $i$ (tiny font size). Similarly, for the quantum erasure channel, we consider channel erasure probability values $\varepsilon \in [10^{-5}, 5\!\times\!10^{-1}]$. Reported information position values are constant for channel erasure probabilities $\varepsilon \in [10^{-5}, \varepsilon_0]$, where the  value of $\varepsilon_0$ is reported between round brackets, under the value of $i$.}
\end{justify}
\end{minipage}
\end{table}

\newpage
\section{Measurement-Based Preparation of Logical $\pone$ Code States} \label{sec:prep_lemma}

\subsection{Measurement-Based Preparation Without Noise (Proof of Lemma $1$ From the Main Text)} 
\makeatletter
\def\@currentlabel{\thesubsection}
\makeatother
\label{sec:prep_lemma_noiseless} 

 We consider the standard ``phase kickback trick'' implementation of Pauli $X \otimes X$ and $Z \otimes Z$ measurements, using ancilla qubits as in Fig.~\ref{fig:mZZ_mXX_notation_and_circuit}.

\begin{figure}[H] 
\centering
\begin{subfigure}[t]{.9\linewidth}
\begin{subfigure}[t]{.4\linewidth}
\,\hfill\input{measureZZ_notation}\,
\end{subfigure}\hfill%
\raisebox{-1.4em}{$\equiv$}\hfill%
\begin{subfigure}[t]{.5\linewidth}
\,\qquad\input{measureZZ_circuit}\hfill\,
\end{subfigure}
\caption{Pauli $Z \otimes Z$ measurement: shorthand notation (left) and quantum circuit implementing the Pauli $Z \otimes Z$ measurement (right).}
\label{fig:mZZ_notation_and_circuit}
\vspace*{3mm}
\end{subfigure}
\begin{subfigure}[t]{.9\linewidth}
\begin{subfigure}[t]{.4\linewidth}
\,\hfill\input{measureXX_notation}\,
\end{subfigure}\hfill%
\raisebox{-1.4em}{$\equiv$}\hfill%
\begin{subfigure}[t]{.5\linewidth}
\,\qquad\input{measureXX_circuit}\hfill\,
\end{subfigure}
\caption{Pauli $X \otimes X$ measurement: shorthand notation (left) and quantum circuit implementing the Pauli $X \otimes X$ measurement (right).}
\label{fig:mXX_notation_and_circuit}
\end{subfigure}
\captionsetup{justification=centering}
\caption{Two-qubit Pauli measurements: shorthand notation and quantum circuits implementing the measurements.}
\label{fig:mZZ_mXX_notation_and_circuit}
\end{figure}


\subsection*{Case 1: Preparation Using Pauli $Z \otimes Z$ Measurements} \label{case:1}

In this case, our procedure consists of performing transversal Pauli $Z \otimes Z$ measurements on corresponding qubits of systems $\mathcal{S}_1 = \{1, \dots, K/2\}$, and $\mathcal{S}_2 = \{K/2+ 1, \dots, K\}$, prepared in $\pone$ code states $\ket{q_{\frac{K}{2}}^1}_{\mathcal{S}_1} \eqdef Q_{\frac{K}{2}} \ket{\bm{u_1}, \oline{\bm{v}}_{\bm{1}}}_{\mathcal{S}_1}$ and $\ket{q_{\frac{K}{2}}^2}_{\mathcal{S}_2} \eqdef Q_{\frac{K}{2}} \ket{\bm{u_2}, \oline{\bm{v}}_{\bm{2}}}_{\mathcal{S}_2}$, respectively, as illustrated in Fig.~\ref{fig:qpolarprep_mZZ}.

\begin{figure}[H]
\centering
\,\hfill \hspace*{8em}\input{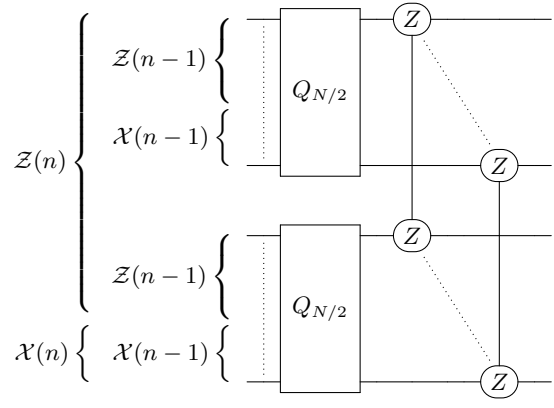}\hfill\,
\captionsetup{justification=centering}
\caption{Preparation using Pauli $Z\otimes Z$ measurements ($i(k) = i(k-1)+\frac{K}{2} \in \{\frac{K}{2}+1, \dots, K\}$).}
\label{fig:qpolarprep_mZZ}
\end{figure}

Expanding the quantum state $\ket{\oline{\bm{v}}_{\bm{1}}}_{\cl{X}(n-1)}$ in the Pauli $Z$ basis, and using~(\ref{eq:q-pol-enc-x}), we get, up to a normalization factor,
\begin{equation}
\ket{q_{\frac{K}{2}}^1}_{\mathcal{S}_1} = \sum_{\bm{x_1} \in \{0,1\}^{\frac{K}{2}-i(k-1)}} (-1)^{\bm{v_1} \cdot \bm{x_1}}  \ket{	P_{\frac{K}{2}}(\bm{u_1}, \bm{x_1})}_{\mathcal{S}_1}. \label{eq:N/2-comp}
\end{equation}

Similarly, we may also expand the quantum state $\ket{q_{K/2}^2}_{\mathcal{S}_2}$ in the Pauli $Z$ basis. Further, we consider the circuit in Fig.~\ref{fig:mZZ_notation_and_circuit} to perform  transversal Pauli $Z \otimes Z$ measurements, which are done in the following two steps (below, we denote by $\mathcal{S}_3$ the ancilla system $A$ from Fig.~\ref{fig:mZZ_notation_and_circuit}).

\begin{list}{}{\setlength{\labelwidth}{2em}\setlength{\leftmargin}{2em}\setlength{\listparindent}{0em}}
\item[$(1)$] 	We first take an $(K/2)$-qubit ancilla state $\ket{\bm{0}}_{\mathcal{S}_3}$, and then apply transversal $\cnot$ gates, $\cnot_{\mathcal{S}_1 \to \mathcal{S}_3}$ and  $\cnot_{\mathcal{S}_2 \to \mathcal{S}_3}$.  This gives the following joint quantum state on $\mathcal{S}_1 \cup \mathcal{S}_2 \cup \mathcal{S}_3$.
\begin{equation}
\ket{\eta}_{\mathcal{S}_1\mathcal{S}_2\mathcal{S}_3} \ = \!\!\! \sum_{\bm{x_1}, \bm{x_2} \in \{0,1\}^{\frac{K}{2}-i(k-1)}} (-1)^{\bm{v_1} \cdot \bm{x_1} + \bm{v_2} \cdot \bm{x_2}}   \ket{P_{\frac{K}{2}}(\bm{u_1}, \bm{x_1})}_{\mathcal{S}_1}  \ket{P_{\frac{K}{2}}(\bm{u_2}, \bm{x_2})}_{\mathcal{S}_2} 
  \ket{P_{\frac{K}{2}}(\bm{u_1} \oplus \bm{u_2}, \bm{x_1} \oplus \bm{x_2})}_{\mathcal{S}_3}. \label{eq:state-intm}
\end{equation}
\item[$(2)$] 	Then, we measure each qubit in the ancilla system $\mathcal{S}_3$ in the Pauli $Z$ basis. From~(\ref{eq:state-intm}), the measurement outcome gives a binary vector of length $K/2$ as follows,
\begin{equation}
 \bm{m} =  P_\frac{K}{2}(\bm{u'} , \bm{x}) \in \{0,1\}^{\frac{K}{2}},\label{eq:m-out-1}
\end{equation}
where $\bm{u'} = \bm{u_1} \oplus \bm{u_2} \in \{0, 1\}^{i(k)}$, and $\bm{x} \in \{0,1\}^{\frac{K}{2}-i(k)}$ is a random vector. Further, from~(\ref{eq:state-intm}) and~(\ref{eq:m-out-1}), the state of the joint system $\mathcal{S}_1 \cup \mathcal{S}_2$ after the measurements is as follows
\begin{equation}
\ket{\eta'}_{\mathcal{S}_1\mathcal{S}_2} = \sum_{ \substack{\bm{x_1}, \bm{x_2} \in \{0,1\}^{\frac{K}{2}-i(k-1)} \\ \bm{x_1} \oplus \bm{x_2} = \bm{x}}} (-1)^{\bm{v_1} \cdot \bm{x_1} + \bm{v_2} \cdot \bm{x_2}}  \ket{P_{\frac{K}{2}}(\bm{u_1}, \bm{x_1})}_{\mathcal{S}_1} \ket{P_{\frac{K}{2}}(\bm{u_2}, \bm{x_2})}_{\mathcal{S}_2}
\end{equation}
\end{list}

It can be seen as follows that the quantum state $\ket{\eta'}_{\mathcal{S}_1\mathcal{S}_2}$ is the $\pone$ code state $\ket{q_{K}}_{\mathcal{S}_1\mathcal{S}_2}$ as in \emph{Case 1} from the main text,
\begin{align}
\ket{\eta'}_{\mathcal{S}_1\mathcal{S}_2} &= \sum_{ \substack{\bm{x_1}, \bm{x_2} \in \{0,1\}^{\frac{K}{2}-i(k-1)} \\ \bm{x_1} \oplus \bm{x_2} = \bm{x}}} (-1)^{\bm{v_1} \cdot \bm{x_1} + \bm{v_2} \cdot \bm{x_2}}  \ket{P_{\frac{K}{2}}(\bm{u_1}, \bm{x_1})}_{\mathcal{S}_1} \ket{P_{\frac{K}{2}}(\bm{u_2}, \bm{x_2})}_{\mathcal{S}_2}  \nonumber \\
&= \sum_{ \bm{x_2}} (-1)^{\bm{v_1} \cdot (\bm{x} + \bm{x_2}) + \bm{v_2} \cdot \bm{x_2}}  \ket{P_{\frac{K}{2}}(\bm{u'} \oplus \bm{u_2}, \bm{x} \oplus \bm{x_2})}_{\mathcal{S}_1} \ket{P_{\frac{K}{2}}(\bm{u_2}, \bm{x_2})}_{\mathcal{S}_2}  \nonumber \\
&= (-1)^{\bm{v_1} \cdot \bm{x}} \sum_{ \bm{x_2}} (-1)^{(\bm{v_1}  + \bm{v_2}) \cdot \bm{x_2}}  \ket{P_{K}(\bm{u'}, \bm{x}, \bm{u_2}, \bm{x_2})}_{\mathcal{S}_1\mathcal{S}_2}  \nonumber \\
&=  Q_K \ket{\bm{u'}, \bm{x}, \bm{u_2}, \oline{\bm{v_1} \oplus \bm{v_2}}}_{\mathcal{S}_1\mathcal{S}_2} 
\end{align}
where in the second equality, we have used $\bm{u_1} = \bm{u'} \oplus \bm{u_2}$ and $\bm{x_1} = \bm{x} \oplus \bm{x_2}$,  and  in the third equality, we have used $P_K (\bm{a}, \bm{b}) = (P_{\frac{K}{2}}(\bm{a} \oplus \bm{b} ),  P_{\frac{K}{2}}(\bm{b}))$, $\bm{a}, \bm{b} \in \{0, 1\}^{\frac{K}{2}}$, using the recursion of the classical polar transform given in Fig.~\ref{fig:cpolar_recursion}. 


\medskip Hence, after the Pauli $Z \otimes Z$ measurements, we have prepared the $\pone$ code state on the joint system $\mathcal{S}= \mathcal{S}_1 \cup \mathcal{S}_2$,
\begin{equation}
\ket{q_K}_\mathcal{S} = Q_K \ket{\bm{u'}, \bm{x}, \bm{u_2}, \oline{\bm{v_1} \oplus \bm{v_2}}}_{\mathcal{S}}. \label{eq:noiseles-state-k}
\end{equation} 
Finally, from (\ref{eq:m-out-1}), we have that $P_\frac{K}{2}(\bm{m}) = (\bm{u'}, \bm{x}) \in \{0, 1\}^{\frac{K}{2}}$ (using $P_K^2 = I$). Hence, $\bm{x}$ is determined by, $\bm{x} = P_{\frac{K}{2}}(\bm{m})\lvert_{\mathcal{X}(n-1)}$, as desired.

\subsection*{Case 2: Preparation Using Pauli $X \otimes X$ Measurements} \label{case:2}

In this case, our procedure consists in performing transversal Pauli $X \otimes X$ measurements on corresponding qubits of systems $\mathcal{S}_1$ and $\mathcal{S}_2$, prepared in $\pone$ code states $\ket{q_{\frac{K}{2}}^1}_{\mathcal{S}_1} \eqdef Q_{\frac{K}{2}} \ket{\bm{u_1}, \oline{\bm{v}}_{\bm{1}}}_{\mathcal{S}_1}$ and $\ket{q_{\frac{K}{2}}^2}_{\mathcal{S}_2} \eqdef Q_{\frac{K}{2}} \ket{\bm{u_2}, \oline{\bm{v}}_{\bm{2}}}_{\mathcal{S}_2}$, as illustrated in Fig.~\ref{fig:qpolarprep_mXX}.

\begin{figure}[H]
\centering
\,\hfill \hspace*{8em}\input{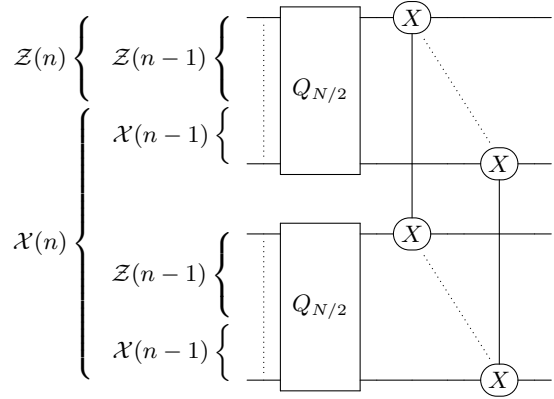}\hfill\,
\captionsetup{justification=centering}
\caption{Preparation using Pauli $X\otimes X$ measurements ($i(k) = i(k-1) \in \{1,\dots,\frac{K}{2}\}$).}
\label{fig:qpolarprep_mXX}
\end{figure}
 We skip the proof here as it is similar to the \emph{Case 1}. By expanding $\ket{q_{\frac{K}{2}}^1}_{\mathcal{S}_1}$ and $\ket{q_{\frac{K}{2}}^2}_{\mathcal{S}_2}$, in the Pauli $X$ basis instead and using (\ref{eq:q-pol-enc-z}) it can be seen that the measurement outcome of Pauli $X \otimes X$ measurements is given by,
\begin{equation}
 \bm{m} = P_\frac{K}{2}^\top(\bm{z}, \bm{v'}) \in \{0, 1\}^{\frac{K}{2}}, \label{eq:m-out-x}
\end{equation}
where $\bm{z} \in \{0, 1\}^{i(k-1)}$ is a random vector, and $\bm{v'} = \bm{v_1} \oplus \bm{v_2} \in \{0, 1\}^{\frac{K}{2}-i(k-1)}$. Further, after measurements, the state on $\mathcal{S} = \mathcal{S}_1 \cup \mathcal{S}_2$ is a $\pone$ state is given by,  
\begin{equation}
\ket{q_K}_\mathcal{S} = \ket{\bm{u_1} \oplus \bm{u_2}, \overline{(\bm{v_1},  \bm{z},  \bm{v'})}}, \label{eq:joint-state-x} 
\end{equation}
where $\bm{z}$ can be determined from (\ref{eq:m-out-x}) as, $\bm{z} = P_{\frac{K}{2}}^\top (\bm{m})\lvert_{\mathcal{Z}(k-1)}$.

\subsection{Fault-tolerant Measurement-Based Preparation with Noise} 
\makeatletter
\def\@currentlabel{\thesubsection}
\makeatother
\label{sec:prep_lemma-noisy}

In this section, supposing a circuit level noise model with Pauli errors, we detail the measurement based preparation with the error detection from Procedure $2$ in the main text, for both Case $1$ and Case $2$. Then, we provide a proof of fault tolerance according to Theorem $3$ from the main text.



\smallskip We suppose that we are given noisy versions of $\pone$ code states of length $K/2$ on $\mathcal{S}_1 = \{1, \dots, K/2\}$, and $\mathcal{S}_2 = \{K/2+ 1, \dots, K\}$,  as follows.
\begin{align}
\ket{q_{\frac{K}{2}}^1}_{\mathcal{S}_1} &= X^{\bm{e}_X^1} Z^{\bm{e}_Z^1}Q_{\frac{K}{2}} \ket{\bm{u_1}, \oline{\bm{v}}_1}_{\mathcal{S}_1}, \label{eq:q-N/2-1-n} \\
\ket{q_{\frac{K}{2}}^2}_{\mathcal{S}_2} &= X^{\bm{e}_X^2} Z^{\bm{e}_Z^2}Q_{\frac{K}{2}} \ket{\bm{u_2}, \oline{\bm{v}}_2}_{\mathcal{S}_2}, \label{eq:q-N/2-2-n}
\end{align}
where   $\bm{u_1}, \bm{u_2} \in \{0, 1\}^{i(k-1)}$,\ \ $\bm{v_1}, \bm{v_2} \in \{0, 1\}^{\frac{K}{2}-i(k-1)}$, and the errors $\bm{e}_X^1, \bm{e}_Z^1, \bm{e}_X^2, \bm{e}_Z^2 \in \{0, 1\}^\frac{K}{2}$ are unknown.

\subsection*{Case 1: Preparation Using Noisy Pauli $Z \otimes Z$ Measurements.} \label{sec:noisy_case1}


We suppose that the following errors happen due to the component failures during transversal Pauli $Z \otimes Z$ measurements (see also Fig. \ref{fig:cnot_err}, where all errors are given; also recall from the previous section, that $\mathcal{S}_3$ denotes the  ancilla system  needed for the ``phase kickback trick'' implementation of  $Z \otimes Z$ measurements, \emph{i.e.}, system $A$ in Fig.~\ref{fig:mZZ_notation_and_circuit}).
%
\begin{list}{}{\setlength{\labelwidth}{2em}\setlength{\leftmargin}{2em}\setlength{\listparindent}{0em}}
\item[$(1)$] Suppose failures during the intialization of $\mathcal{S}_3$ in the Pauli $Z$ basis cause an $X$ error $\bm{e}_X^I \in \{0, 1\}^{K/2}$ on $\mathcal{S}_3$.

\item[$(2)$] Suppose $\cnot$ failures in $\cnot_{\mathcal{S}_1 \to \mathcal{S}_3}$ cause $X, Z$ type errors $\bm{e}_X^{C_1(1)}, \bm{e}_Z^{C_1(1)} \in \{0, 1\}^{K/2}$, respectively, on $\mathcal{S}_1$, and $X, Z$ type errors $\bm{e}^{C_1(3)}_X, \bm{e}^{C_1(3)}_Z \in \{0, 1\}^{K/2}$, respectively,  on  $\mathcal{S}_3$.

\item[$(3)$] Suppose $\cnot$ failures in $\cnot_{\mathcal{S}_2 \to \mathcal{S}_3}$ cause $X, Z$ type errors $\bm{e}_X^{C_2(2)}, \bm{e}_Z^{C_2(2)} \in \{0, 1\}^{K/2}$, respectively, on $\mathcal{S}_2$, and $X, Z$ type errors $\bm{e}^{C_2(3)}_X, \bm{e}^{C_2(3)}_Z \in \{0, 1\}^{K/2}$, respectively,  on  $\mathcal{S}_3$.

\item[$(4)$]  Suppose failures during the measurement of $\mathcal{S}_3$ in the Pauli $Z$ basis cause an $X$ error $\bm{e}_X^M \in \{0, 1\}^{K/2}$  on $\mathcal{S}_3$.



\end{list}

\begin{figure}[!b]
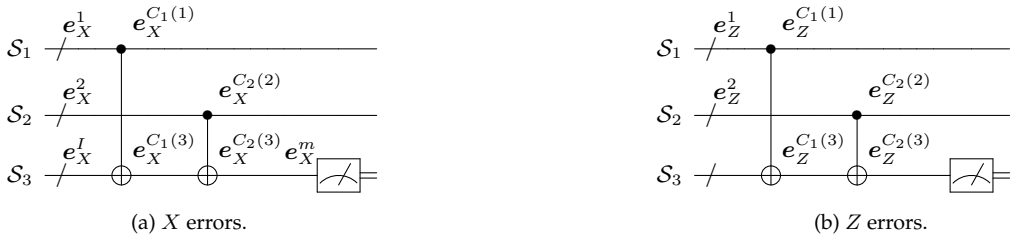

\vspace*{5mm}
\begin{subfigure}[b]{.48\textwidth}
\captionsetup{justification=centering}
\centering
\,\hfill \hspace*{5mm} \input{measure_err}\hfill\,
\caption{$X$ errors.}
\label{fig:cnot_err_x}
\end{subfigure}
~
\begin{subfigure}[b]{.48\textwidth}
\captionsetup{justification=centering}
\centering
\,\hfill \hspace*{-5mm}\input{measurez_err}\hfill\,
\caption{$Z$ errors.}
\label{fig:cnot_err_z}
\end{subfigure}
\caption{The representation of errors during the preparation with Pauli $Z \otimes Z$ measurements. The crossed wires represnt $K/2$-qubit systems $\mathcal{S}_1$, $\mathcal{S}_2$ and $\mathcal{S}_3$, respectively. The $\cnot$ gate between two systems represents the transversal $\cnot$ gates between the systems. The errors are represented where they happen. For example, in the left figure, $\bm{e}_X^1$, $\bm{e}_X^2$, and $\bm{e}_X^I$ represent the initial errors on systems  $\mathcal{S}_1$, $\mathcal{S}_2$ and $\mathcal{S}_3$, respectively. Further,  $\bm{e}_X^{C_1(1)}$ and $\bm{e}_X^{C_1(3)}$ represent the errors caused by the first $\cnot$, applied between $\mathcal{S}_1$ and $\mathcal{S}_3$, respectively. Similarly, the errors due to the second $\cnot$ are represented on systems $\mathcal{S}_2$ and $\mathcal{S}_3$. Finally, $\bm{e}_X^M$ represents the $X$ error due to the measurement on system $\mathcal{S}_3$.}
\label{fig:cnot_err}
\end{figure}

\smallskip The measurement outcome of the transversal Pauli $Z \otimes Z$ measurements and the state of the joint system $\mathcal{S} = \mathcal{S}_1 \cup \mathcal{S}_2$ after measurements is given in Lemma \ref{lem:prep-n}.

\begin{lemma}\label{lem:prep-n}
The measurement outcome of noisy transversal Pauli $Z \otimes Z$ measurements on systems $\mathcal{S}_1$ and $\mathcal{S}_2$, respectively in (\ref{eq:q-N/2-1-n}) and (\ref{eq:q-N/2-2-n}), is equal to
\begin{equation}
 \bm{m}^\prime =   \bm{m} \oplus \bm{e}_X \in \{0, 1\}^{\frac{K}{2}}, \label{eq:m-out-n}
\end{equation}
where $\bm{m}$ is the measurement outcome corresponding to the preparation without noise from (\ref{eq:m-out-1}) and $\bm{e}_X = \bm{e}_X^I \oplus   \bm{e}_X^1  \oplus \bm{e}_X^{C_1(3)} \oplus \bm{e}_X^2 \oplus \bm{e}_X^{C_2(3)} \oplus  \bm{e}_X^M $. Further, after the measurement, we get
\begin{equation}
\ket{q_{K}^\prime}_{\mathcal{S}} = X^{\widetilde{\bm{e}}_X} Z^{\widetilde{\bm{e}}_Z} \ket{q_{K}}_{\mathcal{S}},\label{eq:joint-state-n}
\end{equation}
where $\ket{q_{K}}_{\mathcal{S}}$ is the prepared state without noise from (\ref{eq:noiseles-state-k}) and the errors are as follows,
%
%
\begin{align}
\widetilde{\bm{e}}_X &= (\bm{e}_X^1 \oplus \bm{e}^{C_1(1)}_X , \bm{e}_X^2 \oplus \bm{e}_X^{C_2(2)}) \in  \{0, 1\}^{K}. \label{eq:xerr-f} \\  
\widetilde{\bm{e}}_Z &= (\bm{e}_Z^1 \oplus \bm{e}_Z^{C_1(1)} , \bm{e}_Z^2 \oplus \bm{e}^{C_1(3)}_Z \oplus \bm{e}_Z^{C_2(2)} ) \in  \{0, 1\}^{K}. \label{eq:zerr-f}
\end{align}
\end{lemma}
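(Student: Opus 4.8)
The plan is to treat the whole gadget as a Clifford circuit and to propagate each elementary Pauli fault through it in the Heisenberg picture, keeping track of what error sits where at the moment $\mathcal{S}_3$ is measured. The only propagation rules needed are those for a single $\cnot_{c\to t}$ gate, namely $X_c \mapsto X_c X_t$, $X_t \mapsto X_t$, $Z_c \mapsto Z_c$, $Z_t \mapsto Z_c Z_t$; in words, an $X$ fault copies from control to target, while a $Z$ fault copies from target to control. I would first note, from the noise model of the main text, that a $Z$-basis initialization (resp.\ $Z$-basis measurement) failure produces only an $X$ error, which is why no initial $Z$ fault on $\mathcal{S}_3$ and no $Z$ measurement fault appear in Fig.~\ref{fig:cnot_err}, so that the bookkeeping involves exactly the faults listed in items $(1)$--$(4)$, with $\mathcal{S}_1,\mathcal{S}_2$ as controls and $\mathcal{S}_3$ as the common target.

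First I would compute the measurement outcome. Since $\mathcal{S}_3$ is measured in the $Z$ basis, only the net $X$ error present on $\mathcal{S}_3$ at measurement time flips outcome bits, whereas any $Z$ error there commutes with the measurement and is irrelevant. Propagating forward, the data $X$ faults $\bm{e}_X^1$ and $\bm{e}_X^2$ are copied onto $\mathcal{S}_3$ through $\cnot_{\mathcal{S}_1\to\mathcal{S}_3}$ and $\cnot_{\mathcal{S}_2\to\mathcal{S}_3}$ (control-to-target copy), while the ancilla faults $\bm{e}_X^I,\bm{e}_X^{C_1(3)},\bm{e}_X^{C_2(3)},\bm{e}_X^M$ already sit on $\mathcal{S}_3$ and an $X$ fault on a $\cnot$ target stays put. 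Summing gives $\bm{e}_X = \bm{e}_X^I \oplus \bm{e}_X^1 \oplus \bm{e}_X^{C_1(3)} \oplus \bm{e}_X^2 \oplus \bm{e}_X^{C_2(3)} \oplus \bm{e}_X^M$, hence $\bm{m}'=\bm{m}\oplus\bm{e}_X$, recovering (\ref{eq:m-out-n}) from the noiseless outcome (\ref{eq:m-out-1}). Note that the control-side fault $\bm{e}_X^{C_1(1)}$ never reaches $\mathcal{S}_3$, as $\mathcal{S}_1$ takes part in no later gate, which is why it is absent from $\bm{e}_X$.

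Next I would read off the residual error on $\mathcal{S}=\mathcal{S}_1\cup\mathcal{S}_2$ once $\mathcal{S}_3$ is discarded. For the $X$ part, the data-side faults are unaffected as $\mathcal{S}_1$/$\mathcal{S}_2$ components ($X$ on a control stays on the control), leaving $\bm{e}_X^1\oplus\bm{e}_X^{C_1(1)}$ on $\mathcal{S}_1$ and $\bm{e}_X^2\oplus\bm{e}_X^{C_2(2)}$ on $\mathcal{S}_2$, i.e.\ $\widetilde{\bm{e}}_X$ as in (\ref{eq:xerr-f}). For the $Z$ part, each data $Z$ fault stays on its own system ($Z$ on a control is unchanged), giving $\bm{e}_Z^1\oplus\bm{e}_Z^{C_1(1)}$ on $\mathcal{S}_1$ and $\bm{e}_Z^2\oplus\bm{e}_Z^{C_2(2)}$ on $\mathcal{S}_2$, while the ancilla $Z$ faults must be pushed to the end before $\mathcal{S}_3$ is traced out. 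The single subtle point, and the main obstacle I would flag, is the fate of $\bm{e}_Z^{C_1(3)}$, incurred on $\mathcal{S}_3$ \emph{between} the two $\cnot$ gates: as it crosses $\cnot_{\mathcal{S}_2\to\mathcal{S}_3}$ it sits on the target, so by $Z_t\mapsto Z_c Z_t$ it copies onto the control $\mathcal{S}_2$, contributing $\bm{e}_Z^{C_1(3)}$ there; by contrast $\bm{e}_Z^{C_2(3)}$ is incurred \emph{after} the last $\cnot$, so it remains entirely on $\mathcal{S}_3$ and is removed with the ancilla. Collecting the surviving $Z$ errors yields $\widetilde{\bm{e}}_Z$ as in (\ref{eq:zerr-f}). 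Combining with the noiseless analysis of Case~$1$, which fixes $\ket{q_K}_\mathcal{S}$ in (\ref{eq:noiseles-state-k}), then establishes (\ref{eq:joint-state-n}) and completes the argument.
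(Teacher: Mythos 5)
Your proposal is correct and follows essentially the same route as the paper: both track each elementary Pauli fault through the two $\cnot$ gates using the standard control/target propagation rules, then read off the net $X$ error on $\mathcal{S}_3$ for the measurement outcome and the accumulated $X$/$Z$ errors on $\mathcal{S}_1,\mathcal{S}_2$ for the residual state (the paper tabulates exactly the totals you derive, including the propagation of $\bm{e}_Z^{C_1(3)}$ onto $\mathcal{S}_2$ through the second $\cnot$).
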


\begin{proof}
Note that  $X$ errors in Fig.~\ref{fig:cnot_err_x} and $Z$ errors in Fig.~\ref{fig:cnot_err_z}  propagate from the LHS to the RHS, as follows: 
\begin{list}{}{\setlength{\labelwidth}{2em}\setlength{\leftmargin}{2em}\setlength{\listparindent}{0em}}
\item[$(1)$] An $X$ error simply passes through the target of a $\cnot$ gate, and it propagates from the control of the $\cnot$ gate to its target. 

\item[$(2)$] An $Z$ error simply passes through the control of a $\cnot$ gate, and it propagates from the target of the $\cnot$ gate to its control. 
\end{list}

\begin{table}[!t]
\centering
\captionsetup{justification=centering}
\caption{Total errors on systems $\mathcal{S}_1$, $\mathcal{S}_2$, and $\mathcal{S}_3$.}
\label{tab:2}
\begin{tabular}{| c | c | c |}
\hline
System & total $X$ error & total $Z$ error \\ 
\hline
$\mathcal{S}_1$ & $\bm{e}_X^1 \oplus \bm{e}_X^{C_1(1)}$ & $\bm{e}_Z^1 \oplus \bm{e}_Z^{C_1(1)}$ \\
\hline
$\mathcal{S}_2$ & $\bm{e}_X^2 \oplus \bm{e}_X^{C_2(2)}$ & $\bm{e}_Z^2  \oplus \bm{e}_Z^{C_1(3)} \oplus \bm{e}_Z^{C_2(2)}$ \\
\hline
$\mathcal{S}_3$ & $ \bm{e}_X^I \oplus   \bm{e}_X^1  \oplus \bm{e}_X^{C_1(3)} \oplus \bm{e}_X^2 \oplus \bm{e}_X^{C_2(3)} \oplus  \bm{e}_X^M  $ & $\bm{e}_Z^{C_1(3)} \oplus \bm{e}_Z^{C_2(3)}$ \\ \hline
\end{tabular}
\end{table}

The $X$ and $Z$ errors on the RHS for $\mathcal{S}_1, \mathcal{S}_2$ and $\mathcal{S}_3$ are given in Table \ref{tab:2}. It follows that the noisy measurement outcome $\bm{m}^\prime$ is equal to the binary sum of the measurement outcome $\bm{m}$ in the noiseless case and the $X$ error on system $\mathcal{S}_3$, therefore, from Table~\ref{tab:2}, we get (\ref{eq:m-out-n}). Further, the prepared state $\ket{q_{K}^\prime}_{\mathcal{S}}$ is equal to the noiseless state $\ket{q_{K}}_{\mathcal{S}}$, with some $X$ and $Z$ errors on it. The $X$ (or $Z$) error on $\mathcal{S}$ is simply the concatenation of the total $X$ (or $Z$) errors on systems $\mathcal{S}_1$ and $\mathcal{S}_2$.  Therefore,  from Table~\ref{tab:2}, we get (\ref{eq:xerr-f}), and (\ref{eq:zerr-f}), for $X$ and $Z$ errors in $\ket{q_{K}^\prime}_{\mathcal{S}}$, respectively.

\end{proof}

We now combine the noisy preparation in Lemma \ref{lem:prep-n} with the error detection gadget according to Procedure $2$ of the main text. We recall here how the error detection gadget works. Consider the error $\bm{e}_X$ in the measurement outcome $\bm{m}^\prime$ in $(\ref{eq:m-out-n})$ so that its syndrome is a zero vector,~$i.e.$,
\begin{equation}
P_{\frac{K}{2}}(\bm{e}_X) \lvert_{\mathcal{Z}(k-1)} = \bm{0}. \label{eq:zero-syn}
\end{equation}
Then the prepared state is accepted and the random vector $\bm{x}$ in (\ref{eq:joint-state-n}) is estimated as 
\begin{equation}
\hat{\bm{x}} = P_{\frac{K}{2}}(\bm{m}^\prime)\lvert_{\mathcal{X}(k-1)}. \label{eq:xval-hat}
\end{equation}
The total error in the prepared state, with respect to the estimate $\hat{\bm{x}}$, is given in Lemma \ref{lem:x-propg}.
\begin{lemma} \label{lem:x-propg}
Consider the error $\bm{e}_X$ in the measurement outcome $\bm{m}$ from (\ref{eq:m-out-n}), so that it satisfies (\ref{eq:zero-syn}), and let $\bm{\hat{x}}$ be defined according to (\ref{eq:xval-hat}).
Then, the state of the joint system $\mathcal{S}$ in (\ref{eq:joint-state-n}) can be written as follows,  with respect to $\bm{\hat{x}}$.
\begin{equation}
\ket{q_{K}^\prime}_{\mathcal{S}}  = X^{\bm{e}_X^f} Z^{\bm{e}_Z^f} Q_K \ket{(\bm{u'}, \bm{\hat{x}}, \bm{u_2}, \oline{\bm{v_1} \oplus \bm{v_2}} )}_{\mathcal{S}}, \label{eq:state-final-zz}
\end{equation}
where $\bm{e}_X^f = \widetilde{\bm{e}}_X + (\bm{e}_X, \bm{0})$ and $\bm{e}_Z^f = \widetilde{\bm{e}}_Z$, so that $\widetilde{\bm{e}}_X$ and $\widetilde{\bm{e}}_Z$ are according to (\ref{eq:xerr-f}) and (\ref{eq:zerr-f}), respectively.


%

\end{lemma}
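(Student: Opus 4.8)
The plan is to start from the noisy state produced in Lemma~\ref{lem:prep-n} and merely re-express it with respect to the accepted estimate $\hat{\bm{x}}$, absorbing the mismatch between $\bm{x}$ and $\hat{\bm{x}}$ into an additional $X$-type error supported on $\mathcal{S}_1$. First I would compute the discrepancy $\bm{\delta} \eqdef \hat{\bm{x}} \oplus \bm{x}$. Since $\bm{m} = P_{\frac{K}{2}}(\bm{u'}, \bm{x})$ in the noiseless case (\ref{eq:m-out-1}) and $P_{\frac{K}{2}}^2 = I$, we have $\bm{x} = P_{\frac{K}{2}}(\bm{m})\lvert_{\mathcal{X}(k-1)}$; applying $P_{\frac{K}{2}}$ to $\bm{m}^\prime = \bm{m}\oplus\bm{e}_X$ from (\ref{eq:m-out-n}), using linearity together with the definition (\ref{eq:xval-hat}), gives $\hat{\bm{x}} = \bm{x}\oplus P_{\frac{K}{2}}(\bm{e}_X)\lvert_{\mathcal{X}(k-1)}$, i.e. $\bm{\delta} = P_{\frac{K}{2}}(\bm{e}_X)\lvert_{\mathcal{X}(k-1)}$.

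Next I would invoke the acceptance hypothesis (\ref{eq:zero-syn}). Because $\mathcal{Z}(k-1)$ and $\mathcal{X}(k-1)$ partition $\{1,\dots,K/2\}$, the two conditions $P_{\frac{K}{2}}(\bm{e}_X)\lvert_{\mathcal{Z}(k-1)} = \bm{0}$ and $P_{\frac{K}{2}}(\bm{e}_X)\lvert_{\mathcal{X}(k-1)} = \bm{\delta}$ fully determine $P_{\frac{K}{2}}(\bm{e}_X) = (\bm{0}, \bm{\delta})$, whence $\bm{e}_X = P_{\frac{K}{2}}(\bm{0}, \bm{\delta})$ (again using $P_{\frac{K}{2}}^2 = I$). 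This identity is the crux: it ties the measurement-level error $\bm{e}_X$ to the input-level discrepancy $\bm{\delta}$. The key algebraic step is then to move the discrepancy inside the encoder. Since in Case~$1$ the $\bm{x}$-block occupies $Z$-basis input positions (recall $i(k) = i(k-1)+K/2$, so these positions lie in $\mathcal{Z}(k)$), we have $\ket{(\bm{u'},\bm{x},\bm{u_2},\oline{\bm{v_1}\oplus\bm{v_2}})} = X^{(\bm{0},\bm{\delta},\bm{0},\bm{0})}\ket{(\bm{u'},\hat{\bm{x}},\bm{u_2},\oline{\bm{v_1}\oplus\bm{v_2}})}$. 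Propagating this through $Q_K$ with the sandwiching relation $Q_K X^{\bm{w}} Q_K = X^{P_K\bm{w}}$ established in the proof of Lemma~\ref{lem:stab-gen} (built on (\ref{eq:cnot_sand-x}), and using $Q_K = Q_K^{-1}$), I would evaluate $P_K(\bm{0},\bm{\delta},\bm{0},\bm{0})$ via the recursion $P_K(\bm{a},\bm{b}) = (P_{\frac{K}{2}}(\bm{a}\oplus\bm{b}), P_{\frac{K}{2}}(\bm{b}))$ used in the noiseless Case~$1$. Taking $\bm{a} = (\bm{0},\bm{\delta})$ and $\bm{b} = \bm{0}$ gives $P_K(\bm{0},\bm{\delta},\bm{0},\bm{0}) = (P_{\frac{K}{2}}(\bm{0},\bm{\delta}), \bm{0}) = (\bm{e}_X, \bm{0})$, so that $Q_K\ket{(\bm{u'},\bm{x},\dots)} = X^{(\bm{e}_X,\bm{0})} Q_K\ket{(\bm{u'},\hat{\bm{x}},\dots)}$.

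Finally I would substitute into the noisy state (\ref{eq:joint-state-n}), obtaining $\ket{q_K^\prime}_{\mathcal{S}} = X^{\widetilde{\bm{e}}_X} Z^{\widetilde{\bm{e}}_Z} X^{(\bm{e}_X,\bm{0})} Q_K\ket{(\bm{u'},\hat{\bm{x}},\bm{u_2},\oline{\bm{v_1}\oplus\bm{v_2}})}$, and commute $Z^{\widetilde{\bm{e}}_Z}$ past $X^{(\bm{e}_X,\bm{0})}$; this produces only a global phase $(-1)^{\widetilde{\bm{e}}_Z\cdot(\bm{e}_X,\bm{0})}$, which is irrelevant since equivalent $\pone$ states are defined up to sign. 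Reading off the $X$- and $Z$-supports then yields $\bm{e}_X^f = \widetilde{\bm{e}}_X \oplus (\bm{e}_X,\bm{0})$ and $\bm{e}_Z^f = \widetilde{\bm{e}}_Z$, matching (\ref{eq:state-final-zz}), with $\widetilde{\bm{e}}_X,\widetilde{\bm{e}}_Z$ given by (\ref{eq:xerr-f})--(\ref{eq:zerr-f}). The main obstacle is the bookkeeping in the middle two steps: one must use the zero-syndrome hypothesis to pin $P_{\frac{K}{2}}(\bm{e}_X)$ down on all of $\{1,\dots,K/2\}$ and then verify, through the polar recursion with a vanishing second half $\bm{b}=\bm{0}$, that the propagated error is supported entirely on $\mathcal{S}_1$ (the $(\bm{e}_X,\bm{0})$ shape). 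This localization is precisely what makes the extra correction accountable by $\widetilde{\bm{e}}_X \oplus (\bm{e}_X,\bm{0})$ rather than by some operator spread across both $\mathcal{S}_1$ and $\mathcal{S}_2$.
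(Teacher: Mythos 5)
Your proposal is correct and follows essentially the same route as the paper's proof: derive $\hat{\bm{x}} = \bm{x}\oplus P_{\frac{K}{2}}(\bm{e}_X)\lvert_{\mathcal{X}(k-1)}$, use the zero-syndrome hypothesis to identify $\bm{e}_X = P_{\frac{K}{2}}(\bm{0},\,P_{\frac{K}{2}}(\bm{e}_X)\lvert_{\mathcal{X}(k-1)})$, and push the input-level discrepancy through the polar recursion to land on the operator $X^{(\bm{e}_X,\bm{0})}$ supported on $\mathcal{S}_1$. The only cosmetic difference is that the paper carries out the propagation by expanding the state in the Pauli $Z$ basis and using linearity of $P_K$, whereas you invoke the sandwiching relation $Q_K X^{\bm{w}} Q_K = X^{P_K\bm{w}}$ — these are the same computation.
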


\begin{proof}

%
%
From (\ref{eq:m-out-1}), (\ref{eq:m-out-n}), and (\ref{eq:xval-hat}), we have that
\begin{equation}
\hat{\bm{x}} = \bm{x} \oplus P_{\frac{K}{2}}(\bm{e}_X)\lvert_{\mathcal{X}(n-1)}. \label{eq:hatx-vs-x}
\end{equation}
We may write $\ket{q_{K}^\prime}_{\mathcal{S}}$ in (\ref{eq:joint-state-n}), with respect to $\bm{\hat{x}}$ as follows,
\begin{align}
\ket{q_{K}^\prime}_{\mathcal{S}} &= X^{\widetilde{\bm{e}}_X} Z^{\widetilde{\bm{e}}_Z} Q_K \ket{(\bm{u'}, \bm{x}, \bm{u_2}, \oline{\bm{v_1} \oplus \bm{v_2}} )} \nonumber\\
&= X^{\widetilde{\bm{e}}_X} Z^{\widetilde{\bm{e}}_Z}  \sum_{\bm{x_2}} (-1)^{(\bm{v_1} \oplus \bm{v_2}) \cdot \bm{x_2}} \ket{P_K(\bm{u'}, \bm{\hat{x}} \oplus P_{\frac{K}{2}}(\bm{e}_X)\lvert_{\mathcal{X}(n-1)} , \bm{u_2}, \bm{x_2} )} \nonumber\\
& = X^{\widetilde{\bm{e}}_X } Z^{\widetilde{\bm{e}}_Z} \sum_{\bm{x_2}} (-1)^{(\bm{v_1} \oplus \bm{v_2}) \cdot \bm{x_2} } X^{(\bm{e}_X, \bm{0})} \ket{P_K(\bm{u'}, \bm{\hat{x}}  , \bm{u_2}, \bm{x_2} )} \nonumber \\
&= X^{\widetilde{\bm{e}}_X \oplus (\bm{e}_X, \bm{0})} Z^{\widetilde{\bm{e}}_Z} Q_K \ket{(\bm{u'}, \bm{\hat{x}}, \bm{u_2}, \oline{\bm{v_1} \oplus \bm{v_2}} )},
\end{align}
where in the second equality, we have expanded the quantum state in the Pauli $Z$ basis and used (\ref{eq:hatx-vs-x}), and in the third equality, we have used  $\bm{e}_X = P_{\frac{K}{2}}(P_{\frac{K}{2}}(\bm{e}_X)\lvert_{\mathcal{Z}(n-1)}, P_{\frac{K}{2}}(\bm{e}_X)\lvert_{\mathcal{X}(n-1)}) = P_{\frac{K}{2}}(\bm{0}, P_{\frac{K}{2}}(\bm{e}_X)\lvert_{\mathcal{X}(n-1)})$.

\end{proof}

\subsection*{Case 2: Preparation Using Noisy Pauli $X \otimes X$ Measurements.}


The errors during the preparation with Pauli $X \otimes X$ measurements are given in Fig. \ref{fig:cnot_errxx}.  Note that we have $Z$ type initialization and measurement errors on the ancilla system $\mathcal{S}_3$, which are denoted by $\bm{e}_Z^I, \bm{e}_Z^m \in \{0, 1\}^{K/2}$, respectively. For the errors caused by the $\cnot$ gates, $\cnot_{\mathcal{S}_3 \to \mathcal{S}_1}$ and  $\cnot_{\mathcal{S}_3 \to \mathcal{S}_2}$, we have used the same notation as in the case of Pauli $Z \otimes Z$ measurements.

\begin{figure}[!t]
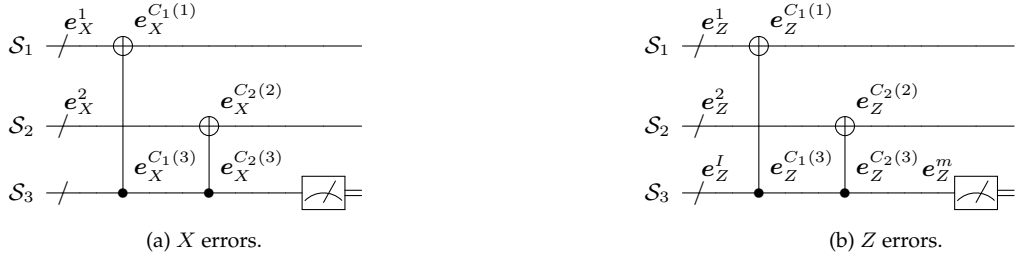

\captionsetup{justification=centering}
\begin{subfigure}[b]{.48\textwidth}
\centering
\,\hfill \input{measurexx_errx}\hfill\,
\caption{$X$ errors.}
\label{fig:cnot_err_xxx}
\end{subfigure}
~
\begin{subfigure}[b]{.48\textwidth}
\centering
\,\hfill \hspace*{-10mm}\input{measurexx_errz}\hfill\,
\caption{$Z$ errors.}
\label{fig:cnot_err_xxz}
\end{subfigure}
\caption{Errors during preparation with Pauli $X \otimes X$ measurements.}
\label{fig:cnot_errxx}
\end{figure}

\smallskip Here, we provide Lemmas \ref{lem:prep-n-x} and \ref{lem:z-propg}, which are analogous to Lemmas \ref{lem:prep-n} and \ref{lem:x-propg}, respectively, and thus their proofs have been omitted.

\begin{lemma}\label{lem:prep-n-x}
The measurement outcome of transversal Pauli $X \otimes X$ measurements on systems $\mathcal{S}_1$ and $\mathcal{S}_2$, respectively in (\ref{eq:q-N/2-1-n}) and (\ref{eq:q-N/2-2-n}), is equal to
\begin{equation}
 \bm{m}^\prime = \bm{m} \oplus \bm{e}_Z \in \{0, 1\}^{\frac{K}{2}}, \label{eq:m-out-n-x}
\end{equation}
where $\bm{m}$ is the measurement outcome  corresponding to the preparation without noise, as in (\ref{eq:m-out-x}), and $\bm{e}_Z = \bm{e}_Z^I \oplus \bm{e}_Z^1 \oplus  \bm{e}_Z^{C_1(3)} \oplus \bm{e}_Z^2 \oplus \bm{e}_Z^{C_2(3)} \oplus \bm{e}_Z^{m} $ . Further, after the measurement, we get,
\begin{equation}
\ket{q_{K}^\prime}_{\mathcal{S}} = X^{\widetilde{\bm{e}}_X} Z^{\widetilde{\bm{e}}_Z} \ket{q_{K}}_{\mathcal{S}},\label{eq:joint-state-n-x}
\end{equation}
where $\ket{q_{K}}_{\mathcal{S}}$ is the prepared state without noise, as in (\ref{eq:joint-state-x}) and  the errors are as follows,
%
%
\begin{align}
\tilde{\bm{e}}_X &= (\bm{e}_X^1 \oplus \bm{e}^{C_1(1)}_X , \bm{e}_X^2 \oplus \bm{e}^{C_1(3)}_X  \oplus \bm{e}_X^{C_2(2)}) \in  \{0, 1\}^{K}. \label{eq:xerr-f-1} \\  
\tilde{\bm{e}}_Z &= (\bm{e}_Z^1 \oplus \bm{e}_Z^{C_1(1)} , \bm{e}_Z^2 \oplus \bm{e}_Z^{C_2(2)}  ) \in  \{0, 1\}^{K}. \label{eq:zerr-f-1}
\end{align}
\end{lemma}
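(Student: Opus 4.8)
The plan is to mirror the proof of Lemma~\ref{lem:prep-n} verbatim, exploiting the fact that the Pauli $X \otimes X$ circuit of Fig.~\ref{fig:mXX_notation_and_circuit} is the Hadamard dual of the $Z \otimes Z$ circuit: the ancilla $\mathcal{S}_3$ is now initialized and measured in the Pauli $X$ basis, and the two transversal $\cnot$ gates are reversed, so that $\mathcal{S}_3$ plays the role of \emph{control} and $\mathcal{S}_1, \mathcal{S}_2$ the role of \emph{targets}. First I would restate the two propagation rules already used for Lemma~\ref{lem:prep-n}: an $X$ error commutes through a target and is copied from control to target, while a $Z$ error commutes through a control and is copied from target to control. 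Because the $\cnot$ orientation is flipped relative to Case~1, the roles of these rules swap: it is now the $X$ errors that get copied onto $\mathcal{S}_1, \mathcal{S}_2$ from the ancilla, and the $Z$ errors that accumulate back onto $\mathcal{S}_3$ from the data qubits, with the $X$-basis measurement of $\mathcal{S}_3$ being flipped by $Z$ (rather than $X$) faults.

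Next I would propagate every error in Fig.~\ref{fig:cnot_errxx} to the right of the circuit, keeping track of the ordering (which faults occur before versus after each of the two $\cnot$ gates), and collect the totals into the analog of Table~\ref{tab:2}. For the $X$ errors (Fig.~\ref{fig:cnot_err_xxx}), $\mathcal{S}_3$ carries no initialization $X$ fault, so $\mathcal{S}_1$ accumulates only $\bm{e}_X^1 \oplus \bm{e}_X^{C_1(1)}$, whereas $\mathcal{S}_2$ additionally receives $\bm{e}_X^{C_1(3)}$ copied forward from the ancilla control at the second $\cnot$, giving $\bm{e}_X^2 \oplus \bm{e}_X^{C_1(3)} \oplus \bm{e}_X^{C_2(2)}$; these are exactly the two blocks of $\widetilde{\bm{e}}_X$ in~(\ref{eq:xerr-f-1}). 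For the $Z$ errors (Fig.~\ref{fig:cnot_err_xxz}), $\mathcal{S}_1$ and $\mathcal{S}_2$ retain $\bm{e}_Z^1 \oplus \bm{e}_Z^{C_1(1)}$ and $\bm{e}_Z^2 \oplus \bm{e}_Z^{C_2(2)}$ respectively, matching~(\ref{eq:zerr-f-1}), while $\mathcal{S}_3$ collects its own initialization and measurement faults together with the $Z$ errors copied back from both targets, namely $\bm{e}_Z^I \oplus \bm{e}_Z^1 \oplus \bm{e}_Z^{C_1(3)} \oplus \bm{e}_Z^2 \oplus \bm{e}_Z^{C_2(3)} \oplus \bm{e}_Z^m$.

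Finally I would read off the two conclusions. Since the ancilla is measured in the Pauli $X$ basis, its outcome is flipped precisely by the total $Z$ error on $\mathcal{S}_3$; identifying that sum with $\bm{e}_Z$ yields $\bm{m}^\prime = \bm{m} \oplus \bm{e}_Z$ as in~(\ref{eq:m-out-n-x}), where $\bm{m} = P_{\frac{K}{2}}^\top(\bm{z}, \bm{v'})$ is the noiseless outcome from~(\ref{eq:m-out-x}). The residual Pauli errors left on $\mathcal{S} = \mathcal{S}_1 \cup \mathcal{S}_2$ are the concatenations of the per-block totals, giving $\ket{q_{K}^\prime}_{\mathcal{S}} = X^{\widetilde{\bm{e}}_X} Z^{\widetilde{\bm{e}}_Z} \ket{q_{K}}_{\mathcal{S}}$ with $\ket{q_{K}}_{\mathcal{S}}$ the noiseless state of~(\ref{eq:joint-state-x}). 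The only genuine care needed---hence the one step I would double-check---is the bookkeeping of fault timing together with the control/target reversal, since it is this reversal that explains why $\bm{e}_X^{C_1(3)}$ (rather than $\bm{e}_Z^{C_1(3)}$, as in Case~1) is the cross term landing on the data block $\mathcal{S}_2$. Conceptually the entire computation is the $X \leftrightarrow Z$ image of Lemma~\ref{lem:prep-n}, so no new estimate or argument is required beyond this duality.
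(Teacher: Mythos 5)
Your proposal is correct and is exactly the argument the paper intends: the paper omits this proof, stating only that it is analogous to Lemma~\ref{lem:prep-n}, and your error-propagation bookkeeping (control/target reversal, $X$-basis ancilla flipped by $Z$ faults, the cross term $\bm{e}_X^{C_1(3)}$ copied onto $\mathcal{S}_2$ between the two $\cnot$s) reproduces the analog of Table~\ref{tab:2} and yields precisely (\ref{eq:m-out-n-x})--(\ref{eq:zerr-f-1}).
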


\medskip We now combine the noisy preparation in Lemma \ref{lem:prep-n-x} with the error detection gadget according to Procedure $2$ from the main text.  Recall that when no error is detected, the prepared state is accepted and the $\bm{z}$ in (\ref{eq:joint-state-n-x}) is estimated as 
\begin{equation}
\bm{\hat{z}} = P_{\frac{K}{2}}^\top (\bm{m}^\prime)\lvert_{\mathcal{Z}(k-1)}. \label{eq:zval-hat}
\end{equation}

\begin{lemma} \label{lem:z-propg}
Consider the error $\bm{e}_Z$ in the measurement outcome $\bm{m}$ from (\ref{eq:m-out-n-x}), so that its syndrome is a zero vector, $i.e.$, $P_{\frac{N}{2}}^\top(\bm{e}_Z) \lvert_{\mathcal{X}(n-1)} = 0$, and let $\bm{\hat{z}}$ be defined as in (\ref{eq:zval-hat}).
Then, the state of the joint system $\mathcal{S}$ in (\ref{eq:joint-state-n-x}) can be written as follows,  with respect to $\bm{\hat{z}}$.
\begin{equation}
\ket{q_{K}^\prime}_\mathcal{S}  = X^{\bm{e}_X^f} Z^{\widetilde{\bm{e}}_Z^f} Q_K \ket{\bm{u_1} \oplus \bm{u_2}, \oline{\bm{v_1}, \bm{\hat{z}}, \bm{v'}}}_\mathcal{S}, \label{eq:state-final-xx}
\end{equation}
where $\bm{e}_X^f = \widetilde{\bm{e}}_X$ and $\bm{e}_Z^f = \widetilde{\bm{e}}_Z \oplus (\bm{0}, \bm{e}_Z)$, so that $\widetilde{\bm{e}}_X$ and $\widetilde{\bm{e}}_Z$ are according to (\ref{eq:xerr-f}) and (\ref{eq:zerr-f}), respectively, and $\bm{e}_Z$ is the measurement error in (\ref{eq:m-out-n-x}).
\end{lemma}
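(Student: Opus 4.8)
The plan is to run the proof of Lemma~\ref{lem:x-propg} in the Pauli $X$ basis, exchanging the roles of the two error sectors throughout: the computational-basis expansion becomes a Pauli $X$-basis expansion via (\ref{eq:q-pol-enc-z}), the transform $P_{\frac{K}{2}}$ is replaced by its transpose $P_{\frac{K}{2}}^\top$, the set $\mathcal{X}(k-1)$ is replaced by $\mathcal{Z}(k-1)$, the decoded random vector $\bm{x}$ by $\bm{z}$, and the residual $X$ error $(\bm{e}_X,\bm{0})$ on the $\mathcal{Z}$ block becomes a residual $Z$ error $(\bm{0},\bm{e}_Z)$ on the $\mathcal{X}$ block. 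Throughout I use $\widetilde{\bm{e}}_X,\widetilde{\bm{e}}_Z$ as given in Lemma~\ref{lem:prep-n-x}.

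First I would establish the $Z$-analogue of (\ref{eq:hatx-vs-x}). Combining the noiseless outcome (\ref{eq:m-out-x}), the noisy outcome (\ref{eq:m-out-n-x}) and the definition (\ref{eq:zval-hat}), and using that $P_{\frac{K}{2}}^\top$ is an involution (so that $P_{\frac{K}{2}}^\top(\bm{m}) = (\bm{z},\bm{v'})$), I obtain $\hat{\bm{z}} = \bm{z}\oplus P_{\frac{K}{2}}^\top(\bm{e}_Z)\lvert_{\mathcal{Z}(k-1)}$.

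Next I would rewrite the prepared state (\ref{eq:joint-state-n-x}) with respect to $\hat{\bm{z}}$. Expanding $\ket{q_K}_\mathcal{S}$ in the Pauli $X$ basis turns its frozen label into $\oline{P_K^\top(\bm{t},\bm{v_1},\bm{z},\bm{v'})}$, summed over an auxiliary vector $\bm{t}$ (the $X$-basis dual of the $\mathcal{Z}(k)$ block), exactly as the $\bm{x_2}$-sum appears in Case~1. Substituting $\bm{z} = \hat{\bm{z}}\oplus P_{\frac{K}{2}}^\top(\bm{e}_Z)\lvert_{\mathcal{Z}(k-1)}$ and invoking the transpose recursion $P_K^\top(\bm{c},\bm{d}) = (P_{\frac{K}{2}}^\top(\bm{c}),\,P_{\frac{K}{2}}^\top(\bm{c}\oplus\bm{d}))$, the perturbation of $\bm{z}$ only enters the second half $\bm{d}=(\bm{z},\bm{v'})$ of the label; together with the zero-syndrome hypothesis $P_{\frac{K}{2}}^\top(\bm{e}_Z)\lvert_{\mathcal{X}(k-1)} = \bm{0}$, which gives $\bm{e}_Z = P_{\frac{K}{2}}^\top\big(P_{\frac{K}{2}}^\top(\bm{e}_Z)\lvert_{\mathcal{Z}(k-1)},\,\bm{0}\big)$, the second output block shifts by exactly $\bm{e}_Z$ while the first is untouched. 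Since $Z$ acts as a bit flip on $X$-basis kets, $Z\ket{\oline{w}} = \ket{\oline{w\oplus 1}}$, this shift factors out as $Z^{(\bm{0},\bm{e}_Z)}$, which combines with the pre-existing $Z^{\widetilde{\bm{e}}_Z}$ to yield $\bm{e}_Z^f = \widetilde{\bm{e}}_Z\oplus(\bm{0},\bm{e}_Z)$, while the $X$ error is left unchanged, $\bm{e}_X^f = \widetilde{\bm{e}}_X$, which is precisely (\ref{eq:state-final-xx}).

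The main obstacle is purely bookkeeping, as in Case~1: I must orient the transpose recursion correctly (note $P_K^\top$ is block \emph{lower}-triangular, so the summand depending on $\bm{c}\oplus\bm{d}$ is the \emph{second} block, not the first), and confirm that $\bm{z}$ indeed sits in the $\mathcal{X}(k)$ positions belonging to the second half $\mathcal{S}_2$, so that the residual $Z$ error reads as $(\bm{0},\bm{e}_Z)$ rather than $(\bm{e}_Z,\bm{0})$. A secondary check is that the auxiliary $\bm{t}$-sum, introduced only to put the whole state in the $X$ basis, passes through the factorization untouched — it does, since the perturbation affects only the frozen $\mathcal{X}$ positions, mirroring how the $\bm{x_2}$-sum was inert in the proof of Lemma~\ref{lem:x-propg}.
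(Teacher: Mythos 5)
Your proposal is correct and follows exactly the route the paper intends: the paper omits this proof, stating only that it is analogous to Lemma~\ref{lem:x-propg}, and you carry out precisely that dualization (involution of $P_{K/2}^\top$ to get $\hat{\bm{z}}=\bm{z}\oplus P_{K/2}^\top(\bm{e}_Z)\lvert_{\mathcal{Z}(k-1)}$, the transpose recursion $P_K^\top(\bm{c},\bm{d})=(P_{K/2}^\top\bm{c},P_{K/2}^\top(\bm{c}\oplus\bm{d}))$, and the zero-syndrome identity to factor out $Z^{(\bm{0},\bm{e}_Z)}$). The bookkeeping you flag — that $\bm{z}$ sits in the $\mathcal{S}_2$ half so the residual error lands on the second block, in contrast to Case~1 — is resolved correctly and matches the lemma's statement.
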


\subsection*{Proof of Theorem $3$ From the Main Text}
We provide a proof of Theorem $3$, using the mathematical induction. The base case, $i.e.$, the zeroth level of recursion, corresponds to the initialization of $N$ qubits in a Pauli $Z$ basis state. Let $T_0$ be the number of failures in initialization. As a failure in initialization produces only a single qubit $X$ error, therefore, we have that $\wt(\bm{e}^f_X) = T_0$, for the error $\bm{e}^f_X \in \{0, 1\}^N$ after the zeroth level of recursion. Hence, Theorem $3$ holds for the zeroth level of recursion.

\smallskip \noindent We show in Lemma \ref{lem:fault-up} below that if Theorem $3$ is true for the $(k-1)^\text{th}, 1 \leq k \leq n$, level of recursion, it remains true for the $k^\text{th}$ level of recursion, therefore, implying that Theorem $3$ holds for any $1 \leq k \leq n$.

\bigskip Recall that at the $k^\text{th}$ level of recursion, we prepare $2^{n-k}$ copies of $\ket{q_K}$, where each copy is prepared by applying transversal two-qubit Pauli measurements on two copies of $\ket{q_{K/2}}$. It is enough to show that Theorem $3$ holds for a given preparation instance of $\ket{q_K}$. Hence, we consider $\ket{q_{K/2}}_{\mathcal{S}_1}$ and $\ket{q_{K/2}}_{\mathcal{S}_2}$ from (\ref{eq:q-N/2-1-n}) and (\ref{eq:q-N/2-2-n}), and for the sake of brevity, we suppose that Pauli $Z \otimes Z$ measurements are applied at the $k^\text{th}$ level of recursion. Similar results to the ones that will be proven here can be obtained for the case of Pauli $X \otimes X$ measurements.

\smallskip  We suppose that $t_k$ component failures happen during the implementation of transversal Pauli $Z \otimes Z$  measurements on $\mathcal{S}_1$ and $\mathcal{S}_2$, using an ancilla system $\mathcal{S}_3$ (see also Fig. \ref{fig:cnot_err}).  We decompose $t_k$ as follows,
\begin{equation}
t_k = t_k^I + t_k^{C_1} + t_k^{C_2} + t_k^{M}, \label{eq:fail-k}
\end{equation}
where $t_k^I$ is the number of failures during the initialization of $\mathcal{S}_3$ in the Pauli $Z$ basis, $t_k^{C_i}$ for $i \in \{1, 2\}$ is the number of $\cnot$ failures in $\cnot_{\mathcal{S}_i \to \mathcal{S}_3}$, and $t_k^{M}$ is the number of failures during the Pauli $Z$ measurement on $\mathcal{S}_3$. To relate with Fig. \ref{fig:cnot_err}, note that the errors $\bm{e}_X^I$ and $\bm{e}_X^M$ are produced by $t_k^I$ and $t_k^{M}$-faults, respectively. Further, the errors $\bm{e}_X^{C_1(1)}, \bm{e}_X^{C_1(3)}, \bm{e}_Z^{C_1(1)}, \bm{e}_Z^{C_1(3)}$ are produced by $t_k^{C_1}$-faults, and finally  the errors $\bm{e}_X^{C_2(2)}, \bm{e}_X^{C_2(3)}, \bm{e}_Z^{C_2(2)}, \bm{e}_Z^{C_2(3)}$ are produced by $t_k^{C_2}$-faults.

\smallskip  We will also need the following notation:
\begin{list}{}{\setlength{\labelwidth}{2em}\setlength{\leftmargin}{2em}\setlength{\listparindent}{0em}}
\vspace*{-2mm}
\item[$(i)$] For $i \in \{1, 2\}$, let $t_k^{C_i}(X)$ be the number of $\cnot$ failures in $\cnot_{\mathcal{S}_i \to \mathcal{S}_3}$ that produce a single qubit $X$ error on the corresponding outputs, $i.e.$, the number of $\cnot$ failures, where one of the following errors $X \otimes I, X \otimes Z, I \otimes X, Z \otimes X, Y \otimes I, Y \otimes I, Y \otimes Z, I \otimes Y, Z \otimes Y$ is produced. Similarly, let $t_k^{C_i}(Z)$ be the number of $\cnot$ failures in $\cnot_{\mathcal{S}_i \to \mathcal{S}_3}$ that produce a single qubit $Z$ error.

\item[$(ii)$]  For $i \in \{1, 2\}$, let $t_k^{C_i}(XX)$ be the number of $\cnot$ failures in  $\cnot_{\mathcal{S}_i \to \mathcal{S}_3}$ that produce a two qubit $X$ error, $i.e.$,  the number of $\cnot$ failures, where one of the following errors $X \otimes X, X \otimes Y, Y \otimes X, Y \otimes Y$ is produced. Similarly, let $t_k^{C_i}(ZZ)$ be the number of $\cnot$ failures in $\cnot_{\mathcal{S}_i \to \mathcal{S}_3}$ that produce a two qubit $Z$ error.
\end{list}
Note that the following inequalities hold trivially for $ i \in \{1,2\}$,
\begin{align}
t_k^{C_i}(X) + t_k^{C_i}(XX) &\leq t_k^{C_i}. \label{eq:fault} \\
t_k^{C_i}(Z) + t_k^{C_i}(ZZ) &\leq t_k^{C_i}. \label{eq:fault-z} 
\end{align}
%
%

In Lemma \ref{lem:inequal-fault}, we provide several inequalities, connecting the number of faults with the weight of the produced errors (see Fig. \ref{fig:cnot_err}). We will use these equalities later in the proof of Lemma \ref{lem:fault-up}.

\begin{lemma} \label{lem:inequal-fault}
The following inequalities hold at the $k^\text{th}$ level of recursion,
\begin{align}
\wt(\bm{e}_X^I) &= t_k^I.  \label{eq:I-x}\\
\wt(\bm{e}_X^M) &= t_k^M.  \label{eq:I-m}\\
\wt(\bm{e}^{C_i(i)}_X \oplus \bm{e}^{C_i(3)}_X ) &= t^{C_i}_k(X), i \in  \{1, 2\}. \label{eq:ci} \\
\wt(\bm{e}^{C_i(i)}_X) + \wt(\bm{e}^{C_i(3)}_X) &= t^{C_i}_k(X) + 2t^{C_i}_k(XX), i \in  \{1, 2\}.  \label{eq:ci-1} \\
\wt(\bm{e}^{C_i(i)}_Z \oplus \bm{e}^{C_i(3)}_Z ) &= t^{C_i}_k(Z), i \in  \{1, 2\}. \label{eq:ci-z} \\
\wt(\bm{e}^{C_i(i)}_Z) + \wt(\bm{e}^{C_i(3)}_Z) &= t^{C_i}_k(Z) + 2t^{C_i}_k(ZZ), i \in  \{1, 2\}.  \label{eq:ci-1z}
\end{align}
\end{lemma}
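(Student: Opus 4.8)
The plan is to prove all six identities by a single position-by-position bookkeeping over the $K/2$ transversal components, exploiting that in the circuit-level model each component — each qubit initialization, each qubit measurement, and each of the $K/2$ transversal CNOTs — is an independent location that fails at most once. Consequently every fault is attached to a distinct position $j \in \{1,\dots,K/2\}$, and the total error vectors appearing in Fig.~\ref{fig:cnot_err} are simply the superpositions of these per-position contributions. I would classify the Pauli produced by each faulty location according to its \emph{$X$-support} (the tensor factors lying in $\{X,Y\}$) and its \emph{$Z$-support} (the factors in $\{Z,Y\}$), since these supports are exactly what feed into the $\bm{e}_X^{\bullet}$ and $\bm{e}_Z^{\bullet}$ vectors.

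For (\ref{eq:I-x}) and (\ref{eq:I-m}) the argument is immediate: a failed $Z$-basis initialization or measurement of an ancilla qubit produces a single-qubit $X$ error on that qubit, and since the $t_k^I$ (resp.\ $t_k^M$) faults sit on distinct qubits of $\mathcal{S}_3$, the resulting vector $\bm{e}_X^I$ (resp.\ $\bm{e}_X^M$) has exactly that many nonzero entries.

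The substance is in (\ref{eq:ci})--(\ref{eq:ci-1z}), which I would treat position by position for the gate $\cnot_{\mathcal{S}_i \to \mathcal{S}_3}$. At position $j$ the pair $\big(\bm{e}^{C_i(i)}_X[j],\,\bm{e}^{C_i(3)}_X[j]\big)$ records the $X$-support of the two-qubit Pauli inserted by the $j$-th CNOT fault (control on $\mathcal{S}_i$, target on $\mathcal{S}_3$): it is $(0,0)$ when no $X$ appears, a weight-one pair when exactly one factor lies in $\{X,Y\}$, and $(1,1)$ when both do. Hence the XOR entry $\bm{e}^{C_i(i)}_X[j]\oplus\bm{e}^{C_i(3)}_X[j]$ equals $1$ precisely on the weight-one $X$-support faults, whose total count is $t_k^{C_i}(X)$, giving (\ref{eq:ci}); meanwhile the integer sum $\bm{e}^{C_i(i)}_X[j]+\bm{e}^{C_i(3)}_X[j]$ contributes $1$ for a weight-one and $2$ for a weight-two $X$-support, yielding $t_k^{C_i}(X)+2t_k^{C_i}(XX)$ and hence (\ref{eq:ci-1}). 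Replacing ``$X$-support'' by ``$Z$-support'' and adjusting the control/target roles accordingly gives (\ref{eq:ci-z}) and (\ref{eq:ci-1z}) by the identical computation.

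The only point requiring care — rather than a genuine obstacle — is that the $X$- and $Z$-classifications of a single CNOT fault need not be disjoint: a fault producing, say, $Y\otimes X$ has $X$-support weight two and $Z$-support weight one, so it is counted once in $t_k^{C_i}(XX)$ and once in $t_k^{C_i}(Z)$. This causes no difficulty, since (\ref{eq:ci})--(\ref{eq:ci-1z}) assert separate identities for the $X$ and $Z$ parts; one simply keeps the two supports as independent tallies throughout, which is also precisely what makes the already-stated inequalities (\ref{eq:fault})--(\ref{eq:fault-z}) hold.
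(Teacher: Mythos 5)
Your proposal is correct and follows essentially the same route as the paper: equalities (\ref{eq:I-x})--(\ref{eq:I-m}) are immediate from the single-qubit nature of initialization/measurement faults, and (\ref{eq:ci})--(\ref{eq:ci-1z}) follow from the same position-by-position observation that a weight-one $X$-support fault sets exactly one of the two component bits while a weight-two fault sets both, so the XOR counts the former and the integer sum counts the former once and the latter twice. Your extra remark that the $X$- and $Z$-classifications of a single fault need not be disjoint is a correct and worthwhile clarification, but it does not change the argument.
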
 

\begin{proof}
The equalities in (\ref{eq:I-x}) and (\ref{eq:I-m}) simply follow from the fact a failure in Pauli $Z$ basis initialization or measurement produces a single qubit $X$ error on the output.


\smallskip Suppose the $j^\text{th}, j \in \{1, \dots, K/2\}$-$\cnot$ gate in $\cnot_{\mathcal{S}_i \to \mathcal{S}_3}$ failed. If it produces a single qubit $X$ error on the output, the $j^\text{th}$ component of either $\bm{e}^{C_i(i)}_X$ or $\bm{e}^{C_i(3)}_X$ is equal to $1$. If it produces a two qubit $X$ error on the output, the $j^\text{th}$ component of both $\bm{e}^{C_i(i)}_X$ and $\bm{e}^{C_i(3)}_X$ is equal to $1$. This observation directly implies (\ref{eq:ci}) and (\ref{eq:ci-1}). Similarly, it can be seen that (\ref{eq:ci-z}) and (\ref{eq:ci-1z}) hold.


\end{proof}

\begin{lemma} \label{lem:fault-up}
Suppose that  $T_{k-1}^1$ faults occur during the (successful) preparation of $\ket{q_{K/2}}_{\mathcal{S}_1}$ in (\ref{eq:q-N/2-1-n}) and  $T_{k-2}^2$ faults occur during the preparation of $\ket{q_{K/2}}_{\mathcal{S}_2}$ in (\ref{eq:q-N/2-2-n}), using the measurement based procedure incorporated with the error detection.  Further, suppose that the following holds for $i \in \{1,2\}$
\begin{align}
\wt(\bm{e}_X^i) \leq T_{k-1}^i. \label{eq:k/2-fail-1} \\
\wt(\bm{e}_Z^i) \leq T_{k-1}^i. \label{eq:k/2-fail-2}
\end{align}
Consider the prepared state $\ket{q_K^\prime}$ according to (\ref{eq:state-final-zz}). Then, there exist equivalent errors $\bm{e}_X^{\prime f} \equiv \bm{e}_X^{f}$, $\bm{e}_Z^{\prime f} \equiv \bm{e}_Z^{f}$, satisfying,
\begin{align}
\wt(\bm{e}_X^{\prime f}) &\leq T_k, \label{eq:final-x-wt} \\
\wt(\bm{e}_Z^{\prime f}) &\leq T_k, \label{eq:final-z-wt}
\end{align}
where $T_k = T_{k-1}^1 + T_{k-1}^2 + t_k$, where $t_k$ is from (\ref{eq:fail-k}), is the total number of failures after the $k^\text{th}$ level of recursion.
\end{lemma}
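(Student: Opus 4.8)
The plan is to prove Lemma \ref{lem:fault-up} by first propagating the component errors through the $k$-th recursion level, then reducing the resulting final error modulo the stabilizer group of the length-$K$ state $\ket{q_K}_\mathcal{S}$. I would substitute (\ref{eq:xerr-f})--(\ref{eq:zerr-f}) and the expression for $\bm{e}_X$ from Lemma \ref{lem:prep-n} into $\bm{e}_X^f$, $\bm{e}_Z^f$ of (\ref{eq:state-final-zz}), writing each as a pair of blocks on $\mathcal{S}_1,\mathcal{S}_2$. A first useful observation is that the incoming error $\bm{e}_X^1$ cancels in the first block, leaving $\bm{e}_X^f=(\bm{e}^{C_1(1)}_X\oplus\bm{e}_X^I\oplus\bm{e}_X^{C_1(3)}\oplus\bm{e}_X^2\oplus\bm{e}_X^{C_2(3)}\oplus\bm{e}_X^M,\ \bm{e}_X^2\oplus\bm{e}_X^{C_2(2)})$ and $\bm{e}_Z^f=(\bm{e}_Z^1\oplus\bm{e}_Z^{C_1(1)},\ \bm{e}_Z^2\oplus\bm{e}^{C_1(3)}_Z\oplus\bm{e}_Z^{C_2(2)})$. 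Since each component error is produced by at most the $t_k$ new faults and the $T_{k-1}^i$ incoming ones (via (\ref{eq:I-x})--(\ref{eq:I-m}), (\ref{eq:ci})--(\ref{eq:ci-z}), (\ref{eq:fault})--(\ref{eq:fault-z}), (\ref{eq:k/2-fail-1})--(\ref{eq:k/2-fail-2})), the whole difficulty is to route these errors so that each fault and each incoming error is counted at most once.

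The heart of the argument is a description of the admissible reductions, i.e.\ the $X$- and $Z$-type stabilizers of $\ket{q_K}_\mathcal{S}$. Using Lemma \ref{lem:stab-gen}, the polar recursion $P_K(\bm{a},\bm{b})=(P_{K/2}(\bm{a}\oplus\bm{b}),P_{K/2}\bm{b})$, and the Case~$1$ identities $\mathcal{X}(k)=\{K/2+i(k-1)+1,\dots,K\}$, $\mathcal{Z}(k)=\{1,\dots,K/2+i(k-1)\}$, I would show the crucial asymmetry: the $X$-stabilizers are exactly the \emph{diagonal} vectors $(\bm{s},\bm{s})$ with $\bm{s}$ an $X$-stabilizer of the length-$K/2$ state, whereas the $Z$-stabilizers are far richer, containing $(\bm{t},\bm{t})$ for \emph{every} $\bm{t}\in\{0,1\}^{K/2}$ (the columns of $P_{K/2}^\top$ already span everything), plus generators supported on $\mathcal{S}_2$ alone.

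For the $Z$-error this richness settles matters immediately. I would add the stabilizer $(\bm{e}^{C_1(3)}_Z,\bm{e}^{C_1(3)}_Z)$ to merge the two halves of the split $\cnot_{\mathcal{S}_1}$ error; the combined term $\bm{e}_Z^{C_1(1)}\oplus\bm{e}^{C_1(3)}_Z$ has weight $t_k^{C_1}(Z)$ by (\ref{eq:ci-z}), and the second block becomes $\bm{e}_Z^2\oplus\bm{e}_Z^{C_2(2)}$. With (\ref{eq:k/2-fail-2}), (\ref{eq:fault-z}) and (\ref{eq:fault-k}) this gives $\wt(\bm{e}_Z^{\prime f})\le T_{k-1}^1+T_{k-1}^2+t_k^{C_1}+t_k^{C_2}\le T_k$, establishing (\ref{eq:final-z-wt}).

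The $X$-error is the main obstacle, because only the poor diagonal $X$-stabilizers are available, yet the incoming error $\bm{e}_X^2$ and the two-qubit $\cnot_{\mathcal{S}_2}$ faults land at \emph{identical} positions in both blocks, so a naive bound doubles them. The essential leverage is the error-detection condition (\ref{eq:zero-syn}): acceptance forces $P_{K/2}(\bm{e}_X)\lvert_{\mathcal{Z}(k-1)}=\bm{0}$, which, exactly as in the proof of Lemma \ref{lem:x-propg}, means $\bm{e}_X\in\mathrm{Stab}_X^{K/2}$, so that $(\bm{e}_X,\bm{e}_X)$ is a genuine stabilizer and, modulo this group, $\bm{e}_X^2\oplus\bm{e}_X^{C_2(3)}\equiv \bm{e}_X^I\oplus\bm{e}_X^1\oplus\bm{e}_X^{C_1(3)}\oplus\bm{e}_X^M$. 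I would then reduce $\bm{e}_X^f$ by the diagonal stabilizer closest to $\bm{e}_X^2\oplus\bm{e}_X^{C_2(3)}$: this is what simultaneously clears $\bm{e}_X^2$ from both blocks and recombines the $\cnot_{\mathcal{S}_2}$ errors into the single-qubit part (weight $t_k^{C_2}(X)$), while the $\cnot_{\mathcal{S}_1}$ errors stay recombined in the first block (weight $t_k^{C_1}(X)$ by (\ref{eq:ci})). In the noiseless-transfer limit $t_k=0$ this is transparent: $\bm{e}_X^f=(\bm{e}_X^2,\bm{e}_X^2)$ with $\bm{e}_X^2\equiv\bm{e}_X^1$, so the minimal-weight representative of the common coset has weight $\le\min(T_{k-1}^1,T_{k-1}^2)$ and $\wt(\bm{e}_X^{\prime f})\le 2\min(T_{k-1}^1,T_{k-1}^2)\le T_{k-1}^1+T_{k-1}^2=T_k$. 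The hard part, and the step I expect to demand the most careful bookkeeping, is to carry this coset reduction through in the presence of new faults so that the doubling of the incoming and two-qubit errors collapses to the harmless $2\min(T_{k-1}^1,T_{k-1}^2)$ without inflating the single-fault count of $\bm{e}_X^f$ beyond $t_k$; combining these two contributions then yields (\ref{eq:final-x-wt}), and with (\ref{eq:final-z-wt}) completes the induction and hence Theorem $3$.
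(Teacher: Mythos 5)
Your proposal follows the same route as the paper: propagate the component errors through the $k$-th level, then reduce modulo the two available stabilizer-equivalence moves --- the universal $Z\otimes Z$ diagonal $(\bm{t},\bm{t})$ for $Z$ errors, and the measured-out diagonal $(\bm{e}_X,\bm{e}_X)$ for $X$ errors, which is a stabilizer precisely because acceptance forces the zero syndrome --- and finish with a $2\min\le\text{sum}$ argument. Your structural analysis of the stabilizer group of $\ket{q_K}$ in Case~1 is correct, and your $Z$-error argument (adding $(\bm{e}_Z^{C_1(3)},\bm{e}_Z^{C_1(3)})$ to merge the split $\cnot$ error, then invoking (\ref{eq:ci-z}), (\ref{eq:fault-z}) and (\ref{eq:fail-k})) is complete and coincides with the paper's. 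The gap is in the $X$-error case: you verify only the noiseless-transfer limit and explicitly defer the bookkeeping that is the actual content of the bound. Moreover, the reduction you describe --- ``the diagonal stabilizer closest to $\bm{e}_X^2\oplus\bm{e}_X^{C_2(3)}$'' --- is not a well-defined move: $\bm{e}_X^2\oplus\bm{e}_X^{C_2(3)}$ is not in general close to any element of the length-$K/2$ $X$-stabilizer group, and the only freedom your analysis actually provides is adding $(\bm{e}_X,\bm{e}_X)$ itself, which toggles between exactly two representatives, $\bm{e}_X^f=\widetilde{\bm{e}}_X\oplus(\bm{e}_X,\bm{0})$ and $\bm{e}_X^{\prime f}=\widetilde{\bm{e}}_X\oplus(\bm{0},\bm{e}_X)$.

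The missing step is short and you already have every ingredient. For the first representative, the block on $\mathcal{S}_1$ is $\bm{e}_X^2\oplus\bm{e}_X^I\oplus\bm{e}_X^{C_1(1)}\oplus\bm{e}_X^{C_1(3)}\oplus\bm{e}_X^{C_2(3)}\oplus\bm{e}_X^M$ (the incoming $\bm{e}_X^1$ cancels, as you noted) and the block on $\mathcal{S}_2$ is $\bm{e}_X^2\oplus\bm{e}_X^{C_2(2)}$. Grouping $\wt(\bm{e}_X^{C_1(1)}\oplus\bm{e}_X^{C_1(3)})=t_k^{C_1}(X)$ and $\wt(\bm{e}_X^{C_2(2)})+\wt(\bm{e}_X^{C_2(3)})=t_k^{C_2}(X)+2t_k^{C_2}(XX)$ from (\ref{eq:ci})--(\ref{eq:ci-1}) gives
\begin{equation*}
\wt(\bm{e}_X^{f})\ \le\ 2\bigl(\wt(\bm{e}_X^{2})+t_k^{C_2}(XX)\bigr)+t_k^I+t_k^{C_1}(X)+t_k^{C_2}(X)+t_k^M,
\end{equation*}
and symmetrically $\wt(\bm{e}_X^{\prime f})\le 2\bigl(\wt(\bm{e}_X^{1})+t_k^{C_1}(XX)\bigr)+t_k^I+t_k^{C_1}(X)+t_k^{C_2}(X)+t_k^M$. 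The quantity that gets doubled is therefore $\wt(\bm{e}_X^{i})+t_k^{C_i}(XX)$, not $T_{k-1}^i$ alone; choosing the representative that minimizes it and using $2\min\le\text{sum}$ together with $t_k^{C_i}(X)+t_k^{C_i}(XX)\le t_k^{C_i}$ from (\ref{eq:fault}) and the hypothesis (\ref{eq:k/2-fail-1}) yields $T_{k-1}^1+T_{k-1}^2+t_k$. With this computation inserted in place of your ``expected'' step, your proof matches the paper's.
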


\begin{proof}


\smallskip We first prove (\ref{eq:final-x-wt}).  Let $\bm{e}_X$ be the measurement error in (\ref{eq:m-out-n}), so that its syndrome is zero according to (\ref{eq:zero-syn}). Then the syndrome of  $(\bm{e}_X, \bm{e}_X)$ is also zero, that is, $P_N(\bm{e}_X, \bm{e}_X) \lvert_{\mathcal{Z}(k)} = \bm{0}$. It can be seen as follows.
\noindent We have that
\begin{align}
(\bm{e}_X, \bm{e}_X) &= ( P_{K/2}(\bm{0}, P_{\frac{K}{2}}(\bm{e}_X)\lvert_{\mathcal{X}(n-1)}), P_{K/2}(\bm{0}, P_{\frac{K}{2}}(\bm{e}_X)\lvert_{\mathcal{X}(n-1)})) \\
&= P_{K}(\bm{0}, \bm{0}, \bm{0}, P_{\frac{K}{2}}(\bm{e}_X)\lvert_{\mathcal{X}(n-1)}) \label{eq:int-lemma}
\end{align}
Using $\lvert \mathcal{X}(n-1) \lvert = \lvert \mathcal{X}(n) \lvert $ and (\ref{eq:int-lemma}), it follows that  $P_N(\bm{e}_X, \bm{e}_X) \lvert_{\mathcal{Z}(k)} = \bm{0}$. Therefore, $(\bm{e}_X, \bm{e}_X)$ gives a stabilizer (up to a sign factor) of the quantum state $\ket{q_K^\prime}$ in (\ref{eq:state-final-zz}), implying that $\bm{e}_X^{\prime f}  \eqdef \widetilde{\bm{e}}_X + (\bm{0}, \bm{e}_X) $ is an equivalent error to $\bm{e}_X^f = \widetilde{\bm{e}}_X + (\bm{e}_X, \bm{0})$ in (\ref{eq:state-final-zz}).

\medskip We now provide upper bounds on $\wt(\bm{e}_X^f)$ and $\wt(\bm{e}_X^{\prime f})$.  

\smallskip \noindent From (\ref{eq:m-out-n}) and (\ref{eq:xerr-f}), we have that
\begin{align}
\bm{e}_X^f & = \big( \bm{e}^2_X \oplus \bm{e}^{I}_X \oplus   \bm{e}^{C_1(1)}_X \oplus \bm{e}^{C_1(3)}_X \oplus  \bm{e}^{C_2(3)}_X \oplus  \bm{e}^{M}_X, \bm{e}^2_X \oplus  \bm{e}^{C_2(2)}_X \big). \label{eq:x-f-err}
\end{align}
%

%
%
%
\noindent Further,
\begin{align}
\wt(\bm{e}_X^f) & \leq 2 \wt(\bm{e}^2_X) + \wt(\bm{e}^{I}_X) + \wt(\bm{e}^{C_1(1)}_X \oplus \bm{e}^{C_1(3)}_X ) + \wt(\bm{e}^{C_2(2)}_X) \oplus \wt(\bm{e}^{C_2(3)}_X) + \wt(\bm{e}^{I}_X) \\
 & \leq 2 \big(\wt(\bm{e}^2_X) + t^{C_2}_k(XX)\big) + t^I_k  + t^{C_1}_k(X) +  t^{C_2}_k(X) + t_k^{M}, \label{eq:wt-xerr-1}
\end{align}
where the first equality follows from (\ref{eq:x-f-err}) and using $\wt(\sum_i \bm{u}_i) \leq \sum_i \wt(\bm{u})_i$, and the second equality follows from (\ref{eq:ci})-(\ref{eq:ci-1}).
Similarly, it can be shown that
\begin{equation}
\wt(\bm{e}_X^{\prime f}) \leq 2 (\wt(\bm{e}^1_X) + t^{C_1}_k(XX)) + t^I_k  + t^{C_1}_k(X) +  t^{C_2}_k(X) + t_k^{M} \label{eq:wt-xerr-2}
\end{equation}
Without loss of generality, we may assume that $\wt(\bm{e}^1_X) + t^{C_1}_k(XX)  \leq \wt(\bm{e}^2_X) + t^{C_2}_k(XX)$. Therefore, we have that
\begin{align}
\wt(\bm{e}_X^{\prime f}) & \leq \wt(\bm{e}^1_X) + t^{C_1}_k(XX) + \wt(\bm{e}^2_X) + t^{C_2}_k(XX) +  t^I_k  + t^{C_1}_k(X) +  t^{C_2}_k(X) + t_k^{M} \\
& \leq  \wt(\bm{e}^1_X) + \wt(\bm{e}^2_X) + t^I_k  + (t^{C_1}_k(X)  + t^{C_1}_k(XX)) +  ( t^{C_2}_k(X) +  t^{C_2}_k(XX)) + t_k^{M} \\
& \leq  T_{k-1}^1 + T_{k-1}^2  +  t^I_k + t_k^{C_1} + t_k^{C_2} + t_k^{M} \\
& \leq T_{k-1}^1 + T_{k-1}^2 + t_k, 
\end{align}
where the third equality follows from (\ref{eq:k/2-fail-1}) and (\ref{eq:fault}), and the fourth equality follows from (\ref{eq:fail-k}).
%

\medskip We now prove (\ref{eq:final-z-wt}). Using (\ref{eq:zerr-f}), we get for $\bm{e}_Z^f$ in (\ref{eq:state-final-zz}),
\begin{equation}
\bm{e}_Z^f = (\bm{e}_Z^1 \oplus \bm{e}_Z^{C_1(1)} , \bm{e}_Z^2 \oplus  \bm{e}^{C_1(3)}_Z  \oplus \bm{e}_Z^{C_2(2)}). \label{eq:z-f-err}
\end{equation}
%
As $Z \otimes Z$ on the corresponding qubits of $\mathcal{S}_1$ and $\mathcal{S}_2$ is a stabilizer generator, it follows that the following error is equivalent to $\bm{e}_Z^f$ in (\ref{eq:z-f-err}),
\begin{align}
\bm{e}_Z^{\prime f} = \bm{e}_Z^{f} \oplus (\bm{e}_Z^{C_1(3)}, \bm{e}_Z^{C_1(3)})= (\bm{e}_Z^1 \oplus \bm{e}_Z^{C_1(1)} \oplus \bm{e}_Z^{C_1(3)}  , \bm{e}_Z^2  \oplus \bm{e}_Z^{C_2(2)}), \label{eq:z-f-eq}
\end{align}
We have, 
\begin{align}
\wt(\bm{e}_Z^{\prime f}) &\leq \wt(\bm{e}_Z^1)  +  \wt(\bm{e}_Z^2) + \wt(\bm{e}_Z^{C_1(1)} \oplus \bm{e}_Z^{C_1(3)}) + \wt(\bm{e}_Z^{C_2(2)})   \\
& \leq  T_{k-1}^1 + T_{k-1}^2 + t_k^{C_1}(Z) +   t_k^{C_2}(Z) +  2t_k^{C_2}(ZZ) \\
& \leq  T_{k-1}^1 + T_{k-1}^2 + t_k^{C_1} +  t_k^{C_2} \\
& \leq T_{k-1}^1 + T_{k-1}^2 + t_k,
\end{align}
where the first inequality follows from (\ref{eq:z-f-eq}) and using $\wt(\sum_i \bm{u}_i) \leq \sum_i \wt(\bm{u})_i$, the second inequality follows from (\ref{eq:k/2-fail-2}), and (\ref{eq:ci-z})-(\ref{eq:ci-1z}), the third inequality follows from (\ref{eq:fault-z}), and finally the fourth inequality follows from (\ref{eq:fail-k}).

\end{proof}

%

%

%

\section{Numerical Results on Fault Tolerant Error Correction}
\makeatletter
\def\@currentlabel{\thesubsection}
\makeatother
 \label{sec:num_sim-prep}


In this section, we provide details about the methods used to generate numerical results regarding the logical error rates of $\pone$ codes, under Steane error correction in conjunction with the proposed measurement based preparation with error detection. In this context, we also provide new simulation results on the rate of successful preparation of $\pone$ code states, for the measurement based preparation with error detection. 

\smallskip We start by presenting first the noise model used to produce errors during the simulation of the measurement-based preparation procedure and the Steane error-correction scheme. Note that we consider the implementation of Pauli $Z \otimes Z$ and Pauli $X \otimes X$ measurements according to circuits in Fig.~\ref{fig:mZZ_notation_and_circuit} and Fig.~\ref{fig:mXX_notation_and_circuit}, respectively. However, for the Pauli $X \otimes X$ measurement in Fig.~\ref{fig:mXX_notation_and_circuit}, we consider the initialization in  Pauli $Z$ basis followed by the Hadamard gate as one operation, corresponding to the initialization in Pauli $X$ basis, and similarly, we consider the last Hadamard gate followed by  the Pauli $X$ measurement as one operation, corresponding to a Pauli $X$ measurement. 



\subsection{Noise Model} 
\makeatletter
\def\@currentlabel{\thesubsection}
\makeatother
\label{sec:noise-mod}

We only need noise models for the basic components of the procedure, $i.e$, initialization operations and  single-qubit measurements, in either $Z$ or $X$ basis, and $\cnot$ gates. We consider the following types of errors, corresponding to the circuit based depolarizing noise model from~\cite{fowler2012surface}. The probability parameter $p$ below, is referred to as the \emph{physical error rate}.

\begin{itemize}
\item[$(1)$] The noisy initialization in Pauli $Z$ basis is equal to perfectly initializing a qubit in a Pauli $Z$ basis state, then applying a Pauli $X$ error on the qubit, with probability $p$. 
Similarly, the noisy initialization in Pauli $X$ basis is equal to perfectly initializing a qubit in a Pauli $X$ basis state, then applying a Pauli $Z$ error on the qubit, with probability $p$.

\item[$(2)$] The noisy Pauli $Z$ measurement is equal to first applying a Pauli $X$ error, with probability $p$, on the qubit we want to measure, and then applying  the perfect Pauli $Z$ measurement. Similarly, the noisy Pauli $X$ measurement is equal to first applying a Pauli $Z$ error, with probability $p$, and then applying  the perfect Pauli $X$ measurement.

\item[$(3)$] The noisy $\cnot$ gate is equal to the perfect $\cnot$ followed by a two-qubit depolarizing channel, with error probability $p$. Precisely, after the perfect $\cnot$,  any one of the $15$ two qubit Pauli errors $ I \otimes X, I\otimes Y , I\otimes Z, X\otimes I, X\otimes X, X\otimes Y, X\otimes Z, Y\otimes I, Y\otimes X, Y\otimes Y , Y\otimes Z,  Z\otimes I, Z\otimes X, Z\otimes Y , Z\otimes Z$, may occur with probability $\frac{p}{15}$.

\end{itemize}


\begin{figure}[!t] 
\centering
\,\hfill\hspace*{2em}\input{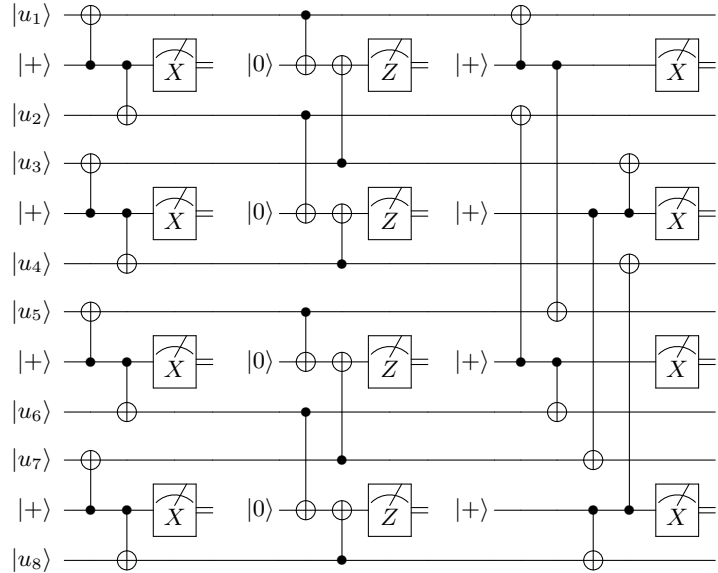}\hfill\,
\caption{Measurement based preparation for the $\pone(N = 8, i(n) = 3)$, with Pauli $Z \otimes Z$ and Pauli $X \otimes X$ measurements implemented according to Fig.~\ref{fig:mZZ_notation_and_circuit} and Fig.~\ref{fig:mXX_notation_and_circuit}. Initialization operations are shown as ket states. For single-qubit measurement operations, the Pauli basis is also indicated.  Initialization operations, measurements, and $\cnot$ gates are noisy, according to the assumed circuit level depolarizing noise model. Note that errors are generated only by the above noisy operations (we do not consider errors that might occur while the qubits are idle). Error detection is performed after each recursion level (one round of measurements of the four ancilla qubits), and the procedure is restarted from the beginning if an error is detected. Error detection after the first level of recursion is useless, hence it is not performed.}
\label{fig:qpolarprep_N8_i3_expanded}
\end{figure}

\subsection{Simulation of the Preparation Procedure} 
\makeatletter
\def\@currentlabel{\thesubsection}
\makeatother
\label{subsec:prep-sim}
Consider the $\pone$ code state $\ket{q_N}_\mathcal{S}$ on the $N =2^n$ qubit system $\mathcal{S} = \{1, \dots, N\}$. To prepare $\ket{q_N}_\mathcal{S}$, we first initialize the $N$ qubits in a Pauli $Z$ basis state (using the noisy initialization defined above), and then follow the recursive preparation procedure based on two qubit Pauli measurements. During the recursive procedure, initialization operations, $\cnot$ gates, and single-qubit measurements  are replaced by their noisy versions, and errors generated at some point by  noisy operations are propagated throughout the rest of the procedure.
If an error is detected at any recursion level (according to the error detection method from Procedure $2$ in the main text), then  we discard the whole procedure and restart from the beginning, by initializing the $N$ qubits in a Pauli $Z$ basis state.  See also the example in Fig.~\ref{fig:qpolarprep_N8_i3_expanded}.

\smallskip As qubits are initialized in Pauli $Z$ basis, we may ignore the first $k, 1 \leq k \leq n$ levels of recursion in case Pauli $Z \otimes Z$ measurements are needed to be applied consecutively for the first $k$ levels of recursion correspond to applying Pauli $Z \otimes Z$ measurements. This reduces the number of component in the preparation circuit, hence improving the probability that the preparation succeeds.

\smallskip At the end of recursion, we have the knowledge of the $\ket{\bm{u}}_{\mathcal{Z}(n)}$ and $\ket{\bm{v}}_{\mathcal{X}(n)}$, hence the knowledge of prepared $\pone$ code state. Further, we also know the total $X$ and $Z$ type errors on the prepared state, which we will use for simulating Steane's error correction in the next sections.

\smallskip Further, we determine the preparation rate of the measurement based procedure as follows. We run the preparation procedure $R > 0$ times, and denote by $t$ the number of times the preparation completed (that is, no error has been detected during the preparation procedure). Then, the preparation rate, denoted by $p_{\text{prep}}$, is defined as follows,
\begin{equation}
p_{\text{prep}}= \lim_{R \to \infty} \frac{t}{R}.
\end{equation}

\paragraph*{Prepared Codes.} We consider $\pone$ codes of length $N = 16$ and $N= 64$ qubits. We choose the information position $i$ according to the results in Table~\ref{tab:b-indx} (best positions of the corresponding lengths, for sufficiently low  error rate), assuming a depolarizing noise model and ignoring correlations between $X$ and $Z$ errors. Thus, for $N = 16$, we take the information position $ i = 	7$, and for $N = 64$, we take $i = 23$. 

\smallskip The fact that the $\pone$ code construction ignores correlations between between $X$ and $Z$ errors is due to the Steane error correction. Indeed, $X$ and $Z$ errors are corrected independently, considering ancilla states prepared in either a logical $X$ basis state $\ket{\widetilde {\oline{\bm{w'}}}}_{\mathcal{S}'}$, or a logical $Z$ basis state $\ket{\widetilde{\bm{w'}}}_{\mathcal{S}'}$, respectively. During the $X$-error correction step (Fig.~\ref{fig:steane_Xerror}), $Z$ errors that happened on the ancilla system $\mathcal{S}'$ while preparing $\ket{\widetilde {\oline{\bm{w'}}}}_{\mathcal{S}'}$, are copied to the original system $\mathcal{S}$. Similarly, during the $Z$-error correction step (Fig.~\ref{fig:steane_Zerror}), $X$ errors that happened on the ancilla system $\mathcal{S}'$ while preparing $\ket{\widetilde{\bm{w'}}}_{\mathcal{S}'}$, are copied to $\mathcal{S}$. Clearly, these $X$ and $Z$ errors are decorrelated, since they happened during the preparation of different logical states.

\begin{figure}[!t]
\,\hfill\includegraphics[width = 0.495\linewidth]{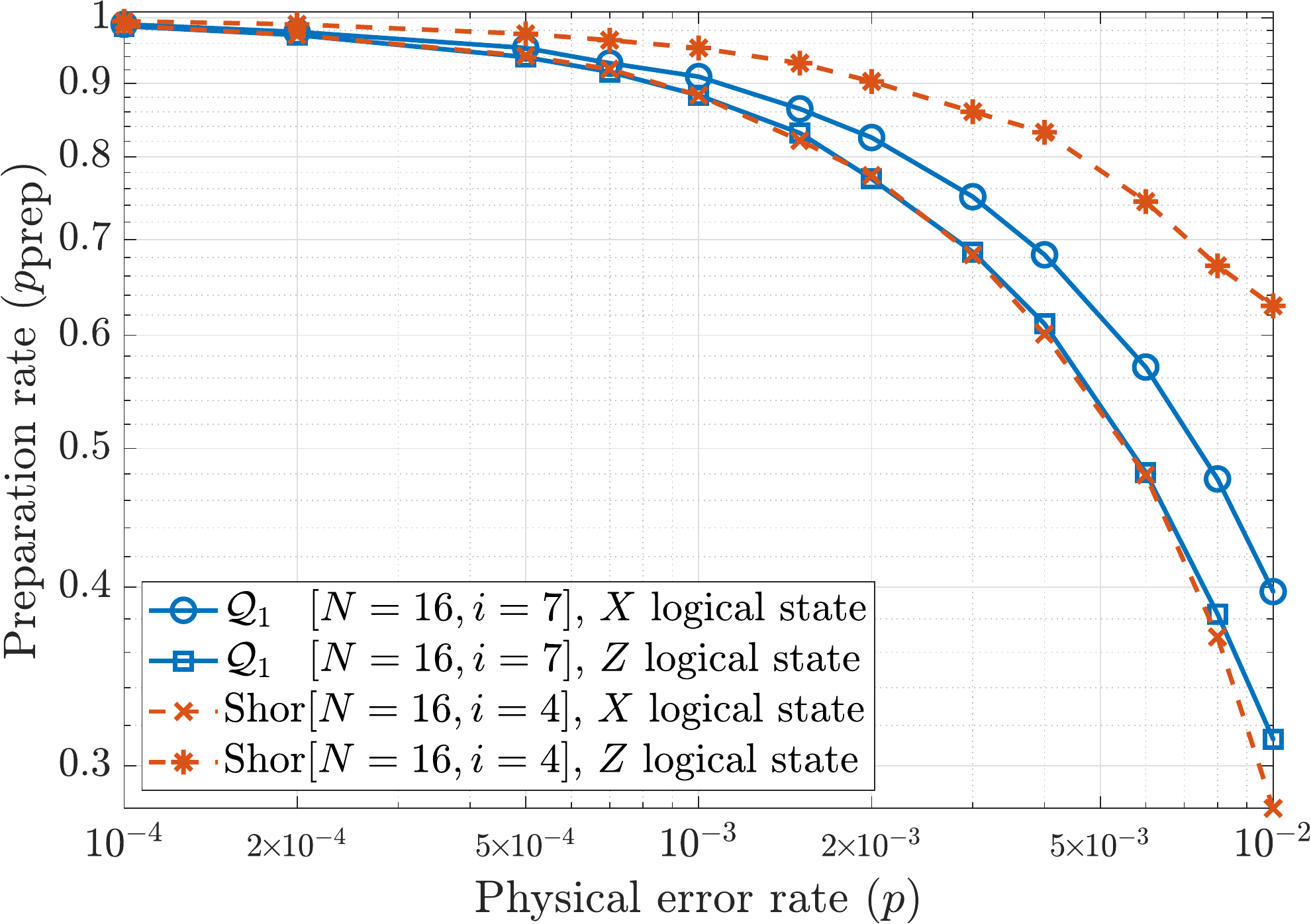} \hfill\,
\,\hfill\includegraphics[width = 0.485 \linewidth]{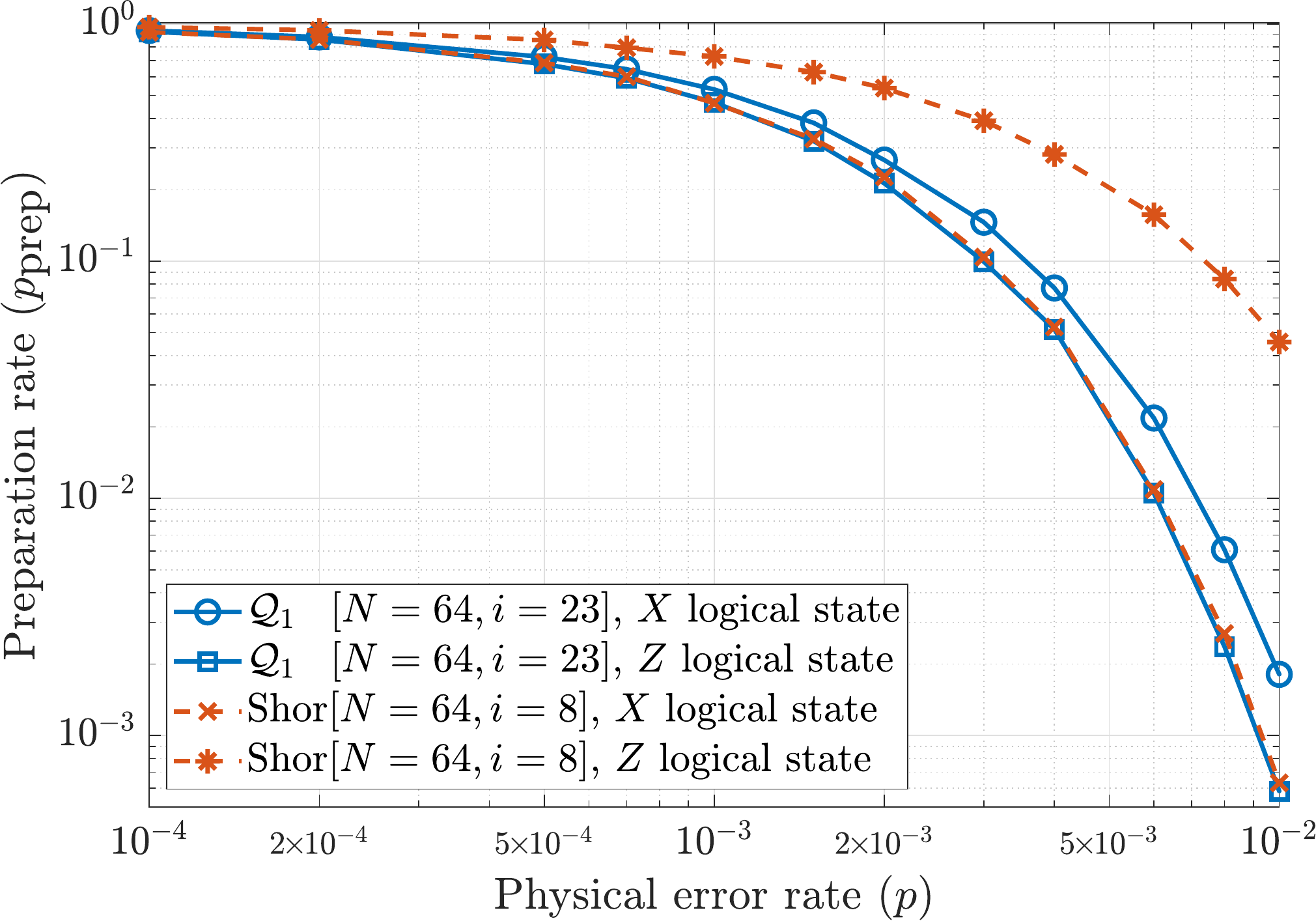} \hfill\,
\captionsetup{justification=centering}
\vspace*{-5mm}\caption{Preparation rate for the logical $X$ and logical $Z$ code states of $\pone$ codes $(N = 16, i = 7)$ and $(N = 64, i = 23)$.}
\label{fig:prep_rate}
\end{figure}


\smallskip Fig.~\ref{fig:prep_rate} shows the preparation rate $p_{\text{prep}}$, for the logical $Z$ and logical $X$ code states of $Q_1$ and Shor codes of length $N = 16, 64$, with respect to the physical error rate $p$. Here, the total number of runs is $R = 10^5$. 
 We observe that $p_{\text{prep}}$ is non-zero for sufficiently low $p$ and it approaches to $1$ as $p$ goes to zero, for all the $\pone$ code states. Further, for $p$ close to zero, $p_{\text{prep}}$ is symmetric, in the sense that it is virtually the same for all code states for a given $N$. For bigger values of $p$, the difference in $p_{\text{prep}}$  for different code states of the same length is explained by the fact that for some code states, we have Pauli $Z \otimes Z$ measurements at the first levels of recursion, which may be ignored, hence, reducing the number of componenents in the preparation circuit, as explained before. For example, for $Z$ logical states of Shor codes $N = 16, i = 4$ and $N = 64, i = 23$, we may ignore the first two and three levels of recursions, respectively, explaining their higher $p_{\text{prep}}$ compared to other code states of the same length. 
Furthermore, $p_{\text{prep}}$ is much lower for codes states of length $N = 64$, compared to that of length $N = 16$, especially at higher values of $p$. This is also explained by the fact that the preparation circuit, in general, consists of much larger number of components for $N = 64$ than $N = 16$.   

\newpage
\subsection{Monte-Carlo Based Estimates of the Logical Error Rates}
\makeatletter
\def\@currentlabel{\thesubsection}
\makeatother
\label{sec:lgr}

In this section, we provide a practical method to estimate logical error rates of $\pone$ codes (given in Fig. $4$ of the main text and see also Fig. \ref{fig:p1_vs_shor}), under the Steane's error correction. We consider a $\pone$ code of length $N$, with information position $i$, and consider the following notation $\mathcal{I} = \{i\}$, $\mathcal{Z} = \{1, \dots, i-1\}$,  $\mathcal{X} = \{i+1, \dots, N\}$, and $\mathcal{S} =   \mathcal{Z} \cup \mathcal{I} \cup \mathcal{X}$.


\paragraph*{Logical $X$ Error Rate.} To determine the logical $X$ error rate, we prepare a logical Pauli $Z$ basis code state  $\ket{\widetilde{w}}_\mathcal{S} =  Q_N ( \ket{\bm{u}}_{\mathcal{Z}} \otimes \ket{w}_{\mathcal{I}} \otimes \ket{\oline{\bm{v}}}_{\mathcal{X}})$,   $w \in \{0, 1\}$, that we want to protect against $X$ errors. Further, we prepare a logical Pauli $X$ basis code state $\ket{\widetilde{\oline{w'}}}_{\mathcal{S}'} = Q_N ( \ket{\bm{u'}}_{\mathcal{Z}'} \otimes \ket{\oline{w'}}_{\mathcal{I}'} \otimes \ket{\oline{\bm{v'}}}_{\mathcal{X}'})$,  $w' \in \{0, 1\}$, to be used as the ancilla system for syndrome extraction (see Fig.~\ref{fig:steane_Xerror}.)

\smallskip The logical $X$ error rate is determined as follows.
\begin{description}[labelindent=\parindent]
\item[\em (1) Preparing $\pone$ code states]  We first simulate the measurement based preparation of $\ket{\widetilde{w}}_\mathcal{S}$ and $\ket{\widetilde{\oline{w'}}}_{\mathcal{S}'}$ states, as described before (in case of error detection, we restart the preparation procedure until it completes). After the preparation completes,  we know the frozen vectors $\bm{u}, \bm{u'}  \in \{0, 1\}^{i-1}$, corresponding to the frozen sets $\mathcal{Z}$ and $\mathcal{Z}'$,  and the frozen vectors $\bm{v}, \bm{v'} \in \{0, 1\}^{N-i}$ corresponding to the frozen sets $\mathcal{X}$ and $\mathcal{X}'$. Further, for $\ket{\widetilde{w}}_\mathcal{S}$, we know the logical $Z$ value $w$, and for $\ket{\widetilde{\oline{w'}}}_{\mathcal{S}'}$ we know the logical $X$ value $w'$. Moreover, we also have the final errors $\bm{e}_X$ and $\bm{e'}_X$ on systems $\mathcal{S}$ and $\mathcal{S}'$, respectively.

\item[\em (2) Generating the syndrome] We generate the syndrome according to Steane's procedure (Fig.~\ref{fig:steane_Xerror}). According to Lemma~\ref{lem:steane}, the syndrome $\bm{m}$ consists of a noisy version of a random codeword of the classical polar code $P(N, \mathcal{Z}, \bm{u} \oplus \bm{u'})$. We generate $\bm{m}$ as the sum of the random codeword from $(2.1)$ and the three error terms from $(2.2)$, below. 

%
\begin{itemize}
\item[$(2.1)$] First, we generate a \textbf{\em random codeword} $P_N( \bm{u} \oplus \bm{u'}, a', \bm{x'})$, by taking  random values $a' \in \{0, 1\}$ and $\bm{x'} \in \{0, 1\}^{N-i}$.


\item[$(2.2)$]  We then add to the generated codeword,  the following 


\begin{itemize}
\item[$\bullet$] The first error term is $\bm{e}_X \oplus \bm{e'}_X$, where $\bm{e}_X$ and $\bm{e'}_X$ are given in step $(1)$ (see also~(\ref{eq:steane-syn})).

\item[$\bullet$] The second term corresponds to the $X$ error generated on the system $\mathcal{S}'$ during the implementation of the qubitwise  $\cnot_{\mathcal{S} \to {\mathcal{S}}'}$.

\item[$\bullet$]  The third error term is due to Pauli $Z$ measurements on the system $\mathcal{S}'$. 

\end{itemize}

\end{itemize}

Let $\bm{e}_X^{C}$ and $\bm{e}_X^{C^\prime}$ be the error produced by $\cnot_{\mathcal{S} \to {\mathcal{S}}'}$ on $\mathcal{S}$ and $\mathcal{S}'$, respectively. Further, let $\bm{e}_X^{M^\prime}$ is the error produced by Pauli $Z$ measurements on $\mathcal{S}'$. Then, we have that
%
\begin{equation}
\bm{m} = P_N( \bm{u} \oplus \bm{u'}, a', \bm{x'}) \oplus \bm{e}_X^\text{tot}, \label{eq:n-cword-dec1} 
\end{equation}
where  $\bm{e}_X^\text{tot} := \bm{e}_X  \oplus \bm{e'}_X  \oplus \bm{e}_X^{C'} \oplus \bm{e}_X^{M^\prime}$ is the total $X$-error that happened on the system $\mathcal{S}'$.

Further, we also update the error on system $\mathcal{S}$, by adding $\bm{e}_X^{C}$ to $\bm{e}_X$, \emph{i.e.}, adding the $X$ error that happened on  system $\mathcal{S}$ during the implementation of the qubitwise $\cnot_{\mathcal{S} \to {\mathcal{S}}'}$.




\item[\em (3) Error correction] Given the frozen value $\bm{u} \oplus \bm{u'}$ from step (1), and the extracted syndrome $\bm{m}$ from step (2), we use SC decoding\footnote{The SC decoding assumes that $e_X^\text{tot}$ is a vector of independent and identically distributed Bernoulli variables, that is $P(e_X^\text{tot}(i) = 1) = p_X^\text{in}$, for all $i = 1,\dots,N$, and for some probability value $p_X^\text{in} \in (0, 1)$, which is referred to as the \emph{input error probability of the decoder}. While the ``belief-propagation'' variant of the SC decoding needs an estimate of $p_X^\text{in}$ (in order to calculate the input log-likelihood ratios of the decoder, $\text{LLR}(i) := (-1)^{m(i)}\log((1-p_X^\text{in})/p_X^\text{in}), \forall i = 1,\dots,N$), here we use the so-called ``min-sum'' approximation of the SC decoding, which does not need such an estimate (in this case, the input log-likelihood ratio values of the decoder may be simply set to $\text{LLR}(i) := (-1)^{m(i)}, \forall i = 1,\dots,N$). \label{foot:SC_decoding}} to get an estimate $\hat{a}' \in \{0, 1\}$ of $a'$, and then generate an estimate of the total error $\bm{e}_X^\text{tot}$ in~(\ref{eq:n-cword-dec1}), as follows
\begin{equation} \label{eq:exguess}
\hat{\bm{e}}_X^\text{tot} = \bm{m} \oplus P_N( \bm{u} \oplus \bm{u'}, \hat{a}', \bm{0}).
\end{equation}
Note that we do not need to estimate $\bm{x'}$ here, since the induced logical error corresponds to an $X$-type stabilizer operator, acting trivially on the code space (see also the discussion after Lemma~\ref{lem:steane}).

We then perform error correction on the state of the system $\mathcal{S}$, and update  the $X$ error on $\mathcal{S}$, by adding the estimated total error $\hat{\bm{e}}_X^\text{tot}$. 
Let $e_{X}^{\text{corc}}$ be the the error on system $\mathcal{S}$ after correction. From (\ref{eq:n-cword-dec1}) and (\ref{eq:exguess}), it follows that $e_{X}^{\text{corc}}$ is equivalent to the following error,
\begin{equation} \label{eq:err-S-after-corc}
e_{X}^{\text{corc}} \equiv  \bm{e}'_X \oplus \bm{e}_X^{C} \oplus \bm{e}_X^{C'} \oplus \bm{e}_X^{M'} \oplus P_N(\bm{0}, a' \oplus \hat{a}', \bm{0}, \bm{0}).
\end{equation}

\item[\em (4) Guessing the logical value] The error correction is successful if we can  successfully recover the logical $Z$ value from the state of the system $\mathcal{S}$, after error correction.

To get the logical $Z$ value, we need to perform single-qubit Pauli $Z$ measurements on system $\mathcal{S}$, and then estimate the logical value from the measurement outcome. 


It can be seen that the measurement outcome of Pauli $Z$ measurements gives a noisy version of a random codeword of the classical polar code $P(N, \mathcal{Z}, \bm{u} \oplus \bm{u'})$, as follows, 
\begin{equation} \label{eq:m-out-singleZ}
\bm{m} = P_N(\bm{u} \oplus \bm{u'},w,  \bm{x}) \oplus \bm{e}_X^{\text{corc}} \oplus \bm{e}_X^M,
\end{equation}
where  $w$ is the logical $Z$ value corresponding to the initial state on the system $\mathcal{S}$, $\bm{x}$ is a random vector, $\bm{e}_X^{\text{corc}}$ is the $X$ error on the system $\mathcal{S}$ after the error correction from (\ref{eq:err-S-after-corc}), and $\bm{e}_X^M$ is the error caused by the single-qubit Pauli $Z$ measurements on system $\cl{S}$.

From the frozen vector $\bm{u} \oplus \bm{u'}$ and the noisy codeword $\bm{m}$ in (\ref{eq:m-out-singleZ}), we generate an estimate $\hat{w}$ of $w$, using the SC decoding. If $\hat{w} \neq w$, we report a decoding failure, otherwise the decoding succeeds.
\end{description}

We run the above steps (1)-(4), until we report $f > 0$  decoding failures. Let $R$ be the number of runs for $f$ decoding failures, then  the logical $X$ error rate, denoted by $P_X^L$, is computed as follows,
\begin{equation}
P_X^L = \frac{f}{R}. \label{eq:log_eratex}
\end{equation}
For our numerical simulations, the value of $f$ varies between $50$ to $200$, depending on the physical error rate $p$ of our noise model.

\paragraph*{Logical $Z$ Error Rate.} To determine the logical $Z$ error rate, we consider a logical Pauli $X$ basis code state  $\ket{\widetilde{\oline{w}}}_\mathcal{S} =  Q_N ( \ket{\bm{u}}_{\mathcal{Z}} \otimes \ket{\oline{w}}_{\mathcal{I}} \otimes \ket{\oline{\bm{v}}}_{\mathcal{X}})$,   $w \in \{0, 1\}$, that we want to protect against $Z$ errors. Further, we consider a logical Pauli $Z$ basis code state $\ket{\widetilde{w'}}_{\mathcal{S}'} = Q_N ( \ket{\bm{u'}}_{\mathcal{Z}'} \otimes \ket{w'}_{\mathcal{I}'} \otimes \ket{\oline{\bm{v'}}}_{\mathcal{X}'})$,  $w' \in \{0, 1\}$, to be used as the ancilla system for syndrome extraction (Fig.~\ref{fig:steane_Zerror}.) The logical $Z$ error rate is then determined similarly to the above description for logical $X$ error rate, while inverting $X$ and $Z$ bases, and using Lemma~\ref{lem:steane_Z} instead of Lemma~\ref{lem:steane}. The logical $Z$ error rate is denoted by $P_Z^L$.

\medskip Finally, the {\bfseries\itshape logical error rate} (accounting for both $X$ and $Z$ errors), denoted by $P_e^{L}$, is given as follows,
\begin{equation}
P_e^{L} = P_X^L + P_Z^L - P_X^L P_Z^L.  \label{eq:log_rate}
\end{equation}

\subsection{Density-Evolution Based Estimates of the Logical Error Rates}

To reliably estimate the logical error rate value, the Monte-Carlo method described in Section~\ref{sec:lgr} requires an increasingly large number of simulations as the logical error rate decreases. Hence, as $p$ decreases, this requires increasingly more computational time (or resources) and  becomes practically unfeasible for small values of $p$. In this section, we provide a theoretical method to estimate the logical error rate, based on density evolution~\cite{tal2013construct} (see also Fig.~\ref{fig:p1_vs_shor}).

\smallskip Consider the Steane's error correction for $X$ errors from Section \ref{sec:lgr}. Note that it involves two steps of decoding; the first during Step (3) for error correction and the second during Step (4) for guessing the logical value. Each failure in decoding introduces a logical $X$ error, hence, the logical value is incorrectly determined if and only if one of the two decoder fails.

\smallskip For a given realization of errors (\emph{i.e.}, $e_X^\text{tot}$ in~(\ref{eq:n-cword-dec1}), or $\bm{e}_X^{\text{corc}} \oplus \bm{e}_X^M$ in~(\ref{eq:m-out-singleZ})), the SC decoding works by propagating the corresponding log-likelihood ratio (LLR) values throughout the polar encoding graph (from the right-hand side of the graph, corresponding to the encoded information, to the left-hand side of the  graph, corresponding to the uncoded information; see also footnote~\ref{foot:SC_decoding}). Rather than propagating LLR values for a given sample (realization of errors), the density evolution method propagates their probability density functions, thus averaging over all the sample space. To determine the probability distribution of the input LLRs, one needs to estimate the input error probability of the decoder (\emph{i.e.}, $P(e_X^\text{tot}(i) = 1)$ for the SC decoding in Step~(3), or $P(\bm{e}_X^{\text{corc}}(i) \oplus \bm{e}_X^M(i) = 1)$ for the SC decoding in Step~(4)).

\begin{figure}[!b]
\begin{subfigure}{\textwidth}
\includegraphics[width=0.495\linewidth]{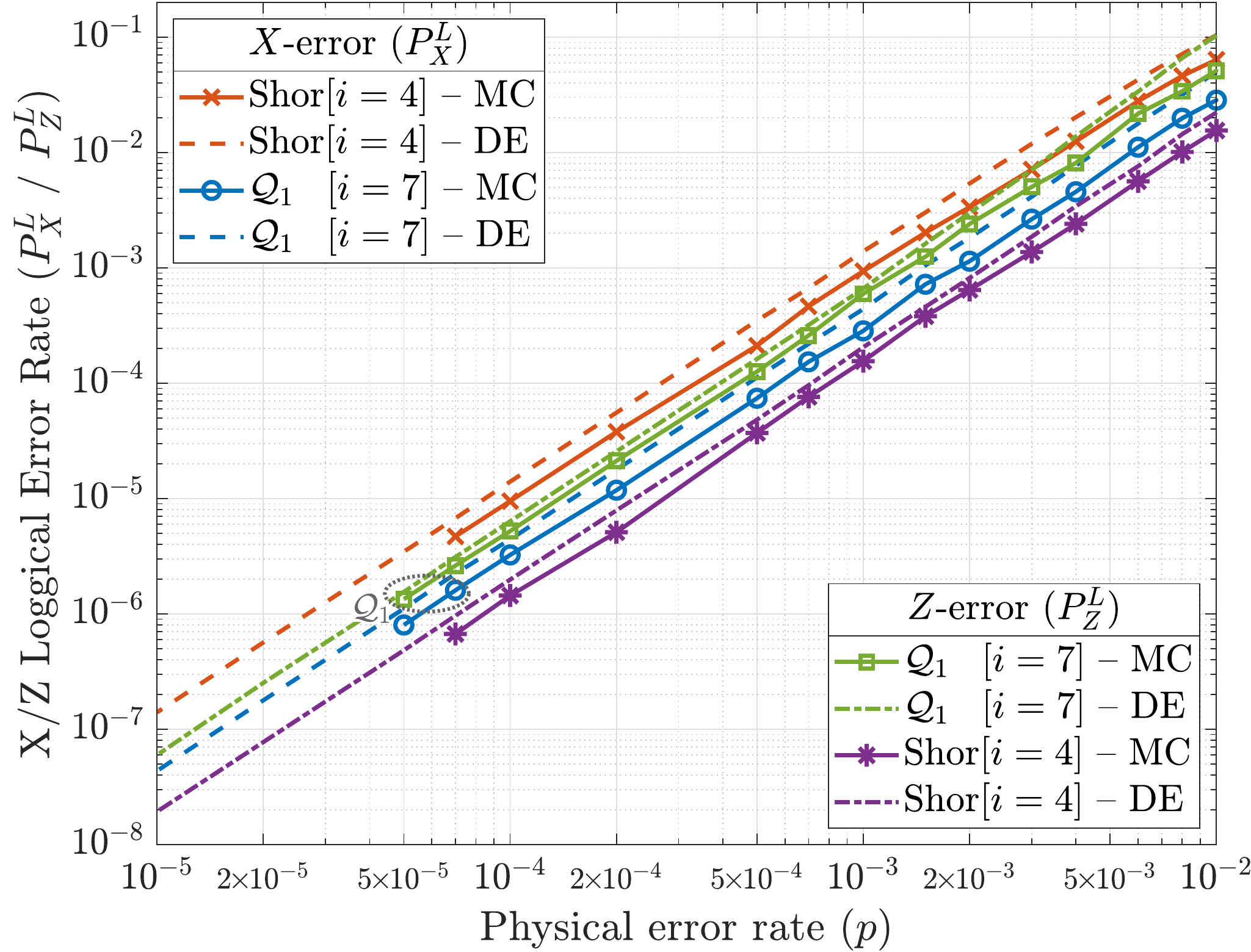}\hfill%
\includegraphics[width=0.495\linewidth]{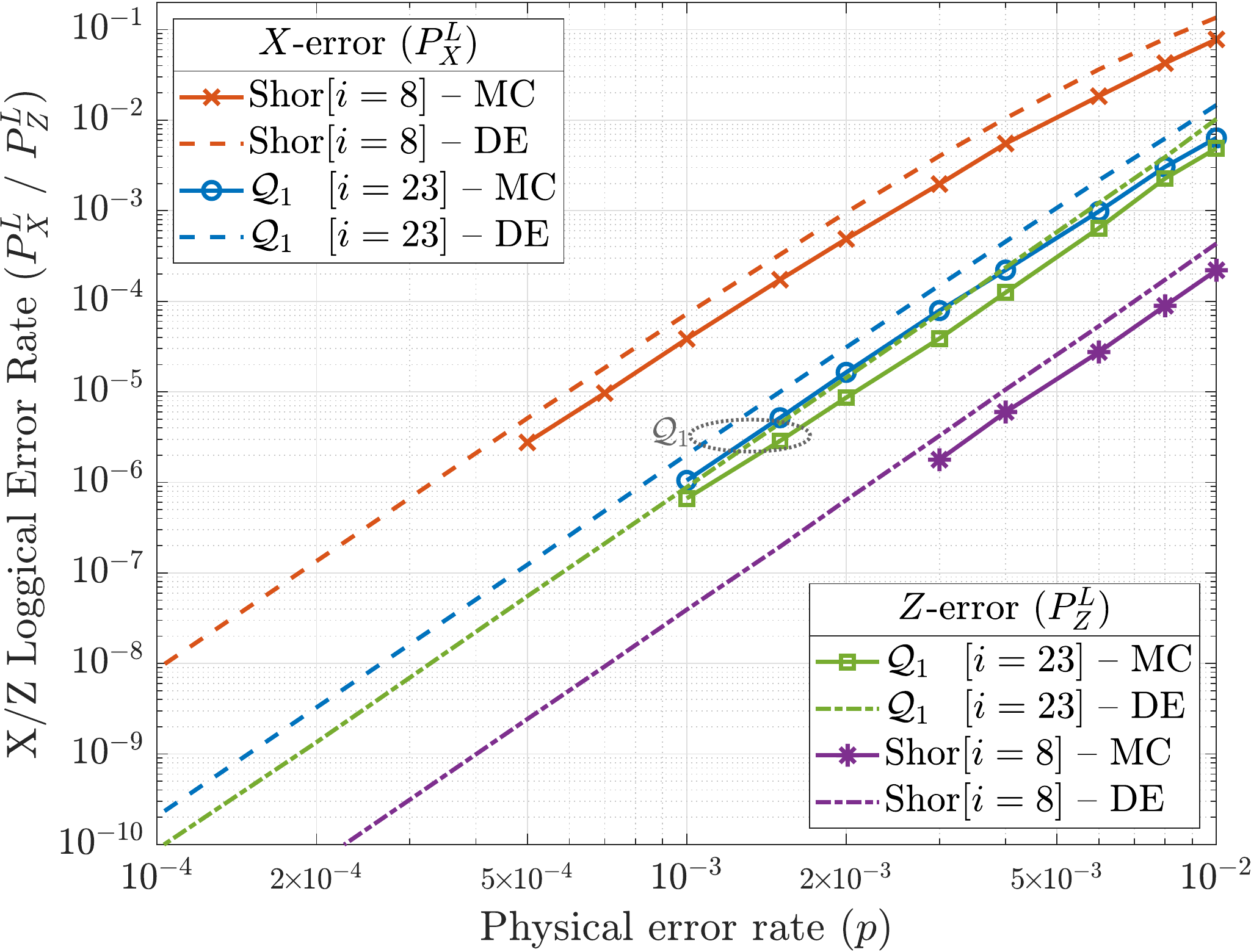}
\captionsetup{justification=centering}
\caption{Monte-Carlo (MC) and Density-evolution (DE) based
estimates of the $X/Z$ logical error rates of $\pone$ and Shor codes of length $N\!=\!16$ and $N\!=\!64$.}
\end{subfigure}

\vspace*{3mm}
\begin{subfigure}{\textwidth}
\,\hfill\includegraphics[width = 0.55\linewidth]{lger_rate_N16_N64.pdf}\hfill\,
\captionsetup{justification=centering}
\caption{Monte-Carlo (MC) and Density-evolution (DE) based
estimates of the logical error rate of $\pone$ and Shor codes of length $N=16$ and $N=64$ (here, the logical error rate $P_e^{L} = P_X^L + P_Z^L - P_X^L P_Z^L$, where $P_X^L$ and $P_Z^L$ are given in (a) above). }
\end{subfigure}
\captionsetup{justification=centering}
\caption{Numerical results for $\pone$ and Shor codes of length $N=16$ and $N=64$.}
\label{fig:p1_vs_shor}
\end{figure}

\smallskip Accordingly, we determine the logical $X$ error rate in the following two steps.

\begin{description}[labelindent=\parindent]

\item[\em (1) Input error probabilities of decoders] To determine the input error probabilities of the decoders in Step $(3)$ and $(4)$, we need to generate statistics for the error terms $e_X^\text{tot}$ in~(\ref{eq:n-cword-dec1}) and $\bm{e}_X^{\text{corc}} \oplus \bm{e}_X^M$ in~(\ref{eq:m-out-singleZ}).  In particular, we need to estimate the output $X$ error probabilities of the $Z$ and $X$ logical states, by numerical simulation. This is done as follows.

For a given physical error rate $p$, we run the measurement based preparation with the error detection for logical $Z$ and $X$ states until we have $R$ successful preparations for each one of them \footnote{In our numerical simulation, we have taken $R = 100/p$.}. Let $\bm{e}_X^r, \bm{e}_X^{ \prime\, r} \in \{0, 1\}^N, 1 \leq r \leq R$, be the $X$ errors corresponding to the $r^\text{th}$ successful preparation of the logical $Z$ and $X$ states, respectively.  Then, the average output $X$ error rate for $Z$ and $X$ logical states are estimated as follows,
\begin{align}
p_X^{ \text{prep}} = \frac{1}{RN} \sum_{r = 1}^R \wt(\bm{e}_X^r). \\
p_X^{ \prime\, \text{prep}} = \frac{1}{RN} \sum_{r = 1}^R \wt(\bm{e}_X^{\prime\, r}).
\end{align}

\smallskip Let $p_X^{\text{in}_1}$ and $p_X^{\text{in}_2}$ be the input error rate of the first and second decoders, respectively. From (\ref{eq:n-cword-dec1}), we have that
%
\begin{equation} \label{eq:dec-1-out}
p_X^{\text{in}_1} = 1 - (1- p_X^{ \text{prep}}) (1- p_X^{ \prime\,  \text{prep}}) (1-\frac{8p}{15}) (1- p)
\end{equation}
Further, for the sake of simplicity, we compute the input error rate of the second decoder, assuming that the first decoder has succeeded. Then, from (\ref{eq:err-S-after-corc}) and (\ref{eq:m-out-singleZ}), we have that
\begin{equation} \label{eq:dec-2-out}
p_X^{\text{in}_2}\ = 1 - (1- p_X^{ \prime\,  \text{prep}}) (1-\frac{8p}{15}) (1- p)^2.
\end{equation}
Note that as long as the first decoder succeeds with a probability close to one, (\ref{eq:dec-2-out}) approximates well the input error rate of the second decoder.

\item[\em (2) $X$ logical error rate] After obtaining the input error probabilities $p_X^{\text{in}_1}$ and $p_X^{\text{in}_2}$, we compute output error rates of the decoders\footnote{Precisely, we determine the probability distributions of the input LLRs of the decoder, propagate these probability distributions numerically throughout the polar encoding graph, then take the corresponding output error rates.}, using density evolution. Let $P_X^{\text{out}_1}$ and $P_X^{\text{out}_2}$ be the output error probabilities of the first and second decoders, respectively. Then, we determine the $X$ logical error rate of the Steane's error correction as follows,
\begin{equation}
P_X^L = 1- \big( (1- p_X^{\text{out}_1}) (1- p_X^{\text{out}_2}) + p_X^{\text{out}_1} p_X^{\text{out}_2}\big).
\end{equation} 
\end{description}

We may similarly detemine the $Z$ input error probabilites for the first and second decoder, estimate their output probabilites using density evolution, and then detemine $P_Z^L$, $i.e.$, the $Z$ logical error rate of the Steane's error correction. Finally, using $P_X^L$ and $P_Z^L$, we may determine the logical error rate $P_e^{L}$ as in (\ref{eq:log_rate}). 

\subsection{Numerical Results}
Numerical results for the logical error rates of $\pone(N = 16, i = 7)$ and $\pone(N = 64, i = 23)$ codes, based on either Monte-Carlo simulation or density evolution, are provided in Fig.~\ref{fig:p1_vs_shor}. It can be seen that  density evolution based estimates closely upper bound those obtained though Monte-Cralo simulation, thus  providing a trustworthy extrapolation of the logical error rate values for small values of $p$.

\newpage

\bibliographystyle{unsrt}
\bibliography{biblio_database}

\end{document}